\documentclass[11pt,a4paper]{amsart}

\usepackage[left=1in,right=1in,top=1in,bottom=1in,foot=0.5in]{geometry}

\usepackage{amsmath,amsfonts,amssymb}
\usepackage[T1]{fontenc}
\usepackage{color}
\usepackage{epsfig}
\usepackage{graphicx}
\usepackage{cite,url}
\usepackage{hyperref}
\usepackage{cleveref}
\usepackage{xspace}
\usepackage{enumerate}

\usepackage[algo2e,algoruled,vlined,english]{algorithm2e} % typeset algorithms

\newtheorem{thml}{Theorem}%[section]

\newtheorem{mytheorem}{Theorem}[section]
\newtheorem{mylemma}[mytheorem]{Lemma}
\newtheorem{myproposition}[mytheorem]{Proposition}

\newtheorem{myclaim}[mytheorem]{Claim}

\theoremstyle{definition}

\newcommand{\cO}{\ensuremath{\mathcal{O}}\xspace}

\DeclareMathOperator{\Supp}{Supp}

\DeclareMathOperator{\WP}{WP}

\DeclareMathOperator{\ecc}{ecc}

\newcommand{\TC}{\mathrm{(TC)}}

\DeclareMathOperator{\diam}{diam}
\DeclareMathOperator{\rad}{rad}

\newcommand{\commentout}[1]{}

\newcommand{\defoptproblem}[3]{
 \vspace{1mm}
\noindent\fbox{
 \begin{minipage}{0.96\textwidth}
 #1 \\
 {\bf{Input:}} #2 \\
 {\bf{Output:}} #3
 \end{minipage}
 }
 \vspace{1mm}
}

\begin{document}

\title{Beyond Trees: The Weighted Center Problem on Gromov Hyperbolic Graphs}

\author[G.\ Ducoffe]{Guillaume Ducoffe} \address{
University of Bucharest, Faculty of Mathematics and Computer Science
\and National Institute for Research and Development in Informatics, Bucureşti, Rom\^{a}nia} \email{guillaume.ducoffe@ici.ro}

\date{}

\begin{abstract}
The \textsc{Weighted Center} problem takes as its input a graph $G=(V,E)$ together with a profile $\pi$ such that every vertex $v$ is mapped to some nonnegative multiplicative weight $\pi(v)$. 
Its output must be some vertex $c$ minimizing $\max\{\pi(v)d_G(c,v) : v \in V\}$.
The classic \textsc{Center} problem corresponds to the case where $\pi(v) =1$ for every vertex $v$. 
In the literature, various almost linear-time algorithms have been proposed for the \textsc{Center} problem on some well-structured classes of graphs.
By contrast, similarly efficient algorithms for the \textsc{Weighted Center} problem have been scarce. We investigate how the Gromov hyperbolicity, alone or in combination with other metric and geometric properties on graphs, can be used in the design of exact and approximate almost linear-time algorithms for the \textsc{Weighted Center} problem.
In particular, we derive almost optimal algorithms for the following well-studied classes of graphs: chordal graphs, distance-hereditary graphs (both in $\cO(m)$ time), dually chordal graphs and chordal bipartite graphs (both in $\cO(m\log{n})$ time).
\end{abstract}

\maketitle

\section{Introduction}\label{intro}

\textbf{Problem considered.} We refer to~\cite{BoMu08} for basics in Graph Theory. Unless stated otherwise, graphs considered throughout the paper will be simple, undirected, unweighted and connected. As usual, let $d_G$ denote the shortest-path metric of a graph $G=(V,E)$, \textit{i.e.}, for any vertices $u$ and $v$ their distance $d_G(u,v)$ is equal to the minimum number of edges on a $(u,v)$-path. The following facility location problem on graphs is considered:	

\defoptproblem{\textsc{Weighted Center}}{A graph $G = (V,E)$; a profile $\pi : V \mapsto \mathbb{R}_{\ge 0}$.}{A vertex $c \in V$ s.t. $r_\pi(c) = \max\{\pi(v)d_G(c,v) : v \in V\}$ is minimized.}

Roughly, $\pi(v)$ represents the importance for vertex $v$ to be close to the new facility.
For example, $\pi(v)$ can be the expected material flux between $v$ and a new factory, the number of children in a building at $v$ who will attend to a new neighborhood school, etc. 
Furthermore, if we set $\pi(v')=0$ for some vertices $v'$, then we can account for the plausible situation where the placement of a new facility is only relevant for a subset of the vertices.
Note that we cannot simply remove the zero-weight vertices for it could change the distances in the graph.

In line with existing literature~\cite{KaHa}, the function $r_\pi$ is called a \emph{radius function}. The solution vertex $c$ is called a \emph{center}. The \textsc{Weighted Center} problem can be defined more generally for any metric space. In particular, a linear-time algorithm for the \textsc{Weighted Center} problem on Euclidean spaces of constant dimension was presented in~\cite{Dyer}. For graphs, the existing literature is mostly focused on the following restricted variant:
		
\defoptproblem{\textsc{Center}}{A graph $G = (V,E)$.}{A vertex $c \in V$ s.t. $\ecc(c) = \max\{d_G(c,v) : v \in V\}$ is minimized.}

The radius function $\ecc$ is also called the \emph{eccentricity function} of $G$.
More generally, if $\pi(v) \in \{0,1\}$ for every vertex $v$, then $\pi$ is called a binary profile, and $r_\pi$ is called a binary radius function. The special case of binary profiles on graphs has received some attention in the literature, see~\cite{Cab,ChDrEsHaVa,dragan2026certificates}. As a rule of thumb, most of the techniques for the \textsc{Center} problem can be also applied for binary profiles. However, such is not the case for the more general \textsc{Weighted Center} problem. 

\medskip
\noindent
\textbf{Related work.} The \textsc{Weighted Center} problem can be reduced to All-Pairs Shortest-Paths. For $n$-vertex $m$-edge graphs, this straightforward reduction leads to an $\cO(nm)$-time algorithm. In~\cite{AVW16}, the Hitting Sets Conjecture was introduced to prove a conditional lower bound in $\Omega(n^{2-o(1)})$ for the \textsc{Center} problem on $n$-vertex graphs, even if the number of edges is at most in $n^{1+o(1)}$. Therefore, breaking the quadratic barrier for the \textsc{Center} problem, and even more so for the \textsc{Weighted Center} problem, is likely to require additional restrictions on the classes of graphs we consider. There is a plethora of subquadratic-time algorithms, and even better, of almost linear-time algorithms, for the \textsc{Center} problem on some well-structured classes of graphs; we refer to~\cite{AVW16,ChanVCdim,PFPT,DucCW} for a small sample of such results. By comparison, the only published linear-time algorithm for the \textsc{Weighted Center} problem, to the best of our knowledge, is that of Megiddo for the class of trees~\cite{Me2}. Recently, a parameterized almost linear-time algorithm for the \textsc{Weighted Center} problem on the graphs of bounded clique-width was presented in~\cite{DucCW}, using orthogonal range queries and the so-called convex hull trick. In particular, for the distance-hereditary graphs, the clique-width of which is at most three~\cite{GolRotCW}, this result leads to an $\cO(m\log^c{n})$-time algorithm for the \textsc{Weighted Center} problem, for some unspecified constant $c > 1$. An $\cO(m)$-time algorithm for the \textsc{Center} problem was presented much earlier, in~\cite{DrDH}. %The algorithm in~\cite{} mostly exploits a \emph{structural} graph property.
    
The authors in~\cite{Gpunimodal-ecc} introduced a different approach for the \textsc{Weighted Center} problem that is based on a \emph{metric} property, hereafter called \emph{$G^p$-unimodality} (see also~\cite{kHelly} for the case $p=1$). Roughly, a radius function $r_\pi$ on a graph $G$ is called $G^p$-unimodal if any vertex that minimizes $r_\pi$ within its ball of radius $p$ is a center. It was proved in~\cite{Gpunimodal-ecc} that for many classes of graphs studied in Metric Graph Theory, all their radius functions are $G^2$-unimodal. The latter was used in the design of randomized $\Tilde{\cO}(m\sqrt{n})$-time\footnote{The $\Tilde{\cO}()$ notation suppresses poly-logarithmic factors.} local-search algorithms for the \textsc{Weighted Center} problem on these classes of graphs. Although the algorithmic properties of the classes studied in Metric Graph Theory have insofar received limited attention (the results in~\cite{MedianOfMedians,BDHMedian,HellyHyp} count among the few exceptions), we want to stress here that they are natural generalizations of better-studied classes, including: chordal graphs, chordal bipartite graphs, and dually chordal graphs. In particular, before this work, the $\Tilde{\cO}(m\sqrt{n})$-time algorithms from~\cite{Gpunimodal-ecc,kHelly} were the best ones known for the \textsc{Weighted Center} problem on chordal (bipartite) graphs, and dually chordal graphs. By comparison, $\cO(m)$-time algorithms for the \textsc{Center} problem can be found in~\cite{BrChDrDuallyChordal,ChDrChordal,AbsoluteRetract}.
    
One common property of chordal graphs and their relatives, but also of distance-hereditary graphs, AT-free graphs, and several other well-studied classes of graphs (\textit{e.g.}, see~\cite{ChDrEsHaVa,WuZh01}) is that they have bounded \emph{Gromov hyperbolicity}. Roughly, the hyperbolicity of a graph represents how close, locally, its shortest-path metric is to a tree metric (see Sec.~\ref{prelim} for a formal definition). The introduction of this parameter by Gromov, see~\cite{Gr}, has revolutionized Geometric Group Theory. Later on, extensive experiments have evidenced that many complex networks also have bounded hyperbolicity~\cite{KenSanNa}. There is now a rich literature of approximation algorithms on graphs whose performances depend on the hyperbolicity value~\cite{HypAlg1,HypAlg2}. Of special interest to us is the main result from~\cite{ChDrEsHaVa}, which states that on any $\delta$-hyperbolic graph $G$, for every \emph{binary} profile $\pi$ on $G$, a vertex that is at distance at most $\cO(\delta)$ to the center can be computed in $\cO(m)$ time. Note that it leads to an approximation algorithm for the \emph{radius}: defined as $\rad_\pi(G) = \min\{ r_\pi(v) : v \in V \}$, with additive one-sided error in $\cO(\delta)$. Furthermore, this result was combined with the local-search techniques from~\cite{Gpunimodal-ecc} to refine the runtime of their exact algorithms (only for binary profiles) from $\cO(m\sqrt{n})$ to $\cO(\delta m)$. The motivation for our paper was to generalize the result from~\cite{ChDrEsHaVa} to arbitrary profiles.

\medskip
\noindent
\textbf{Our contributions.} For a graph $G$, and a profile $\pi$, let us define $C_\pi(G)$ as the set of all its centers. In what follows, the ball of center $v$ and radius $\rho$ is denoted by $B_\rho(v)$. We provide the following two main results for the \textsc{Weighted Center} problem on $\delta$-hyperbolic graphs.

\begin{thml}\label{th:hyp} Given a $\delta$-hyperbolic graph $G$ with $n$ vertices and $m$ edges, and an arbitrary profile $\pi$, one can compute:
	\begin{enumerate}
		\item\label{item:hyp-1} a vertex $c_1$ such that $C_\pi(G) \subseteq B_{12\delta\log{n}+4\delta+11}(c_1)$, in $\cO(m)$ time;
		\item\label{item:hyp-2} a vertex $c_2$ such that $C_\pi(G) \subseteq B_{124\delta + 21}(c_2)$, in $\cO(\delta m\log^2{n})$ time.
	\end{enumerate}
\end{thml}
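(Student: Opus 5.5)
The plan is to reduce the weighted problem to repeated \emph{binary} or \emph{thresholded} queries. These can be handled by the techniques of~\cite{ChDrEsHaVa} for $\delta$-hyperbolic graphs, which rest on properties of BFS layers and of eccentricity-like functions in such graphs. The basic tool is a decision-type subroutine. For a threshold $\rho$, each vertex $v$ with $\pi(v)>0$ gives a ball $B_{\lfloor \rho/\pi(v)\rfloor}(v)$, and $r_\pi(c)\le \rho$ exactly when $c$ lies in all these balls. In a $\delta$-hyperbolic graph, a family of balls with a common point has a ``near-Helly'' structure: if every pair of balls $B_{r_i}(v_i), B_{r_j}(v_j)$ intersects, then the balls $B_{r_i+2\delta}(v_i)$ all share a common vertex. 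Moreover, such a vertex can be located by moving along a geodesic from a farthest-type vertex, as in~\cite{ChDrEsHaVa}. I would therefore first prove the following lemma. Given radii $r(v)$ such that the family $\{B_{r(v)}(v)\}$ has nonempty intersection, one can compute in $\cO(m)$ time a vertex lying in every $B_{r(v)+O(\delta)}(v)$. The computation is one BFS from an arbitrary vertex $x$, followed by selecting the vertex $y$ maximising $d(x,v)-r(v)$, another BFS from $y$, and then walking the appropriate distance along a shortest path. The analysis uses thin geodesic triangles and four-point conditions.

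For item~(\ref{item:hyp-1}), I want to avoid a search over $\rho$, which would cost logarithmic factors. Instead I would run a ``doubling'' or halving localisation that uses BFS layers only. Start from an arbitrary vertex and do a BFS. For the current candidate $c$, let $v^*$ be the vertex maximising $\pi(v)d(c,v)$. Any center must lie in $B_{d(c,v^*)}(v^*)$, and also closer to $v^*$ than $c$ is in a weighted sense. Using hyperbolicity, one shows that moving from $c$ toward $v^*$ roughly halves a suitable potential, namely the diameter of the region known to contain $C_\pi(G)$. Each halving step loses an additive $O(\delta)$, which accumulates to $12\delta\log n$ over $\log n$ rounds. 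To keep the total time at $\cO(m)$ rather than $\cO(m\log n)$, I would use a single BFS tree, or an $\cO(m)$ preprocessing that gives approximate distances within $O(\delta)$, for example via the tree approximation of hyperbolic graphs of Chepoi et al.\ with additive error $O(\delta\log n)$. Precisely, I would compute a layering tree $T$ with $|d_T-d_G|\le O(\delta\log n)$. I would then solve the weighted center problem exactly on the tree in linear time with Megiddo's algorithm~\cite{Me2}, and return its center $c_1$. The remaining step is to bound the distance from $c_1$ to any true center. This works because, once distortions are controlled, the radius functions on $T$ and on $G$ are ``almost unimodal'' around their minima. This tree-based route seems cleaner, and I would commit to it, with the constants $12\delta\log n+4\delta+11$ coming from the layering-tree distortion.

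For item~(\ref{item:hyp-2}), I would remove the $\log n$ by a binary search on the radius value $\rho$ over the candidate values $\pi(v)d$. These can be found through a $\cO(\log^2 n)$-round parametric search, such as median-of-weighted-distances selection, each round being the $\cO(\delta m)$-time decision lemma above. At the optimal threshold, the balls $B_{\rho^*/\pi(v)}(v)$ have a common point. The lemma then yields a vertex $c_2$ lying in every $B_{\rho^*/\pi(v)+O(\delta)}(v)$. The set of such vertices has diameter $O(\delta)$ relative to $C_\pi(G)$, by a thin-quadrilateral argument, which gives the constant $124\delta+21$. The $\delta$ factor in the running time comes from correcting the $O(\delta)$ slack via local search in $B_{O(\delta)}$-neighbourhoods, as in~\cite{Gpunimodal-ecc}.

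The main obstacle I expect is the decision lemma for weighted radii, i.e.\ the near-Helly localisation when the radii vary per vertex. With nonuniform radii, the ``farthest vertex'' trick of~\cite{ChDrEsHaVa} no longer applies directly. The weights also make the radius function non-monotone along geodesics. Controlling this will require a careful four-point argument.
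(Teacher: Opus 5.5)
Both items have a genuine gap.

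\textbf{Item (1).} You end up on the paper's route: a tree with small additive distortion, Megiddo's algorithm~\cite{Me2} on it, and then a bound on $d_G(c,c')$ for $c\in C_\pi(G)$ and $c'\in C_\pi(T)$. That last bound is the whole content of the item, and you only assert it. The reason you give does not work. ``Almost unimodality'' in the sense $r_\pi(v)\ge d(v,C_\pi(G))+\rad_\pi(G)-\cO(\delta)$ holds for binary profiles (Lemma~\ref{lem:almost-unimodal}), but not for weighted ones. Nor can you locate the centers by controlling radius values through the distortion. Put weight $\varepsilon$ on both ends of a long path and weight $1$ on its middle vertex $m$. Then $m$ is the unique center, yet on a long stretch around $m$ the radius grows by only $\varepsilon$ per step. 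So an error $D$ in radius values, which is what a distance distortion $D$ gives when $\max_v\pi(v)=1$, is compatible with a vertex at distance about $D/\varepsilon$ from $m$.

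The missing idea is Proposition~\ref{prop:approx-center-hyp}, which uses that $c'$ is \emph{exactly} optimal for the radius function of $T$. Let $d_T-p\le d_G\le d_T+q$ and suppose $d_G(c,c')\ge 2s$ with $s=p+q+2\delta+1$. Take $w$ on a $(c,c')$-geodesic at distance at least $s$ from both ends. The four-point condition gives $d_G(w,x)\le\max\{d_G(c,x),d_G(c',x)\}+2\delta-s$ for all $x$. For $x$ maximizing $\pi(x)d_G(w,x)$, optimality of $c$ forces $d_G(w,x)\le d_G(c',x)+2\delta-s$. For $y$ maximizing $\pi(y)d_T(w,y)$, optimality of $c'$ in $T$ forces $d_G(w,y)\le d_G(c,y)+2\delta-s$. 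Comparing $\pi(x)d_G(w,x)$ with $\pi(y)d_T(w,y)$ then contradicts one of the two optimalities.

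The stated constant comes from the Steiner-free tree of Lemma~\ref{lem:tree-embedding} ($p=2$, $q=3\lambda$) together with $\lambda\le 2\delta\log n+1$ (Lemma~\ref{lem:lambda-hyp}). These give $2(p+q)+4\delta+1\le 12\delta\log n+4\delta+11$. It does not follow from an unspecified layering-tree distortion, and a layering tree has cluster nodes rather than $V$.

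\textbf{Item (2).} Your route is a global threshold search driven by a near-Helly lemma for the balls $B_{\lambda_\tau(v)}(v)$, $\lambda_\tau(v)=\lfloor\tau/\pi(v)\rfloor$, which have nonuniform radii. It is genuinely different from the paper's, but it is not yet a proof.

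\begin{itemize}
\item The lemma is unproved, as you acknowledge.
\item The lemma is a promise procedure, so the search cannot simply be run ``at the optimal threshold'', and you never say what a run at $\tau$ certifies. This matters in both directions. For $\tau$ well above $\rad_\pi(G)$, the enlarged intersection $\bigcap_v B_{\lambda_\tau(v)+\cO(\delta)}(v)$ has large diameter. For $\tau<\rad_\pi(G)$, the procedure may still return a vertex of it, although that set need not contain $C_\pi(G)$.
\item The time accounting is inconsistent: $\cO(m)$ per call versus $\cO(\delta m)$ per round, and $\cO(\log^2 n)$ rounds versus the $\cO(\log n)$ of Lemma~\ref{lem:weighted-center-decision}.
\item Local search ``as in~\cite{Gpunimodal-ecc}'' rests on $G^2$-unimodality, which is established only for special classes.
\end{itemize}

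The paper sidesteps nonuniform radii altogether. Item (1) confines $C_\pi(G)$ to $B_\rho(c_1)$ with $\rho=\cO(\delta\log n)$. Each $v\in\Supp(\pi)$ is projected onto this ball at a loss of only $2\delta$ (Lemma~\ref{lem:hyp-proj}). The projections are bucketed by their slack $i=\lambda_\tau(v)-\ell(v)\in\{0,\ldots,2\rho-1\}$, where $\ell(v)=d(v,B_\rho(c_1))$, so each bucket is a binary profile handled by Lemma~\ref{lem:approx-center}. Intersecting the resulting sets $C^i$ yields a three-way answer: $\rad_\pi(G)<\tau$, $\rad_\pi(G)>\tau$, or a vertex $c^*$. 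This is justified by Claims~\ref{claim:rad-ci}--\ref{claim-close-centre} via Lemmas~\ref{lem:hyp-helly-gap} and~\ref{lem:hyp-morse}, and it feeds a slight modification of Lemma~\ref{lem:weighted-center-decision}. With $\cO(\rho)$ calls per test and $\cO(\log n)$ tests, this gives $\cO(\delta m\log^2 n)$ and the constant $124\delta+21$.

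To make your route work you would need at least three further ingredients.
\begin{itemize}
\item A proof of the lemma. For example, attach to each $v$ a pendant path of length $\max_u\lambda_\tau(u)-\lambda_\tau(v)$, with radii capped at $n$. This preserves $\delta$-hyperbolicity and turns the balls into a binary profile whose distances are BFS distances in $G$ plus offsets.
\item The observation that a run whose output fails a one-BFS membership check certifies $\tau<\rad_\pi(G)$.
\item The four-point argument of Claim~\ref{claim-close-centre}. It shows that any vertex of $\bigcap_v B_{\lambda_\tau(v)+K}(v)$ with $\tau\le\rad_\pi(G)$ lies within $\cO(\delta)+K$ of every center. Hence the search must return a successful threshold whose predecessor in the candidate set $\{k\pi(v)\}$ failed.
\end{itemize}
None of this is in your proposal.
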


In particular, for an \emph{arbitrary} profile $\pi$, a vertex that is at distance $\cO(\delta)$ to the center can be computed in $\Tilde{\cO}(\delta m)$ time.
This weakly generalizes the result of~\cite{ChDrEsHaVa} for binary profiles, but at the price of a slightly worse running time.
Furthermore, if $\pi$ is $k$-bounded, then we get an approximation of the radius with additive one-sided error in $\cO(k\delta)$.
The latter is also a multiplicative $(1+\cO(\varepsilon \delta))$-approximation of the radius, where $\varepsilon = \frac{\max_{v \in V}{\pi(v)}}{\rad_\pi(G)}$.

Theorem~\ref{th:hyp}(\ref{item:hyp-1}) follows from efficient algorithms for embedding a $\delta$-hyperbolic graph into a tree with additive distortion in $\cO(\delta \log{n})$, used in combination with the aforementioned linear-time algorithm for the \textsc{Weighted Center} problem on trees. By doing so, it is expected that we get a vertex $c_1$ such that $r_\pi(c_1)$ and $\rad_\pi(G)$ are close. However, it is less immediate that $c_1$ and the true centers of $G$ are also close. Theorem~\ref{th:hyp}(\ref{item:hyp-2}) is based on Theorem~\ref{th:hyp}(\ref{item:hyp-1}). Roughly, we search for $c_2$ in a ball of center $c_1$ and radius $\cO(\delta \log{n})$. For that, we reduce our search to $\Tilde{\cO}(\delta)$ instances of the \textsc{Weighted Center} problem with \emph{binary} profiles, which may be of independent interest. In particular, we are using the result of~\cite{ChDrEsHaVa}. 

\medskip
\noindent
As our second main contribution in the paper, we present a set of new algorithms for the \textsc{Weighted Center} problem on some important examples of graphs with bounded hyperbolicity.
These results are reported in Table~\ref{tab:graph-classes}. They are based on Theorem~\ref{th:hyp}, and on additional metric properties of these graphs. Our running times are optimal, or quasi optimal, for all the classes of graphs considered.
\begin{table}[!h]
    \centering
    \begin{tabular}{|c||c|c|c| }
        \hline
        Graph class & Prior works & \textbf{This work} & \textbf{Ref.} \\
        \hline
        \hline 
        Chordal & $\Tilde{\cO}(m\sqrt{n})$~\cite{Gpunimodal-ecc} & \textcolor{red}{$\cO(m)$} &  Theorem~\ref{thm:weakly-bridged}(\ref{item-chordal}) \\ 
        \hline 
        Chordal bipartite & $\Tilde{\cO}(m\sqrt{n})$~\cite{Gpunimodal-ecc} & \textcolor{red}{$\cO(m\log{n})$} & Theorem~\ref{th:helly}(\ref{item-chordal-like})  \\
        \hline 
        Distance-hereditary & $\cO(m\log^c{n}), \ c > 1$~\cite{DucCW} & \textcolor{red}{$\cO(m)$} & Theorem~\ref{thm:DH} \\
        \hline
        Dually chordal & $\Tilde{\cO}(m\sqrt{n})$~\cite{kHelly} & \textcolor{red}{$\cO(m\log{n})$} & Theorem~\ref{th:helly}(\ref{item-chordal-like}) \\
        \hline
    \end{tabular}
    \vspace{5pt}
    \caption{Results for the \textsc{Weighted Center} problem on some classes of graphs with bounded Gromov hyperbolicity.}
    \label{tab:graph-classes}
    \vspace{-5mm}
\end{table}

Our result for chordal graphs builds upon the local-search techniques from~\cite{Gpunimodal-ecc}.
Indeed, in this special case a vertex $c_1$ that is at distance $\cO(1)$ to the center can be computed in $\cO(m)$ time (the latter is a refinement of Theorem~\ref{th:hyp}(\ref{item:hyp-1}) for this class of graphs).
Then, starting from $c_1$, an improving sequence $(v_0=c_1,v_1,\ldots,v_\ell)$ can be computed in $\cO(\ell m)$ time, such that $r_\pi(v_i) > r_\pi(v_{i+1})$ for every $i$ such that $0 \le i < \ell$, and $v_\ell$ is a center.
If $\pi$ is a binary profile, then as our starting vertex $c_1$ is at distance $\cO(1)$ to the center, we must have $\ell = \cO(1)$ no matter how we choose $v_{i+1}$ from $v_i$.
However, as already noted in~\cite{Gpunimodal-ecc}, this does not hold anymore for arbitrary profiles. We introduce a new selection mechanism for the next improving vertex $v_{i+1}$ that ensures that the bound $\ell = \cO(1)$ still holds with arbitrary profiles. 

For distance-hereditary graphs, we use a similar local-search approach as for chordal graphs. For that, we prove that all their radius functions are also $G^2$-unimodal. We also design an $\cO(m)$-time procedure for computing the next improving vertex $v_{i+1}$ from $v_i$ in a distance-hereditary graph. The latter is based on \emph{injective hulls}, which are important objects in the study of metric spaces and their optimal realizations, see~\cite{Dress}. By doing so, we contribute to the recent line of studies of the properties of injective hulls of graphs~\cite{InjHullGraphs}.

For chordal bipartite graphs and dually chordal graphs, we adapt a technique from~\cite{HellyHyp}.
Roughly, it allows us to compute, in some ball of constant radius, all the vertices $v$ such that $r_\pi(v)$ is less than a fixed threshold $\tau$.
By doing so, all the centers can be computed with $\cO(\log{n})$ calls to this technique, using a nontrivial variation of binary search.
At this point, we should mention that it is not that obvious that we can efficiently reduce the \textsc{Weighted Center} problem to its decision version, using binary search, because the set of possible values for the radius is $\{ k \times \pi(v) : 0 < k < n, v \in V \}$, which has size in $\cO(n^2)$. To apply a naive binary-search strategy to this set would require first to sort all the values. The latter would result in an $\Omega(n^2\log{n})$-time lower bound. The existence of an efficient reduction from the \textsc{Weighted Center} problem to its decision version is a side contribution of the paper.

\medskip
\noindent
Our third contribution in the paper is a set of parameterized almost linear-time algorithms for the \textsc{Weighted Center} problem, where the parameter is the hyperbolicity value $\delta$.
These results are reported in Table~\ref{tab:param}.
\begin{table}[!h]
    \centering
    \begin{tabular}{|c||c|c| }
        \hline
        Graph class & Running time & Ref. \\
        \hline
        \hline 
        Max. degree $\Delta$ & $\Tilde{\cO}(\Delta^{\cO(\delta)}m + \delta m)$ & Theorem~\ref{th:growth}(\ref{item-max-degree})  \\ 
        \hline
        Weakly bridged & $\Tilde{\cO}(\delta m)$ & Theorem~\ref{thm:weakly-bridged}(\ref{item-wb-hyp}) \\ 
        \hline 
        Helly & $\Tilde{\cO}(\delta^3 m)$ &  Theorem~\ref{th:helly}(\ref{item-helly-gal}) \\
        \hline 
       Bipartite Helly & $\Tilde{\cO}(\delta^3 m)$ & Theorem~\ref{th:helly}(\ref{item-helly-gal})  \\
        \hline
        \textbf{Planar} & $\Tilde{\cO}(2^{O(\delta)}n)$ & Theorem~\ref{th:planar} \\
        \hline
    \end{tabular}
    \vspace{5pt}
    \caption{Parameterized algorithms for the \textsc{Weighted Center} problem on some classes of graphs, where the parameter is the hyperbolicity value $\delta$.}
    \label{tab:param}
    \vspace{-5mm}
\end{table}

Our result for graphs with bounded maximum degree is a straightforward application of Theorem~\ref{th:hyp}(\ref{item:hyp-2}).
For weakly bridged graphs, Helly graphs and bipartite Helly graphs, we use the same techniques as for chordal graphs, dually chordal graphs, and chordal bipartite graphs, of which they are superclasses.
The respective definitions of these classes of graphs are recalled in the technical sections of the paper.
Finally, our most interesting entry in Table~\ref{tab:param} is that for planar graphs, for which we do not use Theorem~\ref{th:hyp}.
Instead, we rely on a recent separator theorem for planar hyperbolic graphs, see~\cite{PlanarHyp}, which is combined with a \emph{VC-dimension} argument.
To the best of our knowledge, this is the first combination of Gromov hyperbolicity with VC-theory in algorithm design.

\medskip
\noindent
\textbf{Organization of the paper.} The notations and terminology that are used in the paper are first recalled in Sec.~\ref{prelim}.  Furthermore, we introduce some prior results that are required for the presentation of our algorithms and their analysis. Sec.~\ref{hyperbolic} is devoted to the proof of Theorem~\ref{th:hyp}. Then, in Sec.~\ref{exact}, we present all our exact algorithms for the \textsc{Weighted Center} problem on some classes of graphs with bounded hyperbolicity. Section~\ref{perspectives} concludes the paper, with some discussion about our work, and directions for future research.

\section{Preliminaries}\label{prelim}

We start recalling a few notations.
Let $G=(V,E)$ be a graph.
We will always denote by $n$ its number $|V|$ of vertices, and by $m$ its number $|E|$ of edges.
For every vertices $u$ and $v$, we write $u \sim v$ if $u$ and $v$ are adjacent, and $u \nsim v$ if they are nonadjacent.
Let $N_G(u) = \{ v \in V : u \sim v \}$ and $N_G[u] = N_G(u) \cup \{u\}$ be the (open) \emph{neighborhood} and the \emph{closed neighborhood} of vertex $u$.
For a subset $X$, let $N_G(X) = \bigcup\{ N_G(x) \setminus X : x \in X \}$.
Recall that a shortest $(u,v)$-path is one with minimum number of edges; the \emph{distance} $d_G(u,v)$ is equal to the length of a shortest $(u,v)$-path.
Furthermore, let $I_G(u,v) = \{ w \in V : d_G(u,v) = d_G(u,w) + d_G(w,v) \}$ be the (metric) \emph{interval} between $u$ and $v$.
Let also $I_G^o(u,v) = I_G(u,v) \setminus \{u,v\}$.
For every $k$ such that $0 \le k \le d_G(u,v)$, let $S_k(u,v,G) = \{ w \in I_G(u,v) : d_G(u,w) = k \}$ be called a \emph{slice}.
Recall that for every vertex $u$ and for every integer $\rho$, the set $B_\rho(u,G) = \{ v \in V : d_G(u,v) \le \rho \}$ is called the \emph{ball} of center $u$ and radius $\rho$.
For any vertex $u$, and for any subset $X$, let $d_G(u,X) = \min\{ d_G(u,x) : x \in X \}$.
%Then, for every integer $\rho$, we can also define $B_\rho(X,G) = \{ u \in V : d_G(u,X) \le \rho \}$.
The (weak) \emph{diameter} of a subset $X$ is defined as $\diam_G(X) = \max\{ d_G(x,x') : x,x' \in X \}$.
A subset $X$ is called (geodesically) \emph{convex} if $I_G(x,x') \subseteq X$ for any two vertices $x,x' \in X$. 
If $G$ is clear from the context, then it is omitted from our notations.
Other notations are locally defined at appropriate places in the paper.

\medskip
\noindent
\textbf{Functions defined on a graph.} Let $G=(V,E)$ be a graph, and let $f : V \mapsto \mathbb{R}$.
A \emph{global minimum} of $f$ is any vertex $u$ such that $f(u)$ is minimal within $V$.
A \emph{local minimum} of $f$ is any vertex $u$ such that $f(u) \le f(v)$ for every $v \in N(u)$. 
We call $f$ \emph{unimodal} if every local minimum is also a global minimum.
More generally, for every integer $p \ge 1$, let the \emph{$p^{th}$-power} of $G$ be the graph $G^p=(V,E_p)$ such that there exists an edge $uv \in E_p$ if and only if $0 < d_G(u,v) \le p$. 
We call $f$ \emph{$G^p$-unimodal} if it is unimodal on $G^p$.
Note that $f$ is $G^p$-unimodal if and only if every vertex $u$ that minimizes $f(u)$ within $B_p(u,G)$ is also a global minimum.
The function $f$ is called \emph{weakly peakless} if it satisfies the following condition $\WP(u,v)$ for every distinct vertices $u$ and $v$ such that $u \nsim v$: there exists a vertex $w \in I^o(u,v)$ such that $f(w) \le \max\{f(u),f(v)\}$, and equality holds only if $f(u) = f(w) = f(v)$. It is called \emph{$p$-weakly peakless} if it satisfies $\WP(u,v)$ for every vertices $u$ and $v$ such that $d(u,v) \ge p+1$.

\begin{mylemma}[see Theorem $11$ in~\cite{WP}]\label{lem:wp}
    Every $p$-weakly peakless function on a graph $G$ is $G^p$-unimodal.
\end{mylemma}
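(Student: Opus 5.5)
We argue by contradiction. Let $f$ be $p$-weakly peakless on $G$, and let $u$ be a vertex minimizing $f$ within $B_p(u)$, i.e., a local minimum of $f$ in $G^p$. Suppose $u$ is not a global minimum. Among all vertices $v$ with $f(v) < f(u)$, fix one with $d(u,v)$ minimal. Since $u$ minimizes $f$ on $B_p(u)$, we have $d(u,v) \ge p+1$, so condition $\WP(u,v)$ applies. We then derive a contradiction by walking along intervals towards $v$.

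\textbf{Applying $\WP(u,v)$.} Since $\max\{f(u),f(v)\} = f(u) > f(v)$, equality $f(u)=f(w)=f(v)$ is impossible. Hence $\WP(u,v)$ gives a vertex $w \in I^o(u,v)$ with $f(w) < f(u)$. Because $w$ lies strictly inside the interval, $d(u,w) < d(u,v)$. This contradicts the minimality of $d(u,v)$ among vertices with value below $f(u)$. Therefore no such $v$ exists, and $u$ is a global minimum. This is exactly the characterization of $G^p$-unimodality recalled in the preliminaries.

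\textbf{Where care is needed.} The only subtle point is the strictness in the weakly peakless condition. A $w$ with merely $f(w) \le \max\{f(u),f(v)\}$ is not enough; we need the equality clause to upgrade this to $f(w) < f(u)$, and that upgrade is valid precisely because $f(v) < f(u)$. The descent along intervals is a finite induction on distance, packaged here as the minimal-counterexample choice of $v$. We also rely on $w \ne u$, which holds since $I^o$ excludes the endpoints. No other earlier results are needed.
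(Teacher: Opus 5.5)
Your proof is correct. The minimal-distance choice of $v$ with $f(v)<f(u)$ forces $d(u,v)\ge p+1$. The strictness clause of $\WP(u,v)$ then produces $w\in I^o(u,v)$ with $f(w)<f(u)$ and $d(u,w)<d(u,v)$, which is the contradiction you need.

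The paper gives no proof of its own here and simply imports the statement from \cite{WP}; your direct minimal-counterexample argument is a complete self-contained substitute. One small correction to your closing remark: the fact you actually use is $w\neq v$, which gives $d(w,v)\ge 1$ and hence $d(u,w)<d(u,v)$. The fact $w\neq u$ plays no role, and it already follows from $f(w)<f(u)$.
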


\medskip
\noindent
\textbf{Results on the \textsc{Weighted Center} problem.} Let $G=(V,E)$ be a graph, and let $\pi : V \mapsto \mathbb{R}_{\ge 0}$ be a profile. 
The support of $\pi$, hereafter denoted by $\Supp(\pi)$, is the set of all vertices $v$ such that $\pi(v) > 0$.
Recall that $r_\pi : u \in V \mapsto \max\{ \pi(v)d(u,v) : v \in V \}$ is called a radius function.
Furthermore, for every vertex $u$, let $F_\pi(u) = \{ v \in V : r_\pi(u) = \pi(v)d(u,v) \}$.
Recall that we define the ($\pi$-)radius as $\rad_\pi(G) = \min\{ r_\pi(u) : u \in V \}$.
In the same way, let the center $C_\pi(G)$ contain all the vertices $c$ such that $r_\pi(c)$ is minimized; each vertex $c \in C_\pi(G)$ is also called a center.
As a side contribution of the paper, we present a reduction from the \textsc{Weighted Center} problem to its decision version.
For that, although the search space for $\rad_\pi(G)$ is in $\cO(n^2)$, we can use a strategy inspired by the median-of-medians algorithm~\cite{MedianOfMediansArray} to quickly reduce the size of this space.
\begin{mylemma}\label{lem:weighted-center-decision}
    Let $G=(V,E)$ be a graph, and let $\pi$ be an arbitrary profile on $G$.
    If the problem of either computing a vertex $c$ such that $r_\pi(c) \le \tau$, or asserting that no such vertex exists, can be solved in $\cO(T(n,m))$ time for any value $\tau$, then the \textsc{Weighted Center} problem can be solved in $\cO\left((T(n,m)+n)\log{n}\right)$ time on $G,\pi$.     
\end{mylemma}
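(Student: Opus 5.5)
The radius $\rad_\pi(G)$ lies in the candidate set $\Lambda = \{ k\pi(v) : 0 \le k \le n-1,\ v \in V \}$. The plan is a binary search over $\Lambda$ that never writes $\Lambda$ down. Instead, I represent $\Lambda$ implicitly as a union of $n$ sorted arithmetic sequences, one per vertex $v$: the rows $L_v = (0,\pi(v),2\pi(v),\ldots,(n-1)\pi(v))$. For each $v$ I maintain an active index window $[\ell_v,h_v]$ into $L_v$. Every value inside a window can be computed in $\cO(1)$ time, and each window's size is $h_v-\ell_v+1$.

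The invariant is that $\rad_\pi(G)$ lies in the active set $A = \bigcup_v \{k\pi(v) : \ell_v \le k \le h_v\}$, and also in an interval $(\lambda_{lo},\lambda_{hi}]$, where $\lambda_{hi}$ is a value already certified feasible by the oracle, with a witness vertex stored.

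Each round does the following, in the spirit of the weighted-median / median-of-medians selection in sorted matrices.

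1. For every row with a nonempty window, take its middle element $m_v$ and give it weight $h_v-\ell_v+1$. Computing the weighted median $\tau$ of these $\le n$ values takes $\cO(n)$ time.
2. Call the decision oracle with $\tau$, in $\cO(T(n,m))$ time.
3. If a vertex $c$ with $r_\pi(c)\le\tau$ is returned, set $\lambda_{hi}=\tau$ and store $c$. Every row $v$ with $m_v \ge \tau$ discards the part of its window strictly above $m_v$ (the radius is at most $\tau\le m_v$).
4. Otherwise, the radius exceeds $\tau$. Every row with $m_v \le \tau$ discards its window up to and including $m_v$.
5. Updating a window is done by binary arithmetic on the indices, so it costs $\cO(1)$ per row. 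This gives $\cO(n)$ per round.

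The key counting step shows that $|A|$ shrinks geometrically. In either outcome, the rows on the "right" side of the weighted median carry at least half of the total window weight. Each such row loses about half of its window. So at least roughly a quarter of $|A|$ is removed per round.

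Rows with window size $1$ need care, since halving them does nothing. I will treat them by also discarding the middle element itself whenever it is proven infeasible or superseded. Equivalently, in the feasible case I keep only values $\le\tau$ and note that $\tau$ itself is already recorded as $\lambda_{hi}$, so values equal to $\tau$ can be dropped from $A$ once $\tau$ is recorded. This guarantees progress. Since $|A|\le n^2$ initially, after $\cO(\log n^2)=\cO(\log n)$ rounds $A$ is empty.

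At that point the answer is $\lambda_{hi}$ and its stored witness $c$. This is correct because the radius is in $\Lambda$, every discarded value is either infeasible or $\ge$ a certified feasible value, and so the minimum feasible candidate was at some point queried. For the initialization, I take $\lambda_{hi} = r_\pi(u)$ for an arbitrary $u$ with one BFS in $\cO(m)$ time, which can be absorbed or handled by an initial oracle call with $\tau=\infty$.

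The total time is $\cO(\log n)$ rounds of $\cO(T(n,m)+n)$, as claimed. The main obstacle is the bookkeeping in the counting argument: showing that a constant fraction of the active candidates is eliminated every round despite ties, equal values across rows, and windows of size one or two. I will handle this by making the discarded portions inclusive of the median element whenever its status is resolved by the oracle, which yields at least a $1/4$-fraction reduction of the total weight.
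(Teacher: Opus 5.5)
Your proposal is correct and is essentially the paper's argument: you query the weighted median of the window midpoints, weighted by window sizes, and the oracle's answer halves every window on the appropriate side. So the total weight shrinks to at most $3/4$ of its value per round, and $\cO(\log n)$ rounds suffice. The differences are only bookkeeping: the paper uses weights $b_v-a_v$, stops once all windows are singletons, and finishes by sorting the $\cO(n)$ survivors and binary searching, whereas you also discard the queried midpoint so that windows empty out and the answer is the last certified-feasible query. For your version to be sound, apply the feasible-case truncation to values $<\tau$ in \emph{every} row (as in your final paragraph), or update $\lambda_{hi}$ as a minimum; otherwise the literal step~3 would let a later, larger feasible query overwrite $\lambda_{hi}$.
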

\begin{proof}
    If $\Supp(\pi) = \emptyset$, then $C_\pi(G) = V$. Furthermore, if $\Supp(\pi)$ is a singleton, then $C_\pi(G) = \Supp(\pi)$.
    Thus, let us assume from now on $|\Supp(\pi)| > 1$.
    We maintain a set $S \subseteq V \times \mathbb{N} \times \mathbb{N}$ with the property that for some $(v,a_v,b_v) \in S$, there exists some $k$ such that $a_v \le k \le b_v$, and $\rad_\pi(G) = \pi(v)k$.
    Initially, we set $S = \{ (v,1,n-1) : v \in \Supp(\pi) \}$.
    Then, our algorithm iteratively performs some procedure, detailed next, until the following halting condition, (HC), is met: for every $(v,a_v,b_v) \in S$, $a_v = b_v$.
    Note that once (HC) holds, we are left with $\cO(|S|) = \cO(n)$ distinct possibilities for $\rad_\pi(G)$.
    We can then sort all these values in $\cO(n\log{n})$ time, remove duplicates, and compute a center in $\cO(T(n,m)\log{n})$ time by using a standard binary search.

    Assume for what follows that (HC) does not hold.
    For every $v \in \Supp(\pi)$, we define its weight $\omega(v)$ as follows: if there is some triple $(v,a_v,b_v)$ in $S$, then $\omega(v) = b_v-a_v$, and otherwise $\omega(v) = 0$.
    Furthermore, let $\omega(S) = \sum\{ \omega(v) : v \in \Supp(\pi) \}$.
    As (HC) does not hold, $\omega(S) > 0$.
    Then, the following procedure is performed:
    \begin{enumerate}
        \item For some arbitrary ordering $v_0,v_1,\ldots,v_\ell$ of all the vertices of positive weight, we construct an array whose entries are the values $\tau_i := \pi(v_i)\left\lfloor\frac{a_{v_i}+b_{v_i}}{2}\right\rfloor$, for $0 \le i \le \ell$.
        \item We compute the \emph{lowest weighted median} of the array: defined as the value $\tau_{i^*}$ such that $\sum\{ \omega(v_j) : \tau_j < \tau_{i^*} \} < \omega(S)/2$ and $\sum\{ \omega(v_j) : \tau_j > \tau_{i^*} \} \le \omega(S)/2$.
        \item We test whether there exists a vertex $c$ such that $r_\pi(c) \le \tau_{i^*}$.
            \begin{enumerate}
                \item If yes, then for each $(v,a_v,b_v) \in S$, let us define $b_v' = \left\lfloor \frac{\tau_{i^*}}{\pi(v)} \right\rfloor$.
                If $b_v' < a_v$, then we remove this triple from $S$. Otherwise, if in addition $b_v' < b_v$, then we replace $(v,a_v,b_v)$ with $(v,a_v,b_v')$ in $S$.
                \item Else, for each $(v,a_v,b_v) \in S$, let us define $a_v' = \left\lfloor \frac{\tau_{i^*}}{\pi(v)} \right\rfloor + 1$.
                If $a_v' > b_v$, then we remove this triple from $S$. Otherwise, if in addition $a_v' > a_v$, then we replace $(v,a_v,b_v)$ with $(v,a_v',b_v)$ in $S$.
            \end{enumerate}
    \end{enumerate}
    As noted in~\cite{Cormen}, the lowest weighted median can be computed in $\cO(\ell) = \cO(n)$ time by using a variation of the median-of-medians algorithm, see~\cite{MedianOfMediansArray}. 
    Therefore, this above procedure can be done in $\cO(T(n,m)+n)$ time.
    Furthermore, we claim that the total weight $\omega(S)$ is decreased by a factor at least $3/4$ after each call to the procedure.
    Indeed, if there is a vertex $c$ such that $r_\pi(c) \le \tau_{i^*}$, then the weight of every vertex $v_j$ such that $\tau_j \ge \tau_{i^*}$ is at least halved.
    If there is no such vertex, then the weight of every vertex $v_j$ such that $\tau_j \le \tau_{i^*}$ is at least halved.
    Therefore, we always halve the weights for a subset of vertices such that the sum of their weights is at least $\omega(S)/2$.
    In particular, the total weight decreases by at least $\omega(S)/4$, thus proving our claim.
    Overall, as initially $\omega(S) = \cO(n^2)$, there are $\cO(\log{n})$ calls to the procedure.
\end{proof}

\medskip
\noindent
\textbf{$\delta$-hyperbolic graphs and their properties.} 
For every nodes $u,v,x,y$ in a tree $T$, it holds $d_T(u,v)+d_T(x,y) \le \max\{d_T(u,x)+d_T(v,y),d_T(u,y)+d_T(v,x)\}$.
The definition of Gromov hyperbolicity can be regarded as a relaxation of this four-point condition.
A graph $G=(V,E)$ is called \emph{$\delta$-hyperbolic} if for every $u,v,x,y \in V$, $d(u,v)+d(x,y) \le \max\{d(u,x)+d(v,y),d(u,y)+d(v,x)\} + 2\delta$ holds.
Alternatively, if $d(u,y)+d(v,x) \le d(u,x)+d(v,y) \le d(u,v)+d(x,y)$, then $d(u,v)+d(x,y) - d(u,x)-d(v,y) \le 2\delta$.
The \emph{hyperbolicity} of $G$ is the smallest $\delta$ such that it is $\delta$-hyperbolic.
In what follows, we gather some properties of $\delta$-hyperbolic graphs. 
%We will also need the following Helly-type property for $\delta$-hyperbolic graphs:

The first result is a relaxation of a well-known Helly property for trees: every family of pairwise intersecting subtrees (and so, every family of pairwise intersecting balls in a tree) has a nonempty common intersection. 
\begin{mylemma}[see Lemma $1$ in~\cite{ChEs}]\label{lem:hyp-helly-gap}
    If $G$ is $\delta$-hyperbolic, then for any family of pairwise intersecting balls $B_{\rho_1}(v_1), B_{\rho_2}(v_2), \ldots, B_{\rho_k}(v_k)$, $\bigcap\{ B_{\rho_i+2\delta}(v_i) : 1 \le i \le k \} \ne \emptyset$.
\end{mylemma}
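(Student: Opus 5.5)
The plan is to reduce the Helly-type statement to the four-point condition by a "pairwise Gromov products" argument. The idea is to construct an explicit common point. First fix a vertex $v_1$ and, for each $i$, a vertex $x_i$ lying in $B_{\rho_1}(v_1)\cap B_{\rho_i}(v_i)$. I would actually choose the witness $c$ on a geodesic from $v_1$ to some $v_j$. Pairwise intersection gives $d(v_i,v_j) \le \rho_i+\rho_j$ for all $i,j$. So it suffices to prove the following purely metric statement: if $d(v_i,v_j)\le \rho_i+\rho_j$ for all $i,j$, then some vertex $c$ satisfies $d(c,v_i)\le \rho_i+2\delta$ for every $i$.

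To find $c$, I will pick the index $j$ maximizing $d(v_1,v_j)-\rho_j$, and let $t=\max(0,\min(\rho_1, d(v_1,v_j)-\rho_j))$. Here $t \le d(v_1,v_j)$ because $d(v_1,v_j)\le\rho_1+\rho_j$ implies $d(v_1,v_j)-\rho_j\le\rho_1$. Then let $c$ be the vertex at distance $t$ from $v_1$ on a shortest $(v_1,v_j)$-path. The bounds $d(c,v_1)\le\rho_1$ and $d(c,v_j)=d(v_1,v_j)-t\le \rho_j$ hold by the choice of $t$ (the case $t=0$ means $d(v_1,v_j)\le\rho_j$, which is fine).

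For an arbitrary $i$, I will apply the four-point condition to the quadruple $v_1,v_j,c,v_i$. Since $c\in I(v_1,v_j)$, we have $d(v_1,v_j)=d(v_1,c)+d(c,v_j)$. The condition then gives
\[ d(c,v_i)+d(v_1,v_j)\le \max\{d(c,v_1)+d(v_i,v_j),\ d(c,v_j)+d(v_i,v_1)\}+2\delta. \]
If the first term attains the max, we get $d(c,v_i)\le d(v_i,v_j)-d(c,v_j)+2\delta\le \rho_i+\rho_j-d(c,v_j)+2\delta$. This is fine if $d(c,v_j)=\rho_j$; otherwise $t=\rho_1$ or $t=0$, and that case needs handling. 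If the second term attains the max, we get $d(c,v_i)\le d(v_1,v_i)-t+2\delta$, which is at most $\rho_i+2\delta$ when $t\ge d(v_1,v_i)-\rho_i$. That inequality holds by the maximality of $j$ whenever $t=d(v_1,v_j)-\rho_j$.

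The main obstacle will be the boundary cases where $t$ is clamped. When $t=0$, the choice of $j$ gives $d(v_1,v_i)\le\rho_i$ for all $i$, so $c=v_1$ works directly. When $t=\rho_1<d(v_1,v_j)-\rho_j$, pairwise intersection gives $d(v_1,v_j)\le\rho_1+\rho_j$, a contradiction, so this case is impossible. Hence $t=d(v_1,v_j)-\rho_j$ and $d(c,v_j)=\rho_j$ exactly, and both branches of the max give $d(c,v_i)\le\rho_i+2\delta$. This yields $c\in\bigcap_i B_{\rho_i+2\delta}(v_i)$ as claimed. Integrality is not an issue, since all the quantities involved are integers.
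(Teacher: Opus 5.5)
Your proof is correct. The paper does not prove this lemma itself (it is quoted from Chepoi and Estellon~\cite{ChEs}), so there is no in-paper argument to compare against. What you give is a complete, self-contained proof of the cited fact from the four-point condition.

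The key steps check out. Take $j$ maximizing $d(v_1,v_j)-\rho_j$ and $c$ on a shortest $(v_1,v_j)$-path with $d(c,v_j)=\rho_j$, and apply the four-point inequality to $c,v_i,v_1,v_j$. The first branch uses only $d(v_i,v_j)\le\rho_i+\rho_j$, and the second uses only the maximality of $j$, so $d(c,v_i)\le\rho_i+2\delta$ in both cases.

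Two cosmetic points:
\begin{enumerate}
\item The vertices $x_i$ introduced in your first paragraph are never used.
\item The clamp by $\rho_1$ is superfluous. Already $t=\max\{0,d(v_1,v_j)-\rho_j\}\le d(v_1,v_j)$ because $\rho_j\ge 0$, and the same argument works verbatim with any base vertex in place of $v_1$.
\end{enumerate}

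Your integrality remark relies on the radii $\rho_i$ being integers, which is the paper's convention for balls. If non-integer radii were allowed, e.g.\ $\lambda_\tau(v)-1-2\delta'$ for an approximate $\delta'$, first replace each $\rho_i$ by $\lfloor\rho_i\rfloor$. This leaves the balls $B_{\rho_i}(v_i)$ unchanged and only shrinks the target balls, so your argument applies unchanged.
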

The next result is a special case of Morse lemma~\cite{Sh}, of which we give a direct proof.
Recall that in a tree, there exists a unique path between every two nodes.
Roughly, we relax this property for a $\delta$-hyperbolic graph such that almost shortest-paths between every two vertices must stay close to each other.
\begin{mylemma}\label{lem:hyp-morse}
    Let $u,v,x,y$ be vertices in a $\delta$-hyperbolic graph $G$.
    If $u \in I(x,y)$, $d(x,v)+d(v,y) \le d(x,y)+\lambda$, and $d(v,x) \ge d(u,x), \ d(v,y) \ge d(u,y)$, then $d(u,v) \le \lambda + 2\delta$.
\end{mylemma}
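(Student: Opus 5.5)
Apply the four-point condition to the quadruple $u,v,x,y$ and consider the three pairwise sums
\[
S_1 = d(x,y)+d(u,v),\quad S_2 = d(x,u)+d(v,y),\quad S_3 = d(x,v)+d(u,y).
\]
Because $u \in I(x,y)$, we have $d(x,u)+d(u,y)=d(x,y)$. Together with $d(v,x)\ge d(u,x)$ and $d(v,y) \ge d(u,y)$, this gives $S_2 \ge d(x,u)+d(u,y) = d(x,y)$ and likewise $S_3 \ge d(x,y)$.

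For upper bounds, the hypothesis $d(x,v)+d(v,y) \le d(x,y)+\lambda$ combined with $d(x,u)\le d(x,v)$ yields
\[
S_2 = d(x,u)+d(v,y) \le d(x,v)+d(v,y) \le d(x,y)+\lambda,
\]
and symmetrically $S_3 \le d(x,y)+\lambda$.

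Now split according to which sum is largest. If $S_1 \le \max\{S_2,S_3\}$, then $d(x,y)+d(u,v) \le d(x,y)+\lambda$, so $d(u,v)\le \lambda$. Otherwise $S_1$ is the largest sum, and $\delta$-hyperbolicity gives $S_1 \le \max\{S_2,S_3\}+2\delta$. Hence $d(x,y)+d(u,v) \le d(x,y)+\lambda+2\delta$, that is, $d(u,v)\le \lambda+2\delta$.

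There is no real obstacle here. The only delicate point is using the condition in its "max" form, so that we need an upper bound on $\max\{S_2,S_3\}$ and not a lower bound. The two monotonicity hypotheses $d(v,x)\ge d(u,x)$ and $d(v,y)\ge d(u,y)$ are used exactly once each, to bound $S_2$ and $S_3$ above by $d(x,v)+d(v,y)$. The lower bounds on $S_2,S_3$ are not actually needed.
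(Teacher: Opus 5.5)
Your proof is correct and is essentially the paper's argument: apply the four-point condition to $u,v,x,y$, then use the two monotonicity hypotheses to bound both $d(x,u)+d(v,y)$ and $d(x,v)+d(u,y)$ above by $d(x,v)+d(v,y)\le d(x,y)+\lambda$. Your case split on which sum is largest is harmless but unnecessary, since the paper's definition gives $d(u,v)+d(x,y)\le \max\{d(u,x)+d(v,y),d(u,y)+d(v,x)\}+2\delta$ for every quadruple. As you observe, the hypothesis $u\in I(x,y)$ is never needed, and the paper's proof does not use it either.
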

\begin{proof}
    Since $G$ is $\delta$-hyperbolic, $d(u,v) + d(x,y) \le \max\{d(u,x)+d(v,y),d(u,y)+d(v,x)\} + 2\delta$.
    Therefore, $d(u,v)+d(x,y) \le \max\{d(u,x),d(v,x)\} + \max\{d(u,y),d(v,y)\} + 2\delta \le d(v,x) + d(v,y) + 2\delta \le d(x,y) + \lambda + 2\delta$. The latter implies as claimed that $d(u,v) \le \lambda + 2\delta$.
\end{proof}
Finally, observe that for any subtree $H$ in a tree (in particular, for any ball in a tree), the nodes out of $H$ can be partitioned in rooted subtrees with their roots being taken in the boundary of $H$. 
In particular, every node $y$ out of $H$ has a unique closest node $v$ in $H$ (namely, the root of its subtree), and the unique path from $y$ to any node of $H$ goes by $v$. 
We prove a relaxation of this simple property for balls in a $\delta$-hyperbolic graph, namely:
\begin{mylemma}\label{lem:hyp-proj}
    Let $u,x,y$ be vertices in a $\delta$-hyperbolic graph $G$.
    If $x \in B_r(u)$ and $y \notin B_r(u)$, then for any $v \in S_r(u,y)$, $d(x,v) + d(v,y) \le d(x,y) + 2\delta$.
\end{mylemma}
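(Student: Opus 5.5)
The plan is to apply the four-point condition directly to the quadruple $x,v,u,y$, in the same way as in the proof of Lemma~\ref{lem:hyp-morse}. The facts we use are: $v \in S_r(u,y)$ gives $d(u,v) = r$ and $d(v,y) = d(u,y) - r$; $x \in B_r(u)$ gives $d(x,u) \le r$; and $y \notin B_r(u)$ gives $d(u,y) > r$, so the slice $S_r(u,y)$ is well defined.

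We write the hyperbolicity inequality for the pairs $\{x,v\}$ and $\{u,y\}$:
\[
d(x,v) + d(u,y) \le \max\{ d(x,u) + d(v,y),\ d(x,y) + d(u,v) \} + 2\delta .
\]
We then bound each term of the maximum by $d(x,y) + d(u,y)$ minus a suitable quantity. For the first term, we use $d(x,u) \le r$ and $d(v,y) = d(u,y) - r$, which give $d(x,u) + d(v,y) \le d(u,y)$. For the second term, we use $d(u,v) = r = d(u,y) - d(v,y)$, which gives $d(x,y) + d(u,v) = d(x,y) + d(u,y) - d(v,y)$.

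Subtracting $d(u,y)$ from both sides, we obtain
\[
d(x,v) \le \max\{0,\ d(x,y) - d(v,y)\} + 2\delta .
\]
We split into two cases according to which term attains the maximum.

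If $d(x,y) \ge d(v,y)$, the bound reads $d(x,v) \le d(x,y) - d(v,y) + 2\delta$. This is exactly $d(x,v) + d(v,y) \le d(x,y) + 2\delta$, as claimed.

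If instead $d(x,y) < d(v,y)$, the bound reads $d(x,v) \le 2\delta$. The triangle inequality then gives $d(x,v) + d(v,y) \le 2\delta + d(v,y)$, which is weaker than the claim. We therefore need to rule out this case, and this is the step I expect to be the main obstacle.

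To rule it out, we use the triangle inequality through $u$: $d(u,y) \le d(u,x) + d(x,y) \le r + d(x,y)$. Hence $d(x,y) \ge d(u,y) - r = d(v,y)$, so the second case never occurs. With that, the single-case argument above completes the proof.
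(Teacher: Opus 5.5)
Your proof is correct and follows essentially the same route as the paper. Both apply the four-point condition to the quadruple $u,v,x,y$ and use $d(u,x)\le r=d(u,v)$ together with $d(x,y)\ge d(v,y)$, which the paper also gets from the triangle inequality through $u$; the only difference is that the paper first shows $d(u,x)+d(v,y)\le d(u,v)+d(x,y)\le d(u,y)+d(v,x)$ and applies the ordered form of hyperbolicity, whereas you use the max form and then rule out the degenerate case.
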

\begin{proof}
    Consider the three sums $A = d(u,y)+d(v,x)$, $B = d(u,v)+d(x,y)$, and $C = d(u,x)+d(v,y)$. 
	Since $v \in S_r(u,y)$, we get that $d(u,v) = r \ge d(u,x)$.
    Furthermore, $d(y,z) \ge d(y,v)$ for any $z \in B_r(u)$.
    So, in particular, $d(x,y) \ge d(v,y)$. 	
	Therefore, $B \ge C$.
    As $v \in I(u,y)$, $d(u,y) = d(u,v) + d(v,y)$. Hence, $A = d(u,v) + d(v,y) + d(v,x) \ge d(u,v) + d(x,y) = B$.
    Since $G$ is $\delta$-hyperbolic, $A - B \le 2\delta$.
    As a result, $d(v,y) + d(v,x) - d(x,y) \le 2\delta$.
\end{proof}

\section{Approximation algorithms for $\delta$-hyperbolic graphs}\label{hyperbolic}

This section is devoted to the proof of Theorem~\ref{th:hyp}.
Its statements (\ref{item:hyp-1}) and (\ref{item:hyp-2}) are proved in Sec.~\ref{hyp-weak} and Sec.~\ref{hyp-strong}, respectively.

\subsection{Using tree embeddings}\label{hyp-weak}
A graph $G$ \emph{$\lambda$-embeds} into a tree if there exists a tree $T$, possibly edge-weighted, such that $V \subseteq V(T)$ and for every vertices $u$ and $v$ of $G$, $d_G(u,v) \le d_T(u,v) \le \lambda d_G(u,v)$.
This part is devoted to the proof of the following result:	
	
\begin{mytheorem}\label{thm:hyp-weak}
	If a $\delta$-hyperbolic graph $G=(V,E)$ $\lambda$-embeds into a tree, then for any profile $\pi$, one can output in $\cO(m)$ time a vertex $c_1$ such that $C_\pi(G) \subseteq B_{6\lambda+4\delta+5}(c_1)$.
\end{mytheorem}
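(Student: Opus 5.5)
Throughout, I read ``$\lambda$-embeds'' in the additive sense that the constant $6\lambda+4\delta+5$ (and Theorem~\ref{th:hyp}(\ref{item:hyp-1}), with $\lambda=\cO(\delta\log n)$) suggests: $d_G(u,v)\le d_T(u,v)\le d_G(u,v)+\lambda$. I also assume that $T$ is given with $\cO(n)$ nodes. The plan has three steps: solve the problem in $T$, round the tree center to a graph vertex, then transfer closeness back to $G$. The key point is that the argument must avoid comparing radius \emph{values}. An additive error $\lambda$ on distances becomes an error $\pi(v)\lambda$ on $r_\pi$, and this is unbounded for arbitrary profiles. Instead, I would compare \emph{positions} through intersections of balls of the form $B_{\rho/\pi(v)}(v)$.

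\textbf{Algorithm.} First extend $\pi$ to the Steiner nodes of $T$ by weight $0$. Then run Megiddo's linear-time algorithm~\cite{Me2} to get a center $t$ of $T$ and its radius $R_T$. Standard tree facts let us take $t$ on the $T$-path between two critical vertices $u,w\in\Supp(\pi)$, with $\pi(u)d_T(t,u)=\pi(w)d_T(t,w)=R_T$. The degenerate case $t=u$ with a single critical vertex is handled separately. Finally, let $c_1$ be a vertex of $G$ on the $T$-path from $u$ to $w$ that is closest to $t$ in $T$; this is found by one traversal. Everything runs in $\cO(n+m)$ time.

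\textbf{Locating a true center.} Fix $c\in C_\pi(G)$ and let $R=\rad_\pi(G)$. The argument has three sub-steps.
\begin{enumerate}
\item Show that $R$ and $R_T$ are comparable in the scale-free sense. Since $t$ minimizes over a superset and $d_T\ge d_G$ on $V$, the balls $B_{R_T/\pi(v)}(v)$ of $T$ around $u$ and $w$ are tangent at $t$. On the other hand, $c$ lies in all $T$-balls $B_{R/\pi(v)+\lambda}(v)$.
\item Use the condition at $u$ and $w$ to show that the projection $p$ of $c$ onto the $T$-path $[u,w]$ is within $\cO(\lambda)$ of $t$. If $p$ were far from $t$ towards $w$, say, then $d_T(c,u)-d_T(t,u)$ would be large. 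This would force $d_G(c,u)>R/\pi(u)$, unless $R$ is much larger than $R_T$. That in turn is excluded because rounding $t$ to $c_1$ costs only a bounded number of tree-units along the path, so both critical distances from $c_1$ stay in the right range.
\item Control $d_T(c,p)$, the distance from $c$ down to the path, by the same two constraints, since $c$ must be nearly on a geodesic from $u$ to $w$.
\end{enumerate}

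\textbf{Back to $G$.} Once $d_T(c,c_1)=\cO(\lambda)$ is established in $T$, the inequality $d_G\le d_T$ transfers it immediately to $d_G(c,c_1)$. I expect hyperbolicity to enter in a second way: to show that $c$ is $G$-close to a $(u,w)$-geodesic, rather than merely $T$-close. For this I would invoke Lemma~\ref{lem:hyp-morse}, with $\lambda$ replaced by the slack $\cO(\lambda)$ obtained from sub-steps 1--2. This explains the additive $2\delta$ terms. A further $2\delta$ and the small constants should come from Lemma~\ref{lem:hyp-proj}, applied to the balls $B_{R/\pi(u)}(u)$ and $B_{R/\pi(w)}(w)$, which are intersecting in $G$.

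\textbf{Main obstacle.} The hardest step is sub-step 2, which I phrased as ruling out ``$R$ much larger than $R_T$''. This is exactly where arbitrary weights break naive arguments, because the critical pair $u,w$ in $G$ may differ from that in $T$. My first attempt would be a case analysis on which of $u,w$ has the larger weight. In each case I would shift the feasibility comparison to the lighter-weighted side, where the error $\lambda$ costs least. If that fails, I would use Lemma~\ref{lem:hyp-helly-gap} on the family $\{B_{R/\pi(v)}(v)\}$ to show that these balls, inflated by $2\delta+\lambda$, meet near $t$.
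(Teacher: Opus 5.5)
There is a genuine gap, and it sits exactly at your ``main obstacle''. You correctly diagnose that radius values must not be compared across the two metrics, but sub-step~2 does exactly that. To get from $d_G(c,u)\le R/\pi(u)$ to $d_T(c,u)\le d_T(t,u)+\cO(\lambda)$ you need $R-R_T\le \cO(\lambda)\,\pi(u)$. Rounding only gives $R\le r_\pi(c_1)\le R_T+d_T(t,c_1)\,\pi(v^*)$, where $v^*$ is the vertex critical at $c_1$ in $T$, and $v^*$ may be much heavier than $u$. The weight case analysis and the appeal to Lemma~\ref{lem:hyp-helly-gap} are stated as hopes, not arguments.

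The missing idea (Proposition~\ref{prop:approx-center-hyp}) compares a $G$-center $c$ and a tree center $c'$ through a third vertex, never through the radii. Set $\tau=p+q+2\delta+1$, suppose $d_G(c,c')\ge 2\tau$, and pick $w\in I_G(c,c')$ at distance at least $\tau$ from both. The four-point condition on $c,c',w,z$ gives, for every vertex $z$,
\[
d_G(w,z)\le \max\{d_G(c,z),d_G(c',z)\}+2\delta-\tau .
\]
Let $x$ be critical for $r_\pi(w)$. Optimality of $c$ forces the maximum to be $d_G(c',x)$. Let $y$ be critical for the $T$-radius at $w$. Optimality of $c'$ in $T$ forces the maximum to be $d_G(c,y)$, since the distortion slack $p+q$ is absorbed by $\tau$. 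Whichever of $\pi(x)d_G(w,x)$ and $\pi(y)d_T(w,y)$ is larger, $w$ then strictly beats $c$ in $G$ or $c'$ in $T$. Hence $d_G(c,c')\le 2(p+q)+4\delta+1$. This is a pointwise comparison, insensitive to the weights, and it is the only place hyperbolicity is used.

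The set-up also has problems. The statement only assumes that a tree embedding \emph{exists}, and in the paper's definition the distortion is multiplicative, $d_G\le d_T\le \lambda d_G$. So you may not take $T$ as given. The paper uses Lemma~\ref{lem:tree-embedding} to build, in $\cO(m)$ time, an unweighted tree on $V$ itself with $d_T-2\le d_G\le d_T+3\lambda$. Taking $p=2$ and $q=3\lambda$ is exactly where $6\lambda+4\delta+5$ comes from. Since this tree has no Steiner points, the tree center is already a vertex of $G$ and no rounding is needed.

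Your Steiner-point handling is wrong in two places:
\begin{itemize}
\item The equalities $\pi(u)d_T(t,u)=\pi(w)d_T(t,w)=R_T$ hold for the continuous (absolute) center, not for a center restricted to nodes. They already fail on a path with four vertices and unit weights, outside your degenerate case.
\item The $G$-vertex on the $T$-path $[u,w]$ closest to $t$ can be at distance $\Theta(n)$ from $t$. Take $G$ a path, and $T$ a Steiner path from $u$ to $w$ with each interior vertex of $G$ attached by an edge of length $1/2$. This has additive distortion $1$, yet the only $G$-vertices on $[u,w]$ are $u$ and $w$.
\end{itemize}
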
		

Note that Theorem~\ref{th:hyp}(\ref{item:hyp-1}) follows from Theorem~\ref{thm:hyp-weak} and the following Lemma~\ref{lem:lambda-hyp}.
\begin{mylemma}[see Proposition $6.1$.B in~\cite{Gr}]\label{lem:lambda-hyp}
    Every $\delta$-hyperbolic graph $\lambda$-embeds into a tree, for some $\lambda \le 2\delta\log{n}+1$.
\end{mylemma}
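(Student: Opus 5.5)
The plan is to derive the statement from Gromov's classical \emph{additive} tree approximation of a $\delta$-hyperbolic space: a tree $T$ with $d_G(u,v) \le d_T(u,v) \le d_G(u,v) + 2\delta\log{n}$ for all vertices $u,v$. The multiplicative bound then follows at once: for $u \ne v$ we have $d_G(u,v) \ge 1$, so $d_T(u,v) \le d_G(u,v) + 2\delta\log{n}\, d_G(u,v) = (2\delta\log{n}+1)\, d_G(u,v)$, and the case $u = v$ is trivial.

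For the additive approximation, I fix a base vertex $w$ and work with Gromov products $(x|y)_w = \frac{1}{2}\left(d(x,w)+d(y,w)-d(x,y)\right)$. The four-point condition with $\delta$ as defined in Sec.~\ref{prelim} gives $(x|y)_w \ge \min\{(x|z)_w,(z|y)_w\} - \delta$. By induction with dyadic splitting of a chain, this yields $(x_0|x_k)_w \ge \min_i (x_{i-1}|x_i)_w - \delta\lceil\log_2 k\rceil$ for any chain $x_0,\ldots,x_k$. Since only chains of length at most $n$ are needed, the loss is at most $\delta\log{n}$, taking $\log$ in base $2$. I then define $(x|y)'_w$ as the max over chains from $x$ to $y$ of the min of consecutive products. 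This maxmin quantity satisfies the ultrametric inequality $(x|y)'_w \ge \min\{(x|z)'_w,(z|y)'_w\}$. Moreover, $(x|y)_w \le (x|y)'_w \le (x|y)_w + \delta\log{n}$, and $(x|w)'_w = 0$ because products with $w$ are $0$.

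Next, I set $d'(x,y) = d(x,w)+d(y,w)-2(x|y)'_w$ for $x \ne y$. The ultrametric property of $(\cdot|\cdot)'_w$ makes $d'$ $0$-hyperbolic, that is, it satisfies the tree four-point condition. Hence $d'$ is realized on a weighted tree $T'$, which is standard for finite $0$-hyperbolic metrics, and I would cite it. By the sandwich above, $d_G - 2\delta\log{n} \le d' \le d_G$.

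To move the error to the correct side, I attach to every vertex $v$ of $G$ a new pendant leaf at distance $\delta\log{n}$ from its copy in $T'$. This gives a tree $T$ in which distances between distinct original vertices are $d' + 2\delta\log{n}$, so $d_G \le d_T \le d_G + 2\delta\log{n}$, as required. For $\delta = 0$ there is no issue.

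The main obstacle is the bookkeeping that $d'$ is a genuine $0$-hyperbolic (pseudo)metric. In particular one must check the triangle inequality and nonnegativity, and handle possible zero distances between distinct points, which is harmless since the pendant edges separate them. I would verify the $0$-hyperbolicity by rewriting the four-point condition in terms of products at base $w$ and applying the ultrametric inequality to the three pairings.
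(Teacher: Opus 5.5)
Your proposal is correct and is essentially the argument behind the result the paper cites (the paper gives no proof of its own, only the reference to Gromov's Proposition~6.1.B): max--min Gromov products at a base point satisfy the ultrametric inequality, so they define a $0$-hyperbolic metric realized by a tree, and you correctly add the pendant-edge shift and the additive-to-multiplicative conversion (using $d_G(u,v)\ge 1$) that the paper leaves implicit. One small correction concerns the constant. Once chains are restricted to distinct vertices avoiding $w$, they have at most $n-2$ steps, so the dyadic lemma gives a loss of $\delta\lceil\log_2(n-2)\rceil$; this can exceed $\delta\log_2 n$ (e.g.\ $n=7$, where $\lceil\log_2 5\rceil=3>\log_2 7$), so your argument literally yields $\lambda\le 2\delta\lceil\log_2(n-2)\rceil+1$. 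That is exactly Gromov's bound for $|X|\le 2^k+2$, and the paper's statement glosses over the same rounding.
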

Stronger versions of Lemma~\ref{lem:lambda-hyp} are used for some of the classes of graphs considered in Sec.~\ref{exact}.
		
Roughly, the proof of Theorem~\ref{thm:hyp-weak} consists in solving approximately the \textsc{Weighted Center} problem in a $\delta$-hyperbolic graph $G$ by solving this problem exactly on a tree $T$ in which $G$ embeds. It is known from prior works that such a tree $T$ with almost optimal (additive) distortion can be computed efficiently, namely:	
		
\begin{mylemma}[see Corollary 4 in~\cite{ChDrNeRaVa}]\label{lem:tree-embedding}
	If a graph $G=(V,E)$ $\lambda$-embeds into a tree, then there exists an unweighted tree $T=(V,F)$ (without Steiner points), constructible in $\cO(m)$ time, such that $d_T(x,y)-2 \le d_G(x,y) \le d_T(x,y) + 3\lambda$ for any vertices $x,y \in V$.
\end{mylemma}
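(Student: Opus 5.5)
We prove the lemma with the layering-tree construction of Chepoi and Dragan: build a BFS layering of $G$, split the layers into clusters, and hang every cluster below a single representative vertex of its parent cluster. The two inequalities then come from two separate facts. The lower bound $d_T - 2 \le d_G$ is purely combinatorial and holds in every graph. The upper bound $d_G \le d_T + 3\lambda$ needs the tree embedding, through a bound on the diameter of clusters.

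\emph{Construction.} Fix a vertex $s$ and compute the BFS layers $L_i=\{v : d_G(s,v)=i\}$. Two vertices $x,y \in L_i$ are in the same \emph{cluster} if they are connected by a path avoiding $B_{i-1}(s)$. The clusters form a rooted tree $\Gamma$ (the layering tree): the parent of a cluster $C \subseteq L_i$ is the unique cluster of $L_{i-1}$ containing a neighbour of $C$. This parent is unique because all such neighbours are connected through $C$ and lie outside $B_{i-2}(s)$. For each cluster $C$ we choose one vertex $c^\ast$ of its parent cluster that has a neighbour in $C$, and we make every vertex of $C$ adjacent to $c^\ast$ in $T$. Then $T=(V,F)$ is an unweighted spanning tree with no Steiner points. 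The layers, the clusters (found by sweeping the layers from the outermost inwards and maintaining connected components of $G\setminus B_{i-1}(s)$ with union--find, or directly by the known linear-time procedure), and the representatives can all be computed in $\cO(m)$ time.

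\emph{Lower bound, $d_T(x,y) \le d_G(x,y)+2$.} Let $x \in L_i$ and $y \in L_j$, and let $C$ be the lowest common ancestor, in $\Gamma$, of the clusters containing $x$ and $y$; say $C \subseteq L_k$. Any $(x,y)$-path in $G$ has to reach layer $k$. Otherwise it would stay outside $B_{k}(s)$, and then the ancestors of $x$ and $y$ at layer $k+1$ would lie in a common cluster, contradicting the choice of $C$. Hence $d_G(x,y) \ge (i-k)+(j-k)$. In $T$, vertex $x$ reaches the representative chosen in the ancestor cluster at layer $k$ in $i-k$ steps, and similarly for $y$. If these two representatives differ, they are joined through their common neighbours in $T$ using at most $2$ extra edges. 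This gives $d_T(x,y) \le d_G(x,y)+2$. Some case analysis is needed when $C$ itself contains $x$ or $y$.

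\emph{Upper bound, and the main obstacle.} The upper bound rests on a cluster-diameter bound: every cluster has $G$-diameter at most about $3\lambda$. I expect this to be the hard step. The approach is to fix a tree $T'$ realizing the $\lambda$-embedding and, for $x,y$ in a cluster $C \subseteq L_i$, to look at the median $c$ of $s,x,y$ in $T'$. An $(x,y)$-path avoiding $B_{i-1}(s)$ has consecutive vertices at $T'$-distance at most $\lambda$, so its image passes within $\lambda/2$ of $c$. This yields a vertex $z$ on the path with $d_G(s,z) \ge i$ and $d_{T'}(z,c) \le \lambda/2$. Comparing $d_G(s,x)=i$ with $d_{T'}(s,x)=d_{T'}(s,c)+d_{T'}(c,x)$, and likewise for $y$ and $z$, should bound $d_{T'}(c,x)$ and $d_{T'}(c,y)$ by $\cO(\lambda)$. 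Getting exactly the constant $3\lambda$ requires reading the distortion of the embedding in its additive form. I would first try to reduce the paper's definition of embedding to the additive bound $d_{T'} \le d_G+\lambda$ on the relevant pairs, as in the original source.

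Given the diameter bound, we finish as follows. For $x,y$, follow the $T$-path through the ancestor representatives at level $k+1$ (or at level $k$), all of which lie in one cluster, and replace the single detour through that cluster by a $G$-path between two of its vertices. Tracing the $T$-path edge by edge in $G$, one gets $d_G(x,y) \le d_T(x,y)+3\lambda$.
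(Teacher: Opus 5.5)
The paper gives no proof of this lemma. It quotes Corollary~4 of \cite{ChDrNeRaVa}, which uses the same multiplicative notion of $\lambda$-embedding and, as far as I recall, the same layering tree you build. Your construction and your proof of $d_T(x,y)\le d_G(x,y)+2$ are fine.

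The gap is the cluster-diameter bound, and it is not a matter of constants. From $d_{T'}\le\lambda d_G$ you only get $d_{T'}(s,x)\le\lambda i$, so your comparison yields $d_{T'}(c,x)\le(\lambda-1)i+\lambda/2$, which grows with the layer index $i$. In fact your intermediate target is false. Take the $D\times D$ grid with $s=(0,0)$. The layer $L_{D-1}$ through $x=(0,D-1)$ and $y=(D-1,0)$ is a single cluster. The snake Hamiltonian path (row $0$ left to right, row $1$ right to left, and so on) is a $(2D-1)$-embedding. In it the median of $s,x,y$ is $x$ itself, and $d_{T'}(x,y)\ge(D-1)^2$, so $d_{T'}(c,y)=\Theta(\lambda^2)$. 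Nor can you reduce to an additive bound $d_{T'}\le d_G+\lambda$ on the pairs $(s,x)$ and $(s,y)$. A multiplicative embedding gives no additive control on far-apart pairs, and obtaining such control is what the lemma asserts.

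The missing idea is to measure nothing long in $T'$: use $d_{T'}\le\lambda d_G$ only on edges, and $d_G\le d_{T'}$ only on short pairs.
\begin{itemize}
\item Besides the path $P$ from $x$ to $y$ avoiding $B_{i-1}(s)$, take shortest paths $Q_x$ from $x$ to $s$ and $Q_y$ from $y$ to $s$.
\item Their images are $\lambda$-chains joining the three pairs from $\{x,y,s\}$, and $c$ lies on all three tree geodesics. So there are $p\in P$, $q\in Q_x$, $r\in Q_y$ within $\lambda/2$ of $c$. These are pairwise at $T'$-distance at most $\lambda$, hence at $G$-distance at most $\lambda$ by $d_G\le d_{T'}$.
\item Now argue in $G$ only. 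From $i\le d_G(s,p)\le d_G(s,q)+d_G(q,p)=i-d_G(x,q)+d_G(q,p)$ you get $d_G(x,q)\le\lambda$. Symmetrically $d_G(y,r)\le\lambda$. So $d_G(x,y)\le d_G(x,q)+d_G(q,r)+d_G(r,y)\le 3\lambda$.
\end{itemize}

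Your last step also needs care. A $T$-edge joins a vertex to a representative that may be at $G$-distance $\Delta+1$, where $\Delta$ is the largest cluster diameter. Tracing the $T$-path through the two representatives in the lowest common ancestor cluster $Z\subseteq L_k$ therefore costs $3\Delta$, that is, $9\lambda$. Instead, with $x\in L_i$ and $y\in L_j$ as in your lower bound, take $x',y'\in Z$ on shortest paths from $x$ and $y$ to $s$. Then
\[
d_G(x,y)\le(i-k)+\Delta+(j-k)\le d_T(x,y)+\Delta\le d_T(x,y)+3\lambda.
\]
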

	
Given $G$ and $T$ as above, our main contribution in this part is to prove that for an arbitrary profile $\pi$, for any centers $c \in C_\pi(G)$ and $c' \in C_\pi(T)$, their distance (in $G$) is at most linear in the hyperbolicity of $G$ and the additive distortion of the embedding. We prove it next:
	
\begin{myproposition}\label{prop:approx-center-hyp}
Given a $\delta$-hyperbolic graph $G=(V,E)$ and a tree $T=(V,F)$ such that $d_T(x,y) - p \le d_G(x,y) \le d_T(x,y) + q$ for any vertices $x,y \in V$ ($p,q \ge 0$), for an arbitrary profile $\pi$, $C_\pi(G) \subseteq B_{2(p+q)+4\delta+1}(c')$ holds for any $c' \in C_\pi(T)$.
\end{myproposition}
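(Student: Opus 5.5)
The plan is to exploit the classical structure of the weighted center of a tree, transfer it to $G$ through the two-sided distortion bounds, and finish with Lemma~\ref{lem:hyp-morse}. Fix $c \in C_\pi(G)$ and $c' \in C_\pi(T)$. The degenerate cases $|\Supp(\pi)| \le 1$ are handled as in the proof of Lemma~\ref{lem:weighted-center-decision}, so assume $|\Supp(\pi)| \ge 2$.

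\textbf{Step 1 (tree structure).} View $T$ as a continuous metric tree. There, $r_\pi^T$ is convex along every geodesic. Its unique minimum is attained at a point $x^*$ for which there are two critical vertices $u,w \in \Supp(\pi)$ satisfying
\[
x^* \in I_T(u,w), \qquad \pi(u)d_T(x^*,u)=\pi(w)d_T(x^*,w)=\rad_\pi^{\mathrm{cont}}(T).
\]
(If a single vertex were critical, $x^*$ could be moved toward it.) Since the vertex center $c'$ minimizes a convex function restricted to vertices, it is an endpoint of the edge containing $x^*$, so $d_T(c',x^*) \le 1$. This is where the additive $+1$ comes from. Moreover, every point $y$ of $T$ satisfies $d_T(y,u) \ge d_T(x^*,u)$ or $d_T(y,w)\ge d_T(x^*,w)$, and $c'$ lies within distance one of the $T$-path between $u$ and $w$.

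\textbf{Step 2 (transfer to $G$).} From $d_T(u,c')+d_T(c',w)\le d_T(u,w)+2$ (allowing for the edge step) and the bounds $d_T-p\le d_G\le d_T+q$, I get
\[
d_G(u,c')+d_G(c',w)\le d_G(u,w)+ O(p+q).
\]
In other words, $c'$ is an almost-geodesic point between $u$ and $w$ in $G$. Next I show that $c$ is also almost on a $(u,w)$-geodesic and lies on the correct side, namely
\[
d_G(c,u)\le d_G(c',u)+(p+q)+O(1), \qquad d_G(c,w)\le d_G(c',w)+(p+q)+O(1).
\]
The tool is optimality in both spaces: $r^G_\pi(c)\le r^G_\pi(c')$ and $r^T_\pi(c')\le r^T_\pi(c)$. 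Suppose $c$ were much farther from $u$ in $G$ than $c'$ is. Then $c$ is farther from $u$ in $T$ than $x^*$. Because $u$ is critical, this gives $\pi(u)d_T(c,u)>\rad_\pi(T)$, and the inequality $\pi(u)d_G(c,u) \le r_\pi^G(c)$ should then lead to a contradiction.

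\textbf{Main obstacle.} The weights are multiplicative, so an additive distortion $q$ becomes $\pi(v)q$ in the radius. A naive comparison of $\rad_\pi(G)$ with $\rad_\pi(T)$ therefore loses factors of $\pi$. To avoid this, I would compare everything vertex-by-vertex against the single critical pair $(u,w)$, and state each inequality as a distance inequality (for example, ``$d_G(c,u)\le d_G(c',u)+\dots$'') after dividing by $\pi(u)$. This way the additive error stays in units of distance. The delicate case is when $r^G_\pi(c')$ is attained by some third vertex $v$ rather than by $u$ or $w$. There I would use the tree side-property from Step~1: any vertex that beats $c'$ in $T$-distance to both $u$ and $w$ would contradict the optimality of $c'$ in $T$. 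Then I convert back to $G$ through the $p,q$ bounds.

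\textbf{Step 3 (conclusion).} Once $c$ and $c'$ are both almost-geodesic between $u$ and $w$, with $d_G(c,u)\gtrsim d_G(c',u)$ and $d_G(c,w)\gtrsim d_G(c',w)$ up to the errors above, I apply Lemma~\ref{lem:hyp-morse}. To do so, I replace $c'$ by a nearby vertex of $I_G(u,w)$ chosen via Lemma~\ref{lem:hyp-proj}, which costs $2\delta$, and take $\lambda=O(p+q)$. This yields
\[
d_G(c,c')\le 2(p+q)+4\delta+1
\]
once the constants are tracked carefully.
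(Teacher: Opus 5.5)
The gap is Step~2. Step~1 is fine; in fact $c'$ lies \emph{on} the $T$-path between $u$ and $w$, since $x^*$ does and $c'$ is an endpoint of the edge through $x^*$. But the whole proof rests on the two inequalities $d_G(c,u)\le d_G(c',u)+(p+q)+O(1)$ and $d_G(c,w)\le d_G(c',w)+(p+q)+O(1)$, and you never prove them: the argument stops at ``should then lead to a contradiction''.

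Concretely, from $d_G(c,u)>d_G(c',u)+K$ you get $r_\pi(c)\ge\pi(u)d_G(c,u)>\pi(u)\bigl(d_G(c',u)+K\bigr)$. To contradict $c\in C_\pi(G)$ you must exceed $r_\pi(c')=\pi(v)d_G(c',v)$ for a $G$-maximizer $v$ of $c'$. When $v\notin\{u,w\}$ and $\pi(v)\gg\pi(u)$, all you know is $\pi(v)d_G(c',v)\le r^T_\pi(c')+\pi(v)q$. Nothing in your argument bounds the error $\pi(v)q$ by $\pi(u)K$ with $K$ independent of the weights. Contradicting $c'\in C_\pi(T)$ instead is also unavailable: ``$c$ is far from $u$'' only says $r^T_\pi(c)$ is large, which is perfectly consistent with $c'$ being optimal in $T$.

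The tree side-property does not rescue this. It only concerns distances to $u$ and $w$, and it holds trivially because $c'$ is on the $u$--$w$ path in $T$; it gives no control over the heavy vertex $v$. The underlying problem is that $(u,w)$ is critical for $T$, whereas $c$ is governed by the critical vertices of $G$, which may be entirely different. Nothing in your plan links the two.

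Even granting Step~2, the constants do not close. Lemma~\ref{lem:hyp-proj} puts a vertex of $I_G(u,w)$ within $\lambda+2\delta$ of $c'$, where $\lambda$ is the slack of $c'$, not within $2\delta$. The error terms you list combine (for example via the four-point condition on $u,w,c,c'$) to about $2p+3q+2\delta+O(1)$, not $2(p+q)+4\delta+1$.

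The missing idea, which is the paper's argument, is to drop critical vertices altogether and test a third vertex against both centers at once.
\begin{enumerate}
\item Suppose $d_G(c,c')\ge 2\tau$ with $\tau=p+q+2\delta+1$, and choose $z\in I_G(c,c')$ with $d_G(c,z),d_G(z,c')\ge\tau$. The four-point condition on $c,c',z,x$ gives $d_G(z,x)\le\max\{d_G(c,x),d_G(c',x)\}+2\delta-\tau$ for \emph{every} vertex $x$.
\item Take $x$ maximizing $\pi(\cdot)d_G(z,\cdot)$. Optimality of $c$ in $G$ rules out the first term of the maximum, so $d_G(z,x)\le d_G(c',x)+2\delta-\tau$.
\item Take $y$ maximizing $\pi(\cdot)d_T(z,\cdot)$. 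Optimality of $c'$ in $T$, together with the distortion bounds, rules out the second term, so $d_G(z,y)\le d_G(c,y)+2\delta-\tau$.
\item Comparing $\pi(x)d_G(z,x)$ with $\pi(y)d_T(z,y)$ then shows that $z$ strictly beats $c$ in $G$ or strictly beats $c'$ in $T$.
\end{enumerate}
Every inequality compares distances to one fixed vertex, so the weight cancels. This is exactly how the multiplicative-weight obstacle you identified is avoided. The contradiction gives $d_G(c,c')\le 2\tau-1=2(p+q)+4\delta+1$ directly, without using the tree structure of $T$ or Lemma~\ref{lem:hyp-morse}.
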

\begin{proof}
	Set $\tau = p+q+2\delta+1$. Suppose for the sake of contradiction the existence of a center $c \in C_\pi(G)$ such that $d_G(c,c') \ge 2\tau$. Let $w \in I_G(c,c')$ such that $d_G(c,w),d_G(w,c') \ge \tau$. 
	
	We consider some vertex $x$ such that $\pi(x)d_G(w,x)$ is maximized. Since $G$ is $\delta$-hyperbolic, $d_G(c,c')+d_G(w,x) \le \max\{d_G(c,w)+d_G(c',x),d_G(c,x)+d_G(c',w)\}+2\delta$. By the choice of vertex $w$, $d_G(c,c') \ge \max\{d_G(c,w),d_G(c',w)\} + \tau$, and so, $d_G(w,x) \le \max\{d_G(c',x),d_G(c,x)\}+2\delta - \tau$. Furthermore, by the hypothesis $2\delta-\tau < 0$. Since $d_G(c,x) \le d_G(w,x)$ because $c \in C_\pi(G)$, it follows that $d_G(w,x) \le d_G(c',x) + 2\delta-\tau$. 
	
	We also consider some vertex $y$ such that $\pi(y)d_T(w,y)$ is maximized. Again since $G$ is $\delta$-hyperbolic, $d_G(c,c')+d_G(w,y) \le \max\{d_G(c,w)+d_G(c',y),d_G(c,y)+d_G(c',w)\}+2\delta$. Recall that by the choice of vertex $w$, $d_G(c,c') \ge \max\{d_G(c,w),d_G(c',w)\} + \tau$, and so, $d_G(w,y) \le \max\{d_G(c',y),d_G(c,y)\}+2\delta - \tau$. Since $c' \in C_\pi(T)$, we obtain $d_T(c',y) \le d_T(w,y)$, and so, $d_G(c',y) \le d_T(c',y) + q \le d_T(w,y) + q \le d_G(w,y) + p + q$. Furthermore, by the hypothesis $p+q+2\delta-\tau < 0$. Therefore, $d_G(w,y) \le d_G(c,y) + 2\delta - \tau$. 
	
	Suppose $\pi(x)d_G(w,x) \le \pi(y)d_T(w,y)$. Then, $d_T(w,y) \le d_G(w,y) +p \le d_G(c,y) + p + 2\delta - \tau < d_G(c,y)$. It implies $\pi(y)d_G(c,y) > \pi(y)d_T(w,y) \ge \pi(x)d_G(w,x)$, thus contradicting our assumption that $c \in C_\pi(G)$. Therefore, $\pi(x)d_G(w,x) > \pi(y)d_T(w,y)$. Then, $d_G(w,x) \le d_G(c',x) + 2\delta - \tau \le d_T(c',x)+q+2\delta-\tau < d_T(c',x)$. However, it implies $\pi(x)d_T(c',x) > \pi(x)d_G(w,x) > \pi(y)d_T(w,y)$, thus contradicting our assumption that $c' \in C_\pi(T)$.
\end{proof}
	
	Theorem~\ref{thm:hyp-weak} now follows from the combination of Lemma~\ref{lem:lambda-hyp}, Proposition~\ref{prop:approx-center-hyp}, and the following known algorithmic result:	
	
\begin{mylemma}[see Section 3 in~\cite{Me2}]
	Given a tree $T=(V,E)$, and an arbitrary profile $\pi$, one can solve the \textsc{Weighted Center} problem in $\cO(n)$ time.
\end{mylemma}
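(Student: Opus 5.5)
The approach is Megiddo's prune-and-search. We use two facts: the radius function $r_\pi$ on a tree is convex along every path, and each step can discard a constant fraction of the profile vertices using only linear work. We are free to assume $\Supp(\pi)$ has at least two vertices.

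\textbf{Step 1 (convexity and localization).} For any path $P$ of $T$ and any fixed $v$, the map $x \mapsto \pi(v)d_T(x,v)$ restricted to $P$ is convex and piecewise linear, since $d_T(\cdot,v)$ along a path decreases with slope $-1$ and then increases with slope $+1$. Hence $r_\pi$ is convex along paths, and every local minimum is global. For a vertex $c$, we compute $r_\pi(c)$ and $F_\pi(c)$ in $\cO(n)$ time by BFS. If $F_\pi(c)$ contains $c$, or meets two distinct components of $T-c$, then $c$ is a center. Otherwise all of $F_\pi(c)$ lies in one component $K$, and every center lies in $K$: moving from $c$ into any other component strictly increases the distance to all of $F_\pi(c)$.

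\textbf{Step 2 (prune and search).} We pick a centroid $c$ of the current search tree, so each component has at most half the vertices, and apply Step 1 to restrict the search to a component $K$. The vertices $v \notin K$ reach every $x \in K$ through $c$. So, writing $t = d_T(x,c)$, their contribution is the linear function $\ell_v(t) = \pi(v)(d_T(c,v)+t)$, and they can be treated as ``collapsed'' onto $c$. We pair up these collapsed vertices and compute, for each pair, the crossing point $t_{uv}$ of $\ell_u$ and $\ell_v$. For all $t$ on one side of $t_{uv}$ one member of the pair dominates the other, so the dominated one can be discarded. Taking the median $t^*$ of the crossing points, one test answering ``is some center at distance $\le t^*$ from $c$ in $K$?'' lets us discard one member from half of the pairs. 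This gives a constant fraction of the collapsed vertices per round.

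\textbf{Step 3 (accounting).} Megiddo interleaves the two kinds of shrinking. Centroid steps halve the search region; median steps remove a constant fraction of the irrelevant weighted points. Each round costs time linear in the current size, so the total is a geometric series, which is $\cO(n)$. The remaining points, kept as pairs (vertex, offset), are carried into recursive calls on the smaller subtree.

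\textbf{Main obstacle.} The delicate step is the test at $t^*$, together with keeping the whole recursion linear rather than $\cO(n\log n)$. I would run the test using the convexity of Step 1: find the vertex on the path toward the current best direction at distance $t^*$, evaluate $r_\pi$ there and at its neighbor in $\cO(\text{size})$ time, and read off the sign of the slope. Because distances are integral, no issue arises from $t^*$ falling between vertices; we round down and up to the nearest integers.
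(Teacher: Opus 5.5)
The paper gives no proof of this lemma and simply cites Megiddo, so your outline has to stand on its own. Its skeleton is the right prune-and-search: a centroid, collapsing the points behind $c$ into lines $\ell_v(t)=\pi(v)(d_T(c,v)+t)$, pairing, and one test per round. The gap is in the step you yourself flag, the test at $t^*$, which does not work as proposed. After a centroid step, $K$ is an arbitrary subtree, not a path. The vertices of $K$ at distance $t^*$ from $c$ form a level set of unbounded size, so there is no single ``path toward the current best direction''. The slope of $r_\pi$ at one such vertex says nothing about centers on other branches, and walking a descent path for $t^*$ steps costs $t^*$ evaluations, not one.

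Worse, even with an oracle for ``some center lies within distance $t^*$ of $c$'', discarding the member dominated on that side is unsound on vertices. Take $K$ a path and weights $2,1,1$ at distances $1$ and $5$ behind $c$ and $12$ inside $K$. Then $r_\pi(x)=\max\{2t+2,\,t+5,\,12-t\}$ for $t=d_T(c,x)$. The unique center is $t=3=t_{uv}$, with value $9$. The oracle answers yes and you drop $2t+2$. Now the vertex at $t=4$ also has value $9$, although $r_\pi$ there is $10$, so your recursion may return a non-center.

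The missing idea is to compare \emph{radius values}, not distances.
\begin{itemize}
\item For a pair with $\pi(u)>\pi(v)$, let $\rho_{uv}=\ell_u(t_{uv})=\ell_v(t_{uv})$, and take the median $\rho^*$ of these values.
\item Decide whether $\rad_\pi(T)\le\rho^*$. This asks whether the balls of radii $\lambda_x=\lfloor\rho^*/\pi(x)\rfloor-a_x$ around the attachment vertices $y_x$ of all current points share a vertex of the current region. By the Helly property of subtrees, this is the pairwise condition $d_T(y_x,y_{x'})\le\lambda_x+\lambda_{x'}$. One diameter-like bottom-up pass checks it in time linear in the current size.
\item If $\rad_\pi(T)\le\rho_{uv}$, drop $u$. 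Every center satisfies $d_T(c,\cdot)\le t_{uv}$, where $\ell_v\ge\ell_u$. Every vertex beyond $t_{uv}$ still has value at least $\ell_v>\rho_{uv}\ge\rad_\pi(T)$ without $u$.
\item If $\rad_\pi(T)>\rho_{uv}$, drop $v$. Wherever $\ell_v$ exceeds $\ell_u$, it is below $\rho_{uv}<\rad_\pi(T)$, so $r_\pi$ is unchanged on the whole region.
\end{itemize}
In both cases the set of centers is preserved exactly.

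Two further repairs are needed.
\begin{itemize}
\item Your accounting fails if the centroid is taken in the region alone. Points collapsed at an early centroid that stays in the kept component are never paired again. $\Theta(n)$ of them can then be rescanned for $\Theta(\log n)$ rounds, giving $\Theta(n\log n)$. Count each collapsed point as a pendant leaf at its attachment vertex when choosing the centroid (or pair all points sharing an attachment vertex). Then every round re-collapses about half of the current size and deletes a constant fraction of it.
\item Keep $c$ in the region. Your claim that every center lies in $K$ is false: $c$ can be the unique center even when $F_\pi(c)$ lies in one component. For example, on the path $a,c,b$ with $\pi(a)=0.9$ and $\pi(b)=1$, the only center is $c$.
\end{itemize}
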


\subsection{Getting closer to the center}\label{hyp-strong}
	
	The main result of this section is as follows:

\begin{mytheorem}\label{thm:hyp-strong}
	Given a $\delta$-hyperbolic graph $G=(V,E)$, an arbitrary profile $\pi$, and a vertex $c_1$ and a radius $\rho$ such that $C_\pi(G) \subseteq B_\rho(c_1)$, one can output in $\cO(\rho m\log{n})$ time a vertex $c_2$ such that $C_\pi(G) \subseteq B_{124\delta +21}(c_2)$.
\end{mytheorem}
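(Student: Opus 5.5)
The plan is to reduce the problem to $\cO(\rho\log{n})$ instances of \textsc{Weighted Center} with \emph{binary} profiles, and to solve each one approximately with the $\cO(m)$-time algorithm of~\cite{ChDrEsHaVa}. That algorithm returns, on a $\delta$-hyperbolic graph and for a binary profile, a vertex at distance $\cO(\delta)$ from the binary center. The key observation is that a center $c$ of $G$ for $\pi$ is characterized by a threshold. Let $R=\rad_\pi(G)$. For each integer $k$, let $A_k=\{v : \pi(v) > R/(k+1)\}$; these are the vertices whose distance to $c$ must be at most $k$. Then $c$ lies in $\bigcap\{B_{\lfloor R/\pi(v)\rfloor}(v) : v\in\Supp(\pi)\}$. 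Since $C_\pi(G)\subseteq B_\rho(c_1)$, the only relevant distances are those in a window of width $2\rho$ around $d(c_1,v)$. So every $v$ falls into one of $\cO(\rho)$ ``layers'' according to the value $\lfloor R/\pi(v)\rfloor - d(c_1,v)\in[-\rho,\rho]$. Vertices with a larger value than $\rho$ impose no constraint on $B_\rho(c_1)$, and vertices with a smaller value than $-\rho$ cannot exist.

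First I need $R$, or rather a threshold $\tau$ playing its role. I would run the binary search of Lemma~\ref{lem:weighted-center-decision} over the candidate values $\pi(v)k$ with $k\in[d(c_1,v)-\rho,d(c_1,v)+\rho]$. This costs $\cO(\log n)$ rounds of median finding, and each round calls an approximate decision oracle. The oracle must be \emph{approximate}: given $\tau$, it either certifies $R>\tau$ or finds a vertex $c$ with ``$d(c,v)\le \tau/\pi(v)+\cO(\delta)$ for all $v$''. To build it, for each offset $j\in[-\rho,\rho]$ I take the binary profile on $L_j=\{v : \lfloor\tau/\pi(v)\rfloor-d(c_1,v)=j\}$. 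The subproblem is then to find a vertex $z$ with $d(z,v)\le d(c_1,v)+j$ for all $v\in L_j$, that is, a point of $\bigcap_{v\in L_j}B_{d(c_1,v)+j}(v)$. By Lemma~\ref{lem:hyp-proj} and Lemma~\ref{lem:hyp-morse}, these ball intersections behave like balls around a point on the geodesics from $c_1$ toward the eccentric vertices of $L_j$. Concretely, I expect such an intersection to lie within $\cO(\delta)$ of the set of vertices $x$ with $d(c_1,x)\le \rho$ and $e_{L_j}(x)\le e_{L_j}(c_1)+j$, where $e_{L_j}$ denotes the binary eccentricity. The approximate binary center from~\cite{ChDrEsHaVa}, together with the fact that eccentricity in hyperbolic graphs grows almost linearly along geodesics away from the center, lets me locate a vertex $z_j$ and a radius $s_j$ for each $j$. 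These satisfy $\bigcap_{v\in L_j}B_{d(c_1,v)+j}(v)\cap B_\rho(c_1)\subseteq B_{s_j+\cO(\delta)}(z_j)$, and conversely $B_{s_j-\cO(\delta)}(z_j)$ is contained in the relaxed intersection. Each $j$ costs $\cO(m)$, hence $\cO(\rho m)$ per oracle call and $\cO(\rho m\log n)$ overall, which matches the claimed bound.

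Finally, I combine the $\cO(\rho)$ balls $B_{s_j}(z_j)$. They pairwise intersect, up to $\cO(\delta)$ slack, when $\tau\ge R$, so Lemma~\ref{lem:hyp-helly-gap} yields a common point after enlarging all radii by $2\delta$. I would compute such a point with one more binary-profile call: by a BFS from each $z_j$, or better by a single multi-source step treating the $z_j$ with offsets. The output $c_2$ is this point for $\tau=R$. To prove $C_\pi(G)\subseteq B_{124\delta+21}(c_2)$, I would argue as in Proposition~\ref{prop:approx-center-hyp}. Take a center $c$ far from $c_2$ and a vertex $w\in I(c,c_2)$ in the middle. The four-point condition then shows that $w$ strictly improves on $c$ unless the constraints are violated by more than the accumulated $\cO(\delta)$ slack.

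The main obstacle is the layer step, namely bounding the intersection of balls with \emph{different} radii by a single ball using only a binary-center oracle. The binary oracle handles equal radii, and I shift the radii by $d(c_1,v)$, which is legitimate only up to the Morse/projection error. Keeping this additive error at a constant multiple of $\delta$, rather than letting it grow with $\rho$, is where the constants like $124\delta$ come from. I would attack it first by proving a lemma: for $z$ the approximate binary center of $L_j$ and $t=e_{L_j}(z)$, the set $\{x : e_{L_j}(x)\le t+s\}$ is sandwiched between $B_{s-\cO(\delta)}(z)$ and $B_{s+\cO(\delta)}(z)$.
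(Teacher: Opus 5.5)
Your overall architecture (binary search on $\tau$, one binary-center call per layer, intersect, then a midpoint argument) can be made to work. The gap is in the layer step, which you rightly flag as the crux: it fails as you formulate it. Grouping by $j=\lfloor\tau/\pi(v)\rfloor-d(c_1,v)$ and using the binary eccentricity $e_{L_j}$ of the \emph{original} support vertices does not give an equal-radius problem. Membership in $\bigcap_{v\in L_j}B_{d(c_1,v)+j}(v)$ is governed by $\max_{v\in L_j}(d(x,v)-d(c_1,v))$. Your sublevel set is governed by $\max_{v}d(x,v)-\max_{v}d(c_1,v)$. These differ by $\Theta(\rho)$ already on a path ($\delta=0$).

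To see this, put $v_1$ and $v_2$ on opposite sides of $c_1$ at distances $a$ and $b\ge a+2\rho$, with $\pi(v_1)=\tau/a$ and $\pi(v_2)=\tau/b$. Then $c_1$ is the unique center, $L_0=\{v_1,v_2\}$, and the intersection is $\{c_1\}$. Yet the vertex $x$ at distance $\rho$ from $c_1$ towards $v_2$ satisfies $e_{L_0}(x)=b-\rho\le e_{L_0}(c_1)$, while $d(x,v_1)=a+\rho$. So the ``converse'' half of your sandwich (that $B_{s_j-\cO(\delta)}(z_j)$ lies in the relaxed intersection) is off by $\Theta(\rho)$, and your procedure may output a vertex at distance about $\rho$ from the unique center.

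Your sandwich lemma is itself true (it is Lemma~\ref{lem:almost-unimodal} combined with Lemma~\ref{lem:approx-center}); it is just applied to the wrong profile. The layering is also wrong for support vertices inside $B_\rho(c_1)$, whose radii $d(c_1,v)+j$ vary within $L_j$.

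The missing idea is to \emph{project} the support onto the ball. Set $\ell(v)=d(v,B_\rho(c_1))$ and pick $q(v)\in S_\rho(c_1,v)$ (or $q(v)=v$ if $\ell(v)=0$). Layer by $i=\lambda_\tau(v)-\ell(v)\in[0,2\rho-1]$, where $\lambda_\tau(v)=\lfloor\tau/\pi(v)\rfloor$, after discarding the vertices with $\lambda_\tau(v)\ge\ell(v)+2\rho$. For $x\in B_\rho(c_1)$, Lemma~\ref{lem:hyp-proj} and the triangle inequality give $d(x,q(v))+\ell(v)-2\delta\le d(x,v)\le d(x,q(v))+\ell(v)$. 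So layer $i$ becomes the uniform-radius binary problem $d(x,q(v))\le i$ on $M_i=\{q(v)\}$ (Claim~\ref{claim:rad-ci}). Your sandwich lemma, applied to this binary profile, gives a ball $C^i$ with two properties. It contains every $z\in B_\rho(c_1)$ with $r_\pi(z)\le\tau$. All of its points satisfy $d(\cdot,v)\le\lambda_\tau(v)+18\delta+3$, by the triangle inequality through $q(v)$ (Claim~\ref{claim:dist-cstar-support}).

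With this repair, your endgame is a genuinely different and simpler route than the paper's. The paper uses a three-outcome oracle with a diameter test on $C^*=\bigcap_i C^i$, justified by Claims~\ref{claim-below-alpha} and~\ref{claim-close-centre} via the Helly gap and the Morse lemma. You only need the following:
\begin{itemize}
\item a two-outcome oracle: certify $\rad_\pi(G)>\tau$, or return any point of $C^*$;
\item a binary search that keeps the point returned at the smallest tested $\tau$, which is at most $\rad_\pi(G)$ because certifications are always correct and a point is returned at $\tau=\rad_\pi(G)$.
\end{itemize}
The search need not literally stop at $\tau=R$, contrary to what you wrote. Your midpoint argument from Proposition~\ref{prop:approx-center-hyp} then yields $C_\pi(G)\subseteq B_{40\delta+7}(c_2)$. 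Finally, Lemma~\ref{lem:hyp-helly-gap} is unnecessary in your combination step: for $\tau\ge\rad_\pi(G)$ every center lies in every $C^i$, so you can just intersect them by BFS.
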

Note that Theorem~\ref{thm:hyp-strong}, in combination with Theorem~\ref{th:hyp}(\ref{item:hyp-1}) implies Theorem~\ref{th:hyp}(\ref{item:hyp-2}).
To prove Theorem~\ref{thm:hyp-strong}, roughly, our strategy consists in projecting the support of the profile onto the ball $B_\rho(c_1)$. Then, roughly, we reduce the computation of an approximate center $c_2$ for $\pi$ to that of approximate centers for $\cO(\rho \log{n})$ binary profiles.
For that, we will also need some results on the \textsc{Weighted Center} problem for binary profiles:

\begin{mylemma}[see Proposition $6$ in~\cite{ChDrEsHaVa}]\label{lem:approx-center}
    Given a $\delta$-hyperbolic graph $G=(V,E)$, and a binary profile $\pi$, one can output in $\cO(m)$ time a vertex $c^*$ such that $C_\pi(G) \subseteq B_{5\delta+1}(c^*)$.
\end{mylemma}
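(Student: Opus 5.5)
The plan is to use the classical double-sweep (two breadth-first searches) strategy, restricted to the support $S=\Supp(\pi)$, since for a binary profile $r_\pi(u)=\max\{d(u,s): s\in S\}$ is just the eccentricity of $u$ with respect to $S$. The algorithm is as follows. Run a BFS from an arbitrary vertex $z$ and let $x\in S$ be a vertex of $S$ farthest from $z$. Run a BFS from $x$ and let $y\in S$ be a vertex of $S$ farthest from $x$. Retrieve a shortest $(x,y)$-path from the BFS tree of $x$, and output the vertex $c^*$ on it with $d(c^*,y)=\lceil d(x,y)/2\rceil$. All of this clearly takes $\cO(m)$ time, so the work is entirely in the distance analysis. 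Throughout, write $r=\rad_\pi(G)$ and $D=d(x,y)$.

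\textbf{Step 1 (the double sweep nearly realizes the diameter of $S$).} I would first show that $d(x,y)\ge d(s,s')-2\delta$ for all $s,s'\in S$, and in particular that $d(s,y)\le D+2\delta$ for every $s\in S$. This is the standard argument, obtained by applying the four-point condition to $z,x,s,s'$. One uses $d(z,s),d(z,s')\le d(z,x)$ because $x$ is farthest from $z$ in $S$. One also uses $d(x,s),d(x,s')\le D$ because $y$ is farthest from $x$ in $S$. Working out which pair sum is largest gives the bound, possibly with a slightly worse constant such as $4\delta$. I expect this case analysis to be the main obstacle, because the final constant $5\delta+1$ depends on getting this loss down to $2\delta$.

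\textbf{Step 2 (upper bound on $r$ via $c^*$).} For any $s\in S$, apply the four-point condition to $s,c^*,x,y$:
\[
d(s,c^*)+D\le\max\{d(s,x)+d(c^*,y),\,d(s,y)+d(c^*,x)\}+2\delta .
\]
Step 1 gives $d(s,x)\le D$ and $d(s,y)\le D+2\delta$. Since $c^*$ is the midpoint, this yields $d(s,c^*)\le \lceil D/2\rceil+4\delta$. Hence
\[
\lceil D/2\rceil-\text{(small)}\;\le\; r\;\le\; r_\pi(c^*)\le D/2+4\delta+1 ,
\]
where the lower bound $D\le 2r$ holds trivially since $x,y\in S$.

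\textbf{Step 3 (any center is near $c^*$).} Let $c\in C_\pi(G)$. Then $d(c,x),d(c,y)\le r\le D/2+4\delta+1$, so $d(x,c)+d(c,y)\le D+\lambda$ with $\lambda\approx 8\delta+2$, which is too weak if used directly. I would therefore instead exploit the symmetry. Both $d(c,x)$ and $d(c,y)$ lie between $D-r\ge D/2-4\delta-1$ and $r$, so $c$ is nearly equidistant from $x$ and $y$. Compare $c$ with the midpoint $c^*\in I(x,y)$ via the four-point condition on $c,c^*,x,y$:
\[
d(c,c^*)+D\le\max\{d(c,x)+d(c^*,y),\,d(c,y)+d(c^*,x)\}+2\delta\le r+\lceil D/2\rceil+2\delta .
\]
This gives $d(c,c^*)\le r-\lfloor D/2\rfloor+2\delta$. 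To conclude $d(c,c^*)\le 5\delta+1$, I need $r\le D/2+3\delta+\tfrac12$. So the real task is to sharpen Step 2. Rather than going through $c^*$, I would bound $r$ by applying the four-point argument to $c$ itself, or equivalently use Lemma~\ref{lem:hyp-helly-gap} on the balls $B_{\lceil D/2\rceil}(s)$. Since any two of these intersect up to the $2\delta$ loss of Step 1, the Helly gap lemma should produce a vertex with $S$-eccentricity at most $D/2+3\delta+1$. If the constants do not close exactly, I would fall back on tuning the choice of $c^*$ (taking the floor instead of the ceiling), accepting a constant within $O(1)$ of $5\delta+1$.
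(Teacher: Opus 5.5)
The paper does not prove this lemma; it imports it from \cite{ChDrEsHaVa}, so there is no in-paper argument to compare with. Your final route is sound and gives exactly $5\delta+1$: double sweep restricted to $\Supp(\pi)$, the four-point comparison of a center $c$ with the midpoint $c^*$ in Step~3, and Lemma~\ref{lem:hyp-helly-gap} to bound $\rad_\pi(G)$. The radius bound it needs is essentially the fact the paper later quotes from \cite{ChDrEsHaVa}, namely $d(u,w)\ge 2\rad_\pi(G)-4\delta-1$, combined with your Step~1. However, the last step does not close as you wrote it, and should be fixed as follows.

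\textbf{Step~1 is easy.} It needs no case analysis. The four-point condition for $z,x,s,s'$ reads $d(z,x)+d(s,s')\le\max\{d(z,s)+d(x,s'),\,d(z,s')+d(x,s)\}+2\delta\le d(z,x)+D+2\delta$. Hence $d(s,s')\le D+2\delta$ directly.

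\textbf{Step~2 is superfluous.} The bound on $r_\pi(c^*)$ it gives is correct but is never used, so you can drop it.

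\textbf{The actual problem is the last step.} The balls $B_{\lceil D/2\rceil}(s)$ need not pairwise intersect. Moreover, the bound ``$D/2+3\delta+1$'' you expect from the Helly-gap lemma is too weak. Combined with your correct inequality $d(c,c^*)\le \rad_\pi(G)-\lfloor D/2\rfloor+2\delta$, it only yields $5\delta+\tfrac32$. For half-integral $\delta$ (e.g.\ $\delta=\tfrac12$) this is one unit worse than required. Your fallback of accepting a constant ``within $O(1)$ of $5\delta+1$'' would therefore not prove the stated lemma.

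\textbf{The fix is to choose the radius precisely.} First assume w.l.o.g.\ that $2\delta\in\mathbb{Z}$. This is harmless because all four-point sums in a graph are integers. Set $\rho=\lceil D/2+\delta\rceil$.
\begin{itemize}
\item Since $d(s,s')\le D+2\delta\le 2\rho$, the balls $B_\rho(s)$, $s\in\Supp(\pi)$, pairwise intersect.
\item Lemma~\ref{lem:hyp-helly-gap} then gives $\rad_\pi(G)\le \rho+2\delta$.
\item Plugging this into your Step~3 inequality gives $d(c,c^*)\le \lceil D/2+\delta\rceil-\lfloor D/2\rfloor+4\delta$.
\item If $D$ is even, this equals $\lceil\delta\rceil+4\delta\le 5\delta+\tfrac12$.
\item If $D$ is odd, this equals $\lceil \delta+\tfrac12\rceil+4\delta\le 5\delta+1$.
\end{itemize}
Equivalently, the bound you identified as necessary, $\rad_\pi(G)\le D/2+3\delta+\tfrac12$, follows from Step~1 together with $\operatorname{diam}(\Supp(\pi))\ge 2\rad_\pi(G)-4\delta-1$.

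\textbf{A small point.} The statement tacitly requires $\Supp(\pi)\neq\emptyset$; otherwise $C_\pi(G)=V$. Your algorithm already assumes this when it picks $x$, and you should state it.
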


Consider the important special case of the eccentricity function, \textit{i.e.}, where $\pi(v) = 1$ for every vertex $v$.
Then, it is known that for any node $v$ in a tree $T$, $\ecc(v) = d(v,C(T))+\rad(T)$.
Furthermore, a relaxation of this result was proved in~\cite{DrGu_hyp}, Theorem 5, for all $\delta$-hyperbolic graphs. 
We prove it next for any binary profile.
    
\begin{mylemma}\label{lem:almost-unimodal}
    Let $G=(V,E)$ be $\delta$-hyperbolic, and let $\pi$ be a binary profile on $G$.
    For any vertex $v \in V$, $d(v,C_\pi(G)) + \rad_\pi(G) - 6\delta - 1 \le r_\pi(v) \le d(v,C_\pi(G)) + \rad_\pi(G)$.
\end{mylemma}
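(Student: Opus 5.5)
The upper bound is immediate from the triangle inequality, so most of the work is the lower bound. Pick $c \in C_\pi(G)$ with $d(v,c) = d(v,C_\pi(G))$. For every $x \in \Supp(\pi)$ we have $d(v,x) \le d(v,c) + d(c,x) \le d(v,C_\pi(G)) + \rad_\pi(G)$. Since $\pi$ is binary, taking the maximum over $x \in \Supp(\pi)$ gives $r_\pi(v) \le d(v,C_\pi(G)) + \rad_\pi(G)$. (If $\Supp(\pi)$ is empty or a singleton, both bounds are trivial, so assume $|\Supp(\pi)| \ge 2$.)

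For the lower bound, write $r = r_\pi(v)$ and $R = \rad_\pi(G)$, and assume $r < d(v,C_\pi(G)) + R - 6\delta - 1$, so that $d(v,C_\pi(G)) > r - R + 6\delta + 1$. The plan is to construct a vertex $w$ with $r_\pi(w) \le R$ (hence $w \in C_\pi(G)$) and $d(v,w) \le r - R + 6\delta + 1$, which contradicts the assumption.

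To build $w$, fix a center $c$ and take a geodesic from $v$ to $c$. Let $w$ be the vertex on it at distance $k := \min\{d(v,c),\, r - R + \alpha\}$ from $v$, where the constant $\alpha = \cO(\delta)$ is tuned at the end. If $k = d(v,c)$, then $w = c$ and $d(v,C_\pi(G)) \le r - R + \alpha$ directly. Otherwise $d(v,w) = r - R + \alpha$ and $d(w,c) = d(v,c) - d(v,w)$.

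For any $x \in \Supp(\pi)$ we have $d(v,x) \le r$ and $d(c,x) \le R$. Apply the four-point condition to $v,c,w,x$:
\[
d(v,c) + d(w,x) \le \max\{d(v,w) + d(c,x),\; d(w,c) + d(v,x)\} + 2\delta.
\]
The second term in the max gives $d(w,x) \le d(v,x) - d(v,w) + 2\delta \le r - (r - R + \alpha) + 2\delta = R - \alpha + 2\delta$. The first term gives $d(w,x) \le R - d(w,c) + 2\delta$. So in both cases
\[
d(w,x) \le R + 2\delta - \min\{\alpha,\, d(w,c)\}.
\]
When $d(w,c) \ge 2\delta$ and $\alpha \ge 2\delta$, this yields $r_\pi(w) \le R$, so $w$ is a center at distance at most $r - R + \alpha$ from $v$.

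The main obstacle is the remaining case $d(w,c) < 2\delta$, where $w$ is not guaranteed to be a center. In that case $d(v,c) < r - R + \alpha + 2\delta$, so $c$ itself is close to $v$. Taking $\alpha = 2\delta$ then gives $d(v,C_\pi(G)) \le d(v,c) < r - R + 4\delta$, which is within the claimed $6\delta + 1$. The slack of $2\delta + 1$ absorbs integrality: $w$ must be an actual vertex, so I round $k$ up to the ceiling of $r - R + 2\delta$ when $r - R$ is not an integer (it is an integer for binary profiles, but checking it costs nothing). I would double-check that the case split does not lose another $2\delta$; if it does, use Lemma~\ref{lem:hyp-helly-gap} on the pairwise intersecting balls $B_{R}(x)$, $x \in \Supp(\pi)$, together with $B_{r-R}(v)$, to get a vertex in all $B_{R+2\delta}(x)$ within $r - R + 2\delta$ of $v$, and then repeat the geodesic-shortening argument above to absorb the extra $2\delta$, which still totals $6\delta + 1$.
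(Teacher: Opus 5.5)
Your proof is correct, and it takes a different route from the paper's. The paper applies the four-point condition once, to $v$, an arbitrary center $c$, and a diametral pair $u,w$ of $\Supp(\pi)$. This gives $d(v,c)+d(u,w) \le r_\pi(v)+\rad_\pi(G)+2\delta$. It then imports the diameter--radius inequality $d(u,w) \ge 2\rad_\pi(G)-4\delta-1$ from \cite{ChDrEsHaVa} (Proposition~4). The loss of $6\delta+1$ is the sum of these two error terms.

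Because $c$ is arbitrary there, the paper actually obtains the stronger inclusion $C_\pi(G) \subseteq B_{r_\pi(v)-\rad_\pi(G)+6\delta+1}(v)$: it bounds the distance from $v$ to \emph{every} center.

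Your geodesic-shortening argument is self-contained. It needs neither the diametral pair nor the external inequality, only the four-point condition on $v,c,w,x$ for each support vertex $x$. It also yields the sharper bound $d(v,C_\pi(G)) \le r_\pi(v)-\rad_\pi(G)+4\delta$, plus less than $1$ after rounding. However, it controls only the nearest center: in the subcase where $w$ turns out to be a center, your original $c$ may still be far from $v$. Bounding the nearest center is exactly what the lemma asserts, and it is all that is used later, in Claim~\ref{claim-above-alpha}.

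Two small corrections to the write-up. First, the subcases $d(w,c)\ge 2\delta$ and $d(w,c)<2\delta$ are exhaustive, so your closing hedge invoking Lemma~\ref{lem:hyp-helly-gap} is unnecessary and can be deleted. Second, the only possible non-integrality is in $2\delta$, not in $r-R$, which is always an integer for binary profiles. So the rounding should be stated as taking $k=\min\{d(v,c),\lceil r-R+2\delta\rceil\}$, which costs less than $1$ in the final bound.
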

\begin{proof}
	The upper bound holds for any graph $G$ and any binary profile. For the lower bound, let $c \in C_\pi(G)$ be arbitrary, and let $u,w \in \Supp(\pi)$ be maximizing $d(u,w)$. Since $G$ is $\delta$-hyperbolic, $d(v,c) + d(u,w) \le \max\{d(v,u)+d(c,w),d(v,w)+d(c,u)\} + 2\delta$, and so, $d(v,c) + d(u,w) \le r_\pi(v) + r_\pi(c) + 2\delta = r_\pi(v) + \rad_\pi(G) + 2\delta$. By~\cite{ChDrEsHaVa}, Proposition 4, $d(u,w) \ge 2 \rad_\pi(G) - 4\delta-1$. Therefore, $r_\pi(v) \ge d(v,c) + d(u,w) - \rad_\pi(G) - 2\delta \ge d(v,C_\pi(G)) + d(u,w) - \rad_\pi(G) - 2\delta \ge d(v,C_\pi(G)) + (2 \rad_\pi(G) - 4\delta-1) - \rad_\pi(G) - 2\delta = d(v,C_\pi(G)) + \rad_\pi(G) - 6\delta -1$.
\end{proof}

For simplicity of the presentation, we assume in what follows that the hyperbolicity value $\delta$ is known to us.
In practice, any approximation $\delta' \ge \delta$ can be used, but then the distance of $c_2$ to $C_\pi(G)$ also depends on the approximation ratio.
\begin{proof}[Proof of Theorem~\ref{thm:hyp-strong}]
    We shall assume throughout the proof that $\rho > 124 \delta + 21$, since otherwise it suffices to output $c_2 = c_1$.
    Furthermore, we shall also assume that $\Supp(\pi)$ contains at least two vertices: otherwise, either $\Supp(\pi) = \emptyset$, and so, $C_\pi(G) = V$ holds; or $\Supp(\pi) = \{v\}$ for some $v \in V$, and then, $C_\pi(G) = \{v\}$ holds.    
    By Lemma~\ref{lem:weighted-center-decision} (slightly modified), we only need to present an $\cO(\rho m)$-time algorithm such that, for any value $\tau$: it either asserts $\rad_{\pi}(G) < \tau$; or it asserts $\rad_{\pi}(G) > \tau$; or it outputs some $c^*$ such that $C_{\pi}(G) \subseteq B_{124\delta+21}(c^*)$.

    For every $v \in \Supp(\pi)$, let $q(v)$ be a closest-to-$v$ vertex in the ball $B_\rho(c_1)$, and let $\ell(v) = d(v,q(v))$.
    -- Roughly, we project the support of $\pi$ onto $B_\rho(c_1)$. --
    So, in particular, if $d(c_1,v) \le \rho$, then we set $q(v) = v$, and $\ell(v) = 0$.
    If $d(c_1,v) > \rho$, then we set $\ell(v) = d(v,c_1)-\rho$, and we choose an arbitrary $q(v) \in S_\rho(c_1,v)$.
    All the values $q(v),\ell(v)$, for $v \in \Supp(\pi)$, can be computed by dynamic programming on a shortest-path tree rooted at $c_1$.

    Let $\tau$ be some value to be compared to $\rad_\pi(G)$.
    For each $v \in \Supp(\pi)$, we compute $\lambda_\tau(v) = \left\lfloor \frac{\tau}{\pi(v)} \right\rfloor$: the maximum distance between $v$ and any $c$ such that $r_\pi(c) \le \tau$. If $\lambda_\tau(v) < \ell(v)$ for some $v \in \Supp(\pi)$, then as $C_\pi(G) \subseteq B_\rho(c_1)$, we assert $\rad_{\pi}(G) > \tau$. If $\lambda_\tau(v) \ge \ell(v)+2\rho$, then as $C_\pi(G) \subseteq B_\rho(c_1) \subseteq B_{\lambda_\tau(v)}(v)$, we can discard $v$. Furthermore, if $\lambda_\tau(v') \ge \ell(v')+2\rho$ for \emph{every} $v' \in \Supp(\pi)$, then it implies $r_\pi(c_1) < \tau$, and so, we assert $\rad_{\pi}(G) < \tau$. See Claim~\ref{claim:large-lambda}. Thus, we now assume $\ell(v) \le \lambda_\tau(v) \le \ell(v)+2\rho-1$ for each $v \in \Supp(\pi)$. To avoid some degenerate cases in our analyzes, we further assume $\lambda_\tau(v) > 0$ for each $v \in \Supp(\pi)$. Indeed, assume $\lambda_\tau(v) = 0$ for some $v$. If $v \notin B_\rho(c_1)$, then, as $\ell(v) > \lambda_\tau(v)$, we assert $\rad_\pi(G) > \tau$. Otherwise, either $r_\pi(v) \le \tau$, or $r_\pi(v) > \tau$. In the former case, necessarily $C_\pi(G) = \{v\}$. In the latter case, we assert $\rad_\pi(G) > \tau$.

    For every $i$ such that $0 \le i \le 2\rho-1$, we consider $M_i = \{ q(v) : \lambda_\tau(v) = \ell(v)+i\}$.
    Roughly, if $M_i \ne \emptyset$, then we want to construct some superset $C^i$ of $C_\pi(G)$. 
    More specifically, let us define a binary profile $\pi_i$ such that $\pi_i(v) = 1$ if and only if $v \in M_i$.
    By Lemma~\ref{lem:approx-center}, a vertex $c^i$ such that $C_{\pi_i}(G) \subseteq B_{5\delta+1}(c^i)$ can be computed in $\cO(m)$ time.
    If $r_{\pi_i}(c^i) > i + 7\delta+1$, then we assert $\rad_\pi(G) > \tau$. 
    Roughly, this is because if $\rad_\pi(G) \le \tau$, then we can prove $\rad_{\pi_i}(G) \le i+2\delta$ (and so, $r_{\pi_i}(c^i) \le i + 7\delta+1$) using Lemma~\ref{lem:hyp-proj}. See Claim~\ref{claim:rad-ci}.
    Otherwise, we set $C^i = B_{i-r_{\pi_i}(c^i)+18\delta +3}(c^i) \cap B_\rho(c_1)$. 

    Let $C^* = \bigcap\{ C^i : M_i \ne \emptyset  \}$. 
    If $C^* = \emptyset$, then we assert $\rad_{\pi}(G) > \tau$.
    Roughly, this is because if $\rad_\pi(G) \le \tau$, then we can use Lemma~\ref{lem:almost-unimodal} to prove $C_\pi(G) \subseteq C^*$.
    See Claim~\ref{claim-above-alpha}.
    Else, let $c^* \in C^*$ be arbitrary. 
    If $C^* \subseteq B_{124\delta + 21}(c^*)$, then we output $c^*$.
    Otherwise, we assert that $\rad_{\pi}(G) < \tau$.
    This is the most technical part of the proof.
    Recall that if $\rad_\pi(G) \le \tau$, then we can prove $C_\pi(G) \subseteq C^*$.
    So, we need to consider the case $\rad_\pi(G) \ge \tau$.
    Then, we prove that for some $u,v \in \Supp(\pi)$, $\lambda_\tau(u)+\lambda_\tau(v)- \cO(\delta) \le d(u,v) \le \lambda_\tau(u)+\lambda_\tau(v) + \cO(\delta)$.
    For that, we exploit both the fact that $C^* \ne \emptyset$ (for the upper bound) and the Helly-type property of Lemma~\ref{lem:hyp-helly-gap} (for the lower bound). 
    Furthermore, due the way we construct $C^*$, and due to the definition of $C_\pi(G)$, we observe that every vertex of $C^* \cup C_\pi(G)$ must be contained in an almost shortest $(u,v)$-path: at distances $\lambda_\tau(u) \pm \cO(\delta)$ and $\lambda_\tau(v) \pm \cO(\delta)$ of $u$ and $v$. As a result, we can use Lemma~\ref{lem:hyp-morse} to upper bound the diameter of $C^* \cup C_\pi(G)$. See Claims~\ref{claim-below-alpha} and~\ref{claim-close-centre} in what follows. 

    As we call Lemma~\ref{lem:approx-center} $\cO(\rho)$ times, the runtime is in $\cO(\rho m)$.

    \medskip
    \noindent
    {\bf Correctness.} The following claims complete the correctness proof of the algorithm.
    We put all the claims together at the end of the proof. 

    \begin{myclaim}\label{claim:large-lambda}
    If $\lambda_\tau(v) \ge \ell(v)+2\rho$ for every $v \in \Supp(\pi)$, then $\rad_\pi(G) < \tau$.    
    \end{myclaim}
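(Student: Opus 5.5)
We show directly that $r_\pi(c_1) < \tau$; then $\rad_\pi(G) \le r_\pi(c_1) < \tau$. It suffices to check that $\pi(v)d(c_1,v) < \tau$ for every $v \in V$. If $v \notin \Supp(\pi)$, then $\pi(v)d(c_1,v) = 0$. Here we need $\tau > 0$. This holds because $|\Supp(\pi)| \ge 2$: any $v \in \Supp(\pi)$ has $\lambda_\tau(v) = \lfloor \tau/\pi(v) \rfloor \ge \ell(v) + 2\rho \ge 2\rho > 0$, so $\tau > 0$.

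Now fix $v \in \Supp(\pi)$. We bound $d(c_1,v)$ from above by $\ell(v)+\rho$. If $d(c_1,v) \le \rho$, then $\ell(v)=0$ and the bound is immediate. Otherwise, by the definition of the projection, $\ell(v) = d(v,c_1)-\rho$, so equality holds. Hence
\[
d(c_1,v) \le \ell(v)+\rho \le \lambda_\tau(v) - \rho.
\]
We are assuming $\rho > 124\delta+21 \ge 1$, so $\rho \ge 1$ and therefore $d(c_1,v) < \lambda_\tau(v)$. Since both sides are integers, $d(c_1,v) \le \lambda_\tau(v) - 1$.

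Finally, $\lambda_\tau(v) = \lfloor \tau/\pi(v)\rfloor \le \tau/\pi(v)$. This gives
\[
\pi(v)d(c_1,v) \le \pi(v)(\lambda_\tau(v)-1) < \pi(v)\lambda_\tau(v) \le \tau.
\]
Taking the maximum over the finitely many $v$ yields $r_\pi(c_1) < \tau$.

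The only delicate point is making the inequality strict rather than merely $\le \tau$. The slack comes from the gap of $\rho \ge 1$ between $\ell(v)+\rho$ and $\ell(v)+2\rho$. The argument uses neither hyperbolicity nor the hypothesis $C_\pi(G)\subseteq B_\rho(c_1)$.
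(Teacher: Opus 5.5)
Your proof is correct and follows the paper's argument: the paper also notes that $d(c_1,v)\le \ell(v)+\rho\le \ell(v)+2\rho-1\le \lambda_\tau(v)-1$ for every $v\in\Supp(\pi)$. Hence $c_1\in\bigcap\{B_{\lambda_\tau(v)-1}(v) : v\in\Supp(\pi)\}$, so $r_\pi(c_1)<\tau$. You spell out explicitly what the paper leaves implicit: the strict inequality via $\rho\ge 1$ and $\pi(v)>0$, and the fact that $\tau>0$ handles vertices outside the support.
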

    Indeed, for any $v \in \Supp(\pi)$, $d(c_1,v) \le \ell(v)+\rho \le \ell(v)+2\rho-1$.
    So, in particular, $\bigcap\{ B_{\lambda_\tau(v)-1}(v) : v \in \Supp(\pi) \} \ne \emptyset$. $\diamond$

    \begin{myclaim}\label{claim:rad-ci}
    If $M_i \ne \emptyset$, and $z$ is any vertex of $B_\rho(c_1)$ such that $r_\pi(z) \le \tau$, then $r_{\pi_i}(z) \le i+2\delta$.
    Furthermore, if $\rad_\pi(G) \le \tau$, then $\rad_{\pi_i}(G) \le i+2\delta$.  
    \end{myclaim}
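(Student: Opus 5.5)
We need to show: for every $u \in M_i$, $d(z,u) \le i+2\delta$. Fix $u \in M_i$; then $u = q(v)$ for some $v \in \Supp(\pi)$ with $\lambda_\tau(v) = \ell(v)+i$.

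Since $r_\pi(z) \le \tau$, we have $\pi(v)d(z,v) \le \tau$, hence $d(z,v) \le \lambda_\tau(v) = \ell(v)+i$ (as distances are integers). The argument now splits into two cases.

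\emph{Case $d(c_1,v) \le \rho$.} Then $q(v)=v$ and $\ell(v)=0$, so $d(z,u) = d(z,v) \le i \le i+2\delta$ and we are done.

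\emph{Case $d(c_1,v) > \rho$.} Here $u = q(v) \in S_\rho(c_1,v)$ and $\ell(v) = d(v,u) = d(c_1,v)-\rho$. Apply Lemma~\ref{lem:hyp-proj} with center $c_1$, radius $r=\rho$, $x = z \in B_\rho(c_1)$, and $y=v \notin B_\rho(c_1)$. It gives
$$d(z,u)+d(u,v) \le d(z,v)+2\delta.$$
Rearranging and using the bound above,
$$d(z,u) \le d(z,v) - \ell(v) + 2\delta \le i+2\delta.$$
Taking the maximum over $u \in M_i$ yields $r_{\pi_i}(z) \le i+2\delta$.

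For the second statement, suppose $\rad_\pi(G) \le \tau$ and take any $c \in C_\pi(G)$. By hypothesis $C_\pi(G) \subseteq B_\rho(c_1)$, so $c \in B_\rho(c_1)$, and $r_\pi(c) = \rad_\pi(G) \le \tau$. The first part applies to $z = c$, giving $\rad_{\pi_i}(G) \le r_{\pi_i}(c) \le i+2\delta$.

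The only delicate point is checking the hypotheses of Lemma~\ref{lem:hyp-proj}: $z$ must lie in the ball $B_\rho(c_1)$, and $u$ must be chosen in the slice $S_\rho(c_1,v)$. Both are guaranteed by the assumption on $z$ and by the construction of $q(v)$.
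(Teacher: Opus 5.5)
Your proof is correct and follows the paper's argument. You use the same case split on whether $v\in B_\rho(c_1)$, the same application of Lemma~\ref{lem:hyp-proj} with $x=z$, $y=v$ and slice vertex $q(v)$, and the same use of $C_\pi(G)\subseteq B_\rho(c_1)$ for the second part; the only difference is that you bound $d(z,u)$ directly for every $u\in M_i$, while the paper argues by contradiction from a vertex $q(v)\in F_{\pi_i}(z)$.
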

    Suppose, for the sake of contradiction, $r_{\pi_i}(z) > i+2\delta$.
    Let $v \in \Supp(\pi)$ be chosen such that $q(v) \in F_{\pi_i}(z)$.
    Assume first that $v \in B_\rho(c_1)$.
    As $d(z,v) > i = \lambda_\tau(v)$, we get $r_\pi(z) \ge \pi(v)d(z,v) > \tau$, thus contradicting our assumption that $r_\pi(z) \le \tau$.
    Thus, $v \notin B_\rho(c_1)$.
    As we assume $z \in B_\rho(c_1)$, by Lemma~\ref{lem:hyp-proj}, we obtain $d(v,q(v))+d(q(v),z) \le d(v,z)+2\delta$.
    Therefore, $d(v,z) \ge \ell(v)+d(q(v),z)-2\delta > \ell(v)+i = \lambda_\tau(v)$.
    Again, this contradicts our assumption that $r_\pi(z) \le \tau$.
    So, $\rad_{\pi_i}(G) \le r_{\pi_i}(z) \le i+2\delta$. 
    Finally, note that if $\rad_\pi(G) \le \tau$, then as we assume $C_\pi(G) \subseteq B_\rho(c_1)$, there exists a $z \in B_\rho(c_1)$ such that $r_\pi(z) \le \tau$. $\diamond$ 

    \begin{myclaim}\label{claim-above-alpha}
    For any vertex $z \in B_\rho(c_1)$ such that $r_{\pi}(z) \le \tau$, $z \in C^*$.
    \end{myclaim}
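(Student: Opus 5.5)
The plan is to show that $z \in C^i$ for every $i$ with $M_i \neq \emptyset$. Since $z \in B_\rho(c_1)$ by hypothesis, this reduces to the single inequality
\[
d(z,c^i) \le i - r_{\pi_i}(c^i) + 18\delta + 3 .
\]
Here we may assume that $C^i$ is actually defined, i.e., $r_{\pi_i}(c^i) \le i + 7\delta + 1$. Otherwise the algorithm has already asserted $\rad_\pi(G) > \tau$, and that case is consistent with the existence of $z$ because of Claim~\ref{claim:rad-ci}. The tools are Claim~\ref{claim:rad-ci}, the two-sided estimate of Lemma~\ref{lem:almost-unimodal} applied to the binary profile $\pi_i$, and the guarantee $C_{\pi_i}(G) \subseteq B_{5\delta+1}(c^i)$ from Lemma~\ref{lem:approx-center}.

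\textbf{Step 1: upper bound on $r_{\pi_i}(z)$.} Since $z \in B_\rho(c_1)$ and $r_\pi(z) \le \tau$, Claim~\ref{claim:rad-ci} gives $r_{\pi_i}(z) \le i + 2\delta$.

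\textbf{Step 2: $z$ is close to $C_{\pi_i}(G)$.} The lower bound of Lemma~\ref{lem:almost-unimodal} for $\pi_i$ at the vertex $z$ gives
\[
d(z,C_{\pi_i}(G)) \le r_{\pi_i}(z) - \rad_{\pi_i}(G) + 6\delta + 1 \le i - \rad_{\pi_i}(G) + 8\delta + 1 .
\]

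\textbf{Step 3: pass to $c^i$.} Every vertex of $C_{\pi_i}(G)$ lies within $5\delta+1$ of $c^i$, so the triangle inequality yields
\[
d(z,c^i) \le i - \rad_{\pi_i}(G) + 13\delta + 2 .
\]

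\textbf{Step 4: replace the unknown radius by the computed value $r_{\pi_i}(c^i)$.} The upper bound of Lemma~\ref{lem:almost-unimodal}, applied at $c^i$, gives
\[
r_{\pi_i}(c^i) \le d(c^i,C_{\pi_i}(G)) + \rad_{\pi_i}(G) \le 5\delta + 1 + \rad_{\pi_i}(G).
\]
Hence $-\rad_{\pi_i}(G) \le -r_{\pi_i}(c^i) + 5\delta + 1$. Substituting into Step 3 gives exactly $d(z,c^i) \le i - r_{\pi_i}(c^i) + 18\delta + 3$, so $z \in C^i$.

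Intersecting over all $i$ with $M_i \ne \emptyset$ then gives $z \in C^*$. The only delicate point is Step 4: the radius of $C^i$ was defined via the computable quantity $r_{\pi_i}(c^i)$ rather than $\rad_{\pi_i}(G)$. That costs an extra $5\delta+1$ on top of the $5\delta+1$ already spent in Step 3. Checking that the constants $2\delta$, $6\delta+1$, $5\delta+1$ and $5\delta+1$ add up to exactly $18\delta+3$ is the main bookkeeping to verify.
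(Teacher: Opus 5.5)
Your proof is correct and follows the paper's argument step for step. Claim~\ref{claim:rad-ci} gives $r_{\pi_i}(z) \le i+2\delta$. The lower bound of Lemma~\ref{lem:almost-unimodal} together with $C_{\pi_i}(G) \subseteq B_{5\delta+1}(c^i)$ gives $d(z,c^i) \le i-\rad_{\pi_i}(G)+13\delta+2$, and $\rad_{\pi_i}(G) \ge r_{\pi_i}(c^i)-5\delta-1$ finishes the bound.

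One wording fix: the existence of $z$ does not merely sit consistently with the case $r_{\pi_i}(c^i) > i+7\delta+1$; it rules that case out. Claim~\ref{claim:rad-ci} combined with your Step~4 gives $r_{\pi_i}(c^i) \le \rad_{\pi_i}(G)+5\delta+1 \le i+7\delta+1$.
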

   It suffices to prove that $z \in C^i$ for any $i$ such that $M_i \ne \emptyset$. 
    By Claim~\ref{claim:rad-ci}, $r_{\pi_i}(z) \le i+2\delta$. 
    Therefore, by Lemma~\ref{lem:almost-unimodal}, $d(z,C_{\pi_i}(G)) \le r_{\pi_i}(z) - \rad_{\pi_i}(G) + 6\delta + 1 \le i - \rad_{\pi_i}(G) + 8\delta+1$.
    As $C_{\pi_i}(G) \subseteq B_{5\delta+1}(c^i)$, it implies that
    \begin{flalign*}
        d(z,c^i) &\le d(z,C_{\pi_i}(G)) + 5\delta + 1 \le i - \rad_{\pi_i}(G) + 13\delta+2 \\
        &\le i - (r_{\pi_i}(c^i)-5\delta-1) + 13\delta+2 \\
        &\le i - r_{\pi_i}(c^i) + 18\delta+3.
    \end{flalign*}
    Therefore, $z \in C^i$. $\diamond$

    \begin{myclaim}\label{claim:dist-cstar-support}
    If $z \in C^*$, and $v \in \Supp(\pi)$, then $d(z,v) \le \lambda_\tau(v)+18\delta+3$.    
    \end{myclaim}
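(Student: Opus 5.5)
The plan is a direct triangle-inequality argument through the projection $q(v)$ and the approximate center $c^i$ of the appropriate binary profile. Fix $z \in C^*$ and $v \in \Supp(\pi)$.

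First, I handle a vertex $v$ that was discarded because $\lambda_\tau(v) \ge \ell(v)+2\rho$. Every $C^i$ is contained in $B_\rho(c_1)$, so $C^* \subseteq B_\rho(c_1)$, and hence $d(z,c_1) \le \rho$. Since $q(v) \in S_\rho(c_1,v)$, or $q(v)=v$ when $v \in B_\rho(c_1)$, we have $d(c_1,v) \le \ell(v)+\rho$. Therefore
\[
d(z,v) \le d(z,c_1) + d(c_1,v) \le \ell(v)+2\rho \le \lambda_\tau(v),
\]
which is stronger than the claim.

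For the main case $\ell(v) \le \lambda_\tau(v) \le \ell(v)+2\rho-1$, set $i = \lambda_\tau(v)-\ell(v)$, so that $0 \le i \le 2\rho-1$. By definition $q(v) \in M_i$, so $M_i \neq \emptyset$. Hence $C^i$ is one of the sets in the intersection defining $C^*$, and $z \in C^i$. This gives
\[
d(z,c^i) \le i - r_{\pi_i}(c^i) + 18\delta+3 .
\]
Next, $q(v)$ lies in the support of the binary profile $\pi_i$. By the definition of the radius function, this yields $d(c^i,q(v)) \le r_{\pi_i}(c^i)$. Adding the two bounds, the term $r_{\pi_i}(c^i)$ cancels, and we get $d(z,q(v)) \le i + 18\delta + 3$.

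Finally, $d(q(v),v) = \ell(v)$, so
\[
d(z,v) \le d(z,q(v)) + \ell(v) \le i + \ell(v) + 18\delta + 3 = \lambda_\tau(v) + 18\delta + 3 .
\]

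There is no real obstacle here. The only points needing care are two observations. One is that the index $i$ for which $q(v) \in M_i$ is exactly $\lambda_\tau(v)-\ell(v)$. The other is that the $r_{\pi_i}(c^i)$ term in the radius of $C^i$ was chosen precisely so that it cancels against $d(c^i,q(v))$.
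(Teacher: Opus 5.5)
Your proof is correct and is essentially the paper's argument. It uses the same split between discarded vertices, where $C^* \subseteq B_\rho(c_1) \subseteq B_{\lambda_\tau(v)}(v)$, and the main case, where the triangle inequality through $c^i$ and $q(v)$ makes $r_{\pi_i}(c^i)$ cancel. Setting $i := \lambda_\tau(v)-\ell(v)$ directly is the right choice, and slightly more careful than the paper's ``let $i$ be such that $q(v) \in M_i$''. Note, though, that $q(v)$ may lie in several $M_i$ through other vertices with the same projection. So this $i$ is not ``the'' index for which $q(v)\in M_i$; it is the one for which $\ell(v)=\lambda_\tau(v)-i$ holds.
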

    As $C^* \ne \emptyset$, we get $\lambda_\tau(v) > 0$.
    If $\lambda_\tau(v) \ge \ell(v)+2\rho$, then it holds $C^* \subseteq B_\rho(c_1) \subseteq B_{\lambda_\tau(v)}(v)$, and so, we are done.
    Thus, from now on, let $i$ be such that $0 \le i \le 2\rho-1$, and $q(v) \in M_i$.
    Then, $d(z,v) \le d(z,c^i)+d(c^i,q(v))+d(q(v),v)$.
    Recall that $d(q(v),v) = \ell(v) = \lambda_\tau(v)-i$.
    Furthermore, as $q(v) \in M_i$, $d(c^i,q(v)) \le r_{\pi_i}(c^i)$.
    As $z \in C^* \subseteq C^i$, we get $d(z,c^i) \le i - r_{\pi_i}(c^i) + 18\delta+3$.
    As a result,
    \begin{flalign*}
        d(z,v) &\le (i - r_{\pi_i}(c^i) + 18\delta+3) + r_{\pi_i}(c^i) + (\lambda_\tau(v)-i) \\
        &\le \lambda_\tau(v) + 18\delta+3. \ \ \diamond
    \end{flalign*}

    \begin{myclaim}\label{claim-below-alpha}
    If $\rad_{\pi}(G) \ge \tau$, and if $C^* \ne \emptyset$, then $\diam(C^*) \le 124\delta +21$.
    \end{myclaim}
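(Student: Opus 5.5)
The plan is to find two support vertices $u,v \in \Supp(\pi)$ that are almost as far apart as $\lambda_\tau(u)+\lambda_\tau(v)$. Every $z \in C^*$ lies within roughly $\lambda_\tau(u)$ of $u$ and $\lambda_\tau(v)$ of $v$ by Claim~\ref{claim:dist-cstar-support}. The four-point condition applied to two vertices $z,z' \in C^*$ together with $u,v$ should then bound $d(z,z')$ by $\cO(\delta)$. This seems to give roughly $42\delta+8$, well within $124\delta+21$; the paper's route through Lemma~\ref{lem:hyp-morse} may simply be less economical.

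\textbf{Step 1 (lower bound via Helly).} Since $\rad_\pi(G) \ge \tau$, no vertex $z$ has $r_\pi(z) < \tau$. For every $v \in \Supp(\pi)$ we have $\pi(v)(\lambda_\tau(v)-1) \le \tau - \pi(v) < \tau$. Hence $\bigcap\{B_{\lambda_\tau(v)-1}(v) : v \in \Supp(\pi)\} = \emptyset$, since a common vertex would have radius $<\tau$; zero-weight vertices are irrelevant here. Set $\rho_v = \max\{\lambda_\tau(v)-1-2\delta, 0\}$, so that $B_{\rho_v+2\delta}(v) \supseteq B_{\lambda_\tau(v)-1}(v)$. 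The enlarged balls $B_{\rho_v+2\delta}(v)$ may or may not intersect, but we only need the contrapositive. By Lemma~\ref{lem:hyp-helly-gap}, the family $\{B_{\rho_v}(v)\}$ cannot be pairwise intersecting. Each such ball contains its center, since $\rho_v \ge 0$, so there are two distinct vertices $u,v \in \Supp(\pi)$ with $B_{\rho_u}(u)\cap B_{\rho_v}(v)=\emptyset$. In a graph this forces $d(u,v) \ge \rho_u+\rho_v+1$, which gives
\[ d(u,v) \ge \lambda_\tau(u)+\lambda_\tau(v)-4\delta-1. \]

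\textbf{Step 2 (upper bounds from $C^*$).} Let $z,z' \in C^*$ be arbitrary. By Claim~\ref{claim:dist-cstar-support},
\[ d(z,u),\ d(z',u) \le \lambda_\tau(u)+18\delta+3 \quad\text{and}\quad d(z,v),\ d(z',v) \le \lambda_\tau(v)+18\delta+3. \]

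\textbf{Step 3 (four-point condition).} By $\delta$-hyperbolicity,
\[ d(z,z')+d(u,v) \le \max\{d(z,u)+d(z',v),\ d(z,v)+d(z',u)\}+2\delta. \]
By Step 2, the right-hand side is at most $\lambda_\tau(u)+\lambda_\tau(v)+38\delta+6$. Substituting the lower bound of Step 1 for $d(u,v)$ yields $d(z,z') \le 42\delta+7 \le 124\delta+21$, so $\diam(C^*) \le 124\delta+21$.

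The main obstacle I expect is Step 1. There we must convert ``$\rad_\pi(G)\ge\tau$'' into emptiness of the intersection of integer-radius balls, using radius $\lambda_\tau(v)-1$ because of the floor in $\lambda_\tau$. We must also treat the degenerate radii ($\lambda_\tau(v)-1-2\delta<0$) by clamping at $0$, so that Lemma~\ref{lem:hyp-helly-gap} applies and yields two \emph{distinct} support vertices. Once that pair $u,v$ is secured, Steps 2 and 3 are routine.
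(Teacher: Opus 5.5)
Your Steps~2 and~3 are fine, but the justification in Step~1 fails in exactly the degenerate case you flagged. The contrapositive of Lemma~\ref{lem:hyp-helly-gap} says: if $\bigcap_v B_{\rho_v+2\delta}(v)=\emptyset$, then the balls $B_{\rho_v}(v)$ are not pairwise intersecting. So you need the \emph{enlarged} balls to have empty common intersection. What you know is only that $\bigcap_v B_{\lambda_\tau(v)-1}(v)=\emptyset$ and $B_{\rho_v+2\delta}(v)\supseteq B_{\lambda_\tau(v)-1}(v)$. An inclusion in this direction cannot transfer emptiness. For unclamped $v$ the two balls coincide. But once some $\lambda_\tau(v)<2\delta+1$, the enlarged ball $B_{2\delta}(v)$ can be strictly larger, the enlarged family may have a common vertex, and the lemma yields nothing. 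Your remark that the enlarged balls ``may or may not intersect'' concedes precisely the hypothesis the lemma needs.

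The conclusion of Step~1 is still true, and either of two short repairs works.
\begin{itemize}
\item Split off the degenerate case as the paper does. If $\lambda_\tau(v)\le 2\delta+1$ for some $v\in\Supp(\pi)$, then Claim~\ref{claim:dist-cstar-support} gives $C^*\subseteq B_{20\delta+4}(v)$, hence $\diam(C^*)\le 40\delta+8$. Otherwise no clamping is needed.
\item Keep the clamping and argue directly when some $\rho_{v_0}=0$. If the family $\{B_{\rho_v}(v)\}$ were pairwise intersecting, the singleton $B_0(v_0)=\{v_0\}$ would meet every ball. Then $v_0\in\bigcap_v B_{\rho_v}(v)\subseteq\bigcap_v B_{\lambda_\tau(v)-1}(v)$, a contradiction. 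The inclusion uses $\rho_v\le\lambda_\tau(v)-1$ for all $v$, which holds because $\lambda_\tau(v)\ge 1$ whenever $C^*\ne\emptyset$.
\end{itemize}
In either variant $\rho_u\ge\lambda_\tau(u)-1-2\delta$, so $d(u,v)\ge\lambda_\tau(u)+\lambda_\tau(v)-4\delta-1$ follows as you wrote.

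With this repair your proof is correct, and after the pair $u,v$ is found it takes a shorter route than the paper. The paper fixes a shortest $(u,v)$-path $P$ and uses Lemma~\ref{lem:hyp-morse} to place every $z\in C^*$ within $42\delta+7$ of a subpath $Q\subseteq P$ of length about $40\delta+7$. It then concludes $\diam(C^*)\le\diam(Q)+84\delta+14\le 124\delta+21$. You instead apply the four-point condition directly to $z,z',u,v$, which is the same computation the paper performs for $c,c^*$ in Claim~\ref{claim-close-centre}. This gives $\diam(C^*)\le 42\delta+7$. The paper's detour only adds the information that $C^*$ stays near a bounded piece of a $(u,v)$-geodesic, which the claim does not need. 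This claim is what forces the constant $124\delta+21$ in Theorem~\ref{thm:hyp-strong}; Claim~\ref{claim-close-centre} only needs $24\delta+4$ and $22\delta+5$. So your bound would let that threshold drop to about $42\delta+8$.
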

    Assume first $\lambda_\tau(v) \le 2\delta+1$ for some $v \in \Supp(\pi)$.
    By Claim~\ref{claim:dist-cstar-support}, $C^* \subseteq B_{20\delta+4}(v)$ holds.
    Therefore, it holds $\diam(C^*) \le 40\delta+8$.
    From now on, we assume that $\lambda_\tau(v) > 2\delta+1$ for every $v \in \Supp(\pi)$.
    As we assume $\rad_{\pi}(G) \ge \tau$, we obtain $\bigcap\{ B_{\lambda_\tau(v)-1}(v) : v \in \Supp(\pi) \} = \emptyset$.
    Therefore, by the contrapositive of Lemma~\ref{lem:hyp-helly-gap}, there exist $u,v \in \Supp(\pi)$ such that $B_{\lambda_\tau(u)-1-2\delta}(u) \cap B_{\lambda_\tau(v)-1-2\delta}(v) = \emptyset$. 
    In this situation, $d(u,v) \ge \lambda_\tau(u)+\lambda_\tau(v)-4\delta-1$.
    By Claim~\ref{claim:dist-cstar-support}, $C^* \subseteq B_{\lambda_\tau(u)+ 18\delta+3}(u)$. Similarly, $C^* \subseteq B_{\lambda_\tau(v)+ 18\delta+3}(v)$.

    We fix some shortest $(u,v)$-path $P$.
    Then, let $Q$ be the subpath of $P$ that is made of all the vertices $w$ such that $\lambda_\tau(u) - 22\delta-4 \le d(u,w) \le \lambda_\tau(u) + 18\delta+3$.
    To prove the claim, we will prove that $d(z,Q) \le 42\delta+7$ for every $z \in C^*$.
    By doing so, $\diam(C^*) \le \diam(Q) +84\delta+14 \le 124\delta+21$. 
    In what follows, let $z \in C^*$ be arbitrary.
    As $d(u,v) \le d(u,z)+d(z,v)$, there must be a vertex $w_z$ of $P$ such that $d(u,w_z) \le d(u,z)$, $d(v,w_z) \le d(v,z)$.
    Recall that $d(u,z) \le \lambda_\tau(u)+18\delta+3$, $d(v,z) \le \lambda_\tau(v)+18\delta+3$, and $d(u,v) \ge \lambda_\tau(u)+ \lambda_\tau(v)-4\delta-1$.
    Consequently, $d(u,z)+d(z,v) \le d(u,v)+\mu$, where $\mu = 40\delta+7$.
    By Lemma~\ref{lem:hyp-morse}, $d(z,w_z) \le \mu + 2\delta = 42\delta+7$.
    So, it remains to show that $w_z \in Q$.
    As $d(u,w_z) \le d(u,z)$, we get $d(u,w_z) \le \lambda_\tau(u)+18\delta+3$.
    Finally,
    \begin{flalign*}
        d(u,w_z) &= d(u,v) - d(v,w_z) \ge d(u,v) - d(v,z) \\
        &\ge (\lambda_\tau(u)+ \lambda_\tau(v)-4\delta-1) - (\lambda_\tau(v)+18\delta+3) \\
        &\ge \lambda_\tau(u) - 22\delta - 4.
    \end{flalign*}
    Therefore, $w_z \in Q$. $\diamond$

    \begin{myclaim}\label{claim-close-centre}
    If $C^* \ne \emptyset$, and $C^* \subseteq B_{124\delta + 21}(c^*)$ for some $c^* \in C^*$, then also $C_{\pi}(G) \subseteq B_{124\delta+21}(c^*)$.
    \end{myclaim}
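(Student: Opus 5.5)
We split according to how $\rad_\pi(G)$ compares with $\tau$. If $\rad_\pi(G) \le \tau$, then every center $c \in C_\pi(G)$ lies in $B_\rho(c_1)$ by hypothesis and satisfies $r_\pi(c) = \rad_\pi(G) \le \tau$. By Claim~\ref{claim-above-alpha}, $c \in C^*$. Hence $C_\pi(G) \subseteq C^* \subseteq B_{124\delta+21}(c^*)$, which settles this case.

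It remains to treat $\rad_\pi(G) > \tau$. Here we cannot hope to place the centers inside $C^*$. Instead we will show that every center lies close to the same segment $Q$ of a shortest $(u,v)$-path that controls $C^*$ in the proof of Claim~\ref{claim-below-alpha}. First we redo that argument with the threshold $\tau$. Since $\rad_\pi(G) \ge \tau$, the balls $B_{\lambda_\tau(v)-1}(v)$, $v \in \Supp(\pi)$, have empty intersection. The degenerate case $\lambda_\tau(v) \le 2\delta+1$ is handled as before. Otherwise Lemma~\ref{lem:hyp-helly-gap} yields $u,v \in \Supp(\pi)$ with $d(u,v) \ge \lambda_\tau(u)+\lambda_\tau(v)-4\delta-1$. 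We fix a shortest $(u,v)$-path $P$ and the subpath $Q$ as before, so that every $z \in C^*$ lies within $42\delta+7$ of some vertex of $Q$.

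The main obstacle is controlling a center $c \in C_\pi(G)$. We would like $d(c,u) \le \lambda_\tau(u)+\cO(\delta)$, but we only know $r_\pi(c) \le r_\pi(c^*)$, where $c^*$ comes from $C^*$. Claim~\ref{claim:dist-cstar-support} gives $d(c^*,w) \le \lambda_\tau(w)+18\delta+3$ for all $w \in \Supp(\pi)$. We try to convert this into a bound on $d(c,w)$ for $w \in \{u,v\}$. The plan is to compare $c$ with a vertex of $C^*$ via the decision structure. Let $\tau' = \rad_\pi(G)$ and consider the thresholds $\lambda_{\tau'}(w) \ge \lambda_\tau(w)$. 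We argue that $\lambda_{\tau'}(w) \le \lambda_\tau(w)+\cO(\delta)$ for $w = u,v$. Indeed, $c$ lies in $B_{\lambda_{\tau'}(u)}(u) \cap B_{\lambda_{\tau'}(v)}(v)$, while $d(u,v)$ is bounded above by $d(u,c^*)+d(c^*,v) \le \lambda_\tau(u)+\lambda_\tau(v)+36\delta+6$. So if, say, $\lambda_{\tau'}(u)$ were much larger than $\lambda_\tau(u)$, we would still need $d(c,v) \le \lambda_{\tau'}(v)$, and the center would lose nothing. The delicate point is to exclude exactly this scenario. I would try to exploit the order in which thresholds are tested in Lemma~\ref{lem:weighted-center-decision}: the outputs for $\tau$ and for the next larger candidate value must be consistent, since both $c$ and $c^*$ are within $\cO(\delta)$ of the balls around $u,v$. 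If needed, I would alternatively apply Claim~\ref{claim-above-alpha} at the threshold $\tau'$ to get $c \in C^*(\tau')$ and compare the two sets $C^*(\tau)$ and $C^*(\tau')$.

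Once $d(c,u) \le \lambda_\tau(u)+\cO(\delta)$ and $d(c,v) \le \lambda_\tau(v)+\cO(\delta)$ are established, the end is routine. We get $d(u,c)+d(c,v) \le d(u,v)+\cO(\delta)$, and Lemma~\ref{lem:hyp-morse} places $c$ near a vertex $w_c \in Q$, exactly as done for $z \in C^*$ in Claim~\ref{claim-below-alpha}. Then $d(c,c^*) \le d(c,w_c)+\diam(Q)+d(Q,c^*)$. To end with the stated constant $124\delta+21$, the constants must be tracked carefully; possibly a sharper route is to use $c^*$ itself as an endpoint when applying Lemma~\ref{lem:hyp-morse}.
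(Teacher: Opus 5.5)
\textbf{There is a genuine gap in your second case.} Your first case ($\rad_\pi(G)\le\tau$, via Claim~\ref{claim-above-alpha}) is correct and matches the paper.

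For $\rad_\pi(G)>\tau$, you take the pair $u,v$ given by Lemma~\ref{lem:hyp-helly-gap} at threshold $\tau$. Everything then rests on $d(c,u)\le\lambda_\tau(u)+\cO(\delta)$ for a center $c$, which you hope to get from $\lambda_{\tau'}(u)\le\lambda_\tau(u)+\cO(\delta)$ with $\tau'=\rad_\pi(G)$. You never prove this inequality, and it is false in general. The difference $\lambda_{\tau'}(u)-\lambda_\tau(u)$ is roughly $(\tau'-\tau)/\pi(u)$, which is unbounded when $\pi(u)$ is small.

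For instance, take a tree where $\rad_\pi(G)=R$ is forced by two weight-$1$ leaves $x,y$ whose midpoint is $p$. Put a vertex $u$ of weight $\varepsilon$ at distance $\lfloor (R-\frac12)/\varepsilon\rfloor$ from $p$, and a weight-$1$ vertex $v$ at distance $R$ from $p$ on the other side. Take $\tau=R-\frac12$. Then $C^*$ can be a small ball around the center $p$, and $(u,v)$ is a legitimate output of the Helly-gap lemma at $\tau$. Yet $\lambda_R(u)-\lambda_\tau(u)\approx 1/(2\varepsilon)$.

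Your fallback ideas do not repair this. The order in which Lemma~\ref{lem:weighted-center-decision} tests thresholds is irrelevant, because the claim concerns a single call with an arbitrary $\tau$. ``Comparing $C^*(\tau)$ with $C^*(\tau')$'' is not an argument. Your degenerate case $\lambda_\tau(v)\le 2\delta+1$ also only controls $C^*$, not $C_\pi(G)$. The distance bound you want on $d(c,u)$ is true a posteriori, but only as a consequence of the argument below, not through a comparison of thresholds.

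\textbf{The missing idea} is to work at the threshold $\gamma=\rad_\pi(G)$ from the start, and to transfer the information about $c^*$ by the monotonicity $\lambda_\tau\le\lambda_\gamma$.

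Since no vertex has $r_\pi<\gamma$, we have $\bigcap\{B_{\lambda_\gamma(w)-1}(w): w\in\Supp(\pi)\}=\emptyset$.

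\emph{Degenerate case.} Suppose $\lambda_\gamma(w)\le 2\delta+1$ for some $w$. Then $C_\pi(G)\subseteq B_{2\delta+1}(w)$. By Claim~\ref{claim:dist-cstar-support}, $d(c^*,w)\le\lambda_\tau(w)+18\delta+3\le 20\delta+4$. Hence $C_\pi(G)\subseteq B_{22\delta+5}(c^*)$.

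\emph{Main case.} Otherwise Lemma~\ref{lem:hyp-helly-gap} yields $u,v$ with $d(u,v)\ge\lambda_\gamma(u)+\lambda_\gamma(v)-4\delta-1$. Every center $c$ satisfies $d(c,u)\le\lambda_\gamma(u)$ and $d(c,v)\le\lambda_\gamma(v)$ by definition. Claim~\ref{claim:dist-cstar-support} together with $\lambda_\tau\le\lambda_\gamma$ gives $d(c^*,w)\le\lambda_\gamma(w)+18\delta+3$ for $w\in\{u,v\}$.

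A single application of the four-point condition to $u,v,c,c^*$ then gives $d(u,v)+d(c,c^*)\le\lambda_\gamma(u)+\lambda_\gamma(v)+20\delta+3\le d(u,v)+24\delta+4$. So $d(c,c^*)\le 24\delta+4$. Neither the subpath $Q$ nor Lemma~\ref{lem:hyp-morse} is needed, and the constant is far below $124\delta+21$.
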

    The result follows from Claim~\ref{claim-above-alpha} if $\rad_{\pi}(G) \le \tau$.
    Thus from now on, let us assume $\rad_{\pi}(G) = \gamma > \tau$.
    Note that $\lambda_\gamma(v) \ge \lambda_\tau(v)$ for any $v \in \Supp(\pi)$.
    Assume first $\lambda_\gamma(v) \le 2\delta+1$ for some $v \in \Supp(\pi)$.
    Then, $C_\pi(G) \subseteq B_{2\delta+1}(v) \subseteq B_{22\delta+5}(c^*)$ holds, where the first inclusion follows from the definition of $\lambda_\gamma(v)$ and the second inclusion follows from Claim~\ref{claim:dist-cstar-support}.
    From now on, we assume $\lambda_\gamma(v) > 2\delta+1$ for every $v \in \Supp(\pi)$.
    Furthermore, as $\rad_{\pi}(G) = \gamma$, $\bigcap\{ B_{\lambda_\gamma(v)-1}(v) : v \in \Supp(\pi) \} = \emptyset$.
    Hence, by the contrapositive of Lemma~\ref{lem:hyp-helly-gap}, there exist $u,v \in \Supp(\pi)$ such that $d(u,v) \ge \lambda_\gamma(u)+\lambda_\gamma(v)-4\delta-1$.
    Recall that as $c^* \in C^*$, by Claim~\ref{claim:dist-cstar-support}, $d(u,c^*) \le \lambda_\tau(u)+18\delta+3 \le \lambda_\gamma(u)+18\delta+3$.
    Similarly, $d(v,c^*) \le \lambda_\gamma(v)+18\delta+3$. 

    Let $c \in C_{\pi}(G)$ be arbitrary.
    In particular, $d(u,c) \le \lambda_\gamma(u)$, and in the same way $d(v,c) \le \lambda_\gamma(v)$.
    Since $G$ is $\delta$-hyperbolic, 
    \begin{flalign*}
	   d(u,v) + d(c,c^*) &\le \max\{d(u,c)+d(v,c^*),d(u,c^*)+d(v,c)\} + 2\delta \\
	   &\le \lambda_\gamma(u) + \lambda_\gamma(v) + 20\delta+3 \\
	   &\le d(u,v) + 24\delta+4.
    \end{flalign*}
    As a result, $d(c,c^*) \le 24\delta+4 < 124\delta+21$. $\diamond$

\smallskip
\noindent
We are now ready to prove the correctness of the algorithm.
For any $i$ such that $M_i \ne \emptyset$, as $C_{\pi_i}(G) \subseteq B_{5\delta+1}(c^i)$, we get $\rad_{\pi_i}(G) \ge r_{\pi_i}(c^i)-5\delta-1$.
Thus, if $r_{\pi_i}(c^i) > i +7\delta+1$, then we get $\rad_{\pi_i}(G) > i+2\delta$.
By the contrapositive of Claim~\ref{claim:rad-ci}, $\rad_\pi(G) > \tau$.
Assume now $r_{\pi_i}(c^i) \le i +7\delta+1$ for each $i$.
If $C^* = \emptyset$, then, by the contrapositive of Claim~\ref{claim-above-alpha}, $\rad_\pi(G) > \tau$.
Otherwise, let $c^* \in C^*$ be arbitrary. 
If $C^* \not\subseteq B_{124\delta + 21}(c^*)$, then, we obtain $\diam(C^*) > 124\delta+21$; by the contrapositive of Claim~\ref{claim-below-alpha}, $\rad_{\pi}(G) < \tau$.
However, if $C^* \subseteq B_{124\delta + 21}(c^*)$, then by Claim~\ref{claim-close-centre}, $C_\pi(G) \subseteq B_{124\delta + 21}(c^*)$.
\end{proof}

\section{Exact algorithms}\label{exact}
We present (almost) linear-time algorithms for the \textsc{Weighted Center} problem.
With the notable exception of Sec.~\ref{planar}, all the results in what follows are based on our Theorem~\ref{th:hyp}, used in combination with other properties of the classes of graphs considered.
Furthermore, for the classes of graphs with unbounded hyperbolicity, we present parameterized algorithms, where the parameter is the hyperbolicity value.

\subsection{Bounded-degree graphs.}\label{max-degree}
The {\em growth function} of a graph $G=(V,E)$ is defined as $\gamma_G : \rho \in \mathbb{N} \mapsto \max\{ |B_\rho(v)| : v \in V \}$.
Note that $\gamma_G(\rho) = \cO(\Delta^\rho)$ if $G$ has maximum degree $\Delta$.
A graph $G$ of \emph{polynomial growth} is one such that $\gamma_G(\rho) = \cO(\rho^\alpha)$, for some fixed exponent $\alpha$.
See~\cite{Pap}, and the references therein, for previous studies on the graphs of polynomial growth.
Finally, a  graph $G$ has \emph{doubling dimension} $\beta$ if every ball in $G$ of positive radius is included in the union of at most $2^\beta$ balls with half-radius.
Then, $\gamma_G(\rho) = \cO(\rho^{\beta})$. 

\begin{mytheorem}\label{th:growth}
    For any $\delta$-hyperbolic graph $G$, and for any profile $\pi$, the \textsc{Weighted Center} problem can be solved in $\cO((\gamma_G(\cO(\delta))+\delta\log^2{n})m)$ time, with $\gamma_G$ the growth function.
    In particular, this algorithm runs
    \begin{enumerate}
        \item\label{item-max-degree} in $\cO((\Delta^{O(\delta)}+\delta\log^2{n})m)$ time, if $G$ has maximum degree $\Delta$;
        \item\label{item-poly-growth} in $\cO(\delta^\alpha m \log^2{n})$ time, if $G$ has polynomial growth with exponent $\alpha$;
        \item\label{item-doubling} in $\cO(\delta^\beta m \log^2{n})$ time, if $G$ has doubling dimension $\beta$.
    \end{enumerate}
\end{mytheorem}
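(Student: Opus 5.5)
First apply Theorem~\ref{th:hyp}(\ref{item:hyp-2}): in $\cO(\delta m\log^2{n})$ time it returns a vertex $c_2$ with $C_\pi(G) \subseteq B_{124\delta+21}(c_2)$. Set $R = 124\delta+21 = \cO(\delta)$. The set of candidate centers is then $B_R(c_2)$, which a BFS from $c_2$ computes in $\cO(m)$ time. By definition of the growth function, $|B_R(c_2)| \le \gamma_G(R) = \gamma_G(\cO(\delta))$.

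Next compute $r_\pi(u)$ exactly for every $u \in B_R(c_2)$. Each value needs one BFS from $u$, which gives all distances $d_G(u,v)$, followed by a scan taking $\max\{\pi(v)d_G(u,v)\}$. This costs $\cO(m)$ per candidate, so $\cO(\gamma_G(\cO(\delta))\,m)$ in total. Output a candidate minimizing $r_\pi$. This is correct because every global minimizer of $r_\pi$ lies in $B_R(c_2)$, so the minimum over the ball equals $\rad_\pi(G)$. Adding the two phases gives the claimed $\cO((\gamma_G(\cO(\delta))+\delta\log^2{n})m)$ bound.

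The three specializations then follow by substitution.
\begin{enumerate}
    \item With maximum degree $\Delta$, $\gamma_G(\rho) = \cO(\Delta^\rho)$, which gives $\Delta^{\cO(\delta)}$.
    \item With polynomial growth of exponent $\alpha$, $\gamma_G(\cO(\delta)) = \cO(\delta^\alpha)$, up to a constant factor $\cO(1)^\alpha$ absorbed since $\alpha$ is fixed. Then $\delta^\alpha m + \delta m\log^2{n} = \cO(\delta^\alpha m\log^2{n})$, assuming $\alpha \ge 1$ and $\delta \ge 1$. The degenerate case $\delta < 1$ is handled by replacing $\delta$ with $\max\{\delta,1\}$ in the bound on $R$.
    \item With doubling dimension $\beta$, $\gamma_G(\rho) = \cO(\rho^\beta)$ gives the bound in the same way.
\end{enumerate}

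There is no real obstacle, since all the difficulty sits in Theorem~\ref{th:hyp}. The only point needing care is the degenerate regime where $\delta$ is $0$ or fractional. There the radius $R$ is still a constant at least $21$, and $\gamma_G(21)$ is absorbed into the $\gamma_G(\cO(\delta))$ term, read as $\gamma_G(\cO(\delta+1))$.
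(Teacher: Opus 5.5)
Your proposal is correct and matches the paper's proof: apply Theorem~\ref{th:hyp}(\ref{item:hyp-2}) to place all centers inside $B_{124\delta+21}(c_2)$, then compute $r_\pi$ exactly with one $\cO(m)$-time BFS for each of the at most $\gamma_G(124\delta+21)$ vertices of that ball. Your extra remarks on the three specializations and on the degenerate regime $\delta<1$ agree with the paper's statement and are slightly more careful than the paper, which does not spell them out.
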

\begin{proof}
    We apply Theorem~\ref{th:hyp}(\ref{item:hyp-2}) to compute a vertex $c_2$ such that $C_\pi(G) \subseteq B_{124\delta+21}(c_2)$.
    To compute $C_\pi(G)$, it suffices to compute $r_\pi(x)$, for every $x \in B_{124\delta+21}(c_2)$, which requires $\cO(m)$ time per vertex.
    There are at most $\gamma_G(124\delta+21)$ vertices in the ball.
\end{proof}

\subsection{Chordal graphs, and weakly bridged graphs.}\label{weakly-bridged}
Recall that a \emph{chordal graph} is one with no induced cycle of length more than three.
A graph $G=(V,E)$ is called \emph{weakly bridged} if it has convex balls and if any of its induced cycles of length five is contained in the neighborhood of some vertex.
Note that chordal graphs are weakly bridged.
However, chordal graphs are $1$-hyperbolic~\cite{HypChordal}, whereas the hyperbolicity of weakly bridged graphs is unbounded~\cite{HypBridged}.
The properties of weakly bridged graphs were studied in~\cite{ChOs}.
In particular, they satisfy the following \emph{Triangle Condition}:
\begin{itemize}
    \item $\TC$ If $u \sim v$ and $d(u,w) = d(v,w)$, for some $u,v,w \in V$, then $S_1(u,w) \cap S_1(v,w) \ne \emptyset$.
\end{itemize}
We will need the following characterization of graphs with convex balls:
\begin{mylemma}[see Theorem 2 in~\cite{SoCh1983}]\label{lem:convex}
    All balls $B_k(v)$ $(v\in V$, $k\geq 1)$ of a graph $G$ are convex if and only if $G$ does not contain isometric cycles of length $\ell > 5$, and for any two vertices $x,y$ of $G$ the slice $S_1(x,y)$ is a clique.  
\end{mylemma}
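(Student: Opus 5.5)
The forward implication will be proved by exhibiting explicit non-convex balls; the converse will be proved by reducing convexity to a local condition on the distance profile along a geodesic and then analysing a minimal counterexample.

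\textbf{Convex balls imply the two conditions.} First, let $x,y$ be vertices with $k=d(x,y)\ge 2$; if $k\le 1$ the slice $S_1(x,y)$ has at most one vertex. Suppose $a,b\in S_1(x,y)$ are nonadjacent. Then $d(a,b)=2$ and $x\in I(a,b)$. Since $a,b\in B_{k-1}(y)$ while $x\notin B_{k-1}(y)$, the ball $B_{k-1}(y)$ (with $k-1\ge1$) is not convex.

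Second, let $C$ be an isometric cycle of length $\ell\ge 6$ and fix $y\in C$.
\begin{itemize}
\item If $\ell=2k$, let $x$ be the vertex of $C$ antipodal to $y$, at distance $k\ge3$. Its two neighbours $a,b$ on $C$ lie in $S_1(x,y)$ and are nonadjacent, because $d_C(a,b)=2$ and $C$ is isometric. So the previous case gives a non-convex ball.
\item If $\ell=2k+1$ with $k\ge3$, let $x,x'$ be the two adjacent vertices of $C$ at distance $k$ from $y$, and let $a,b$ be their other neighbours on $C$, both at distance $k-1$ from $y$. Since $C$ is isometric and $3\le \ell/2$, we have $d(a,b)=3$. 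Hence $x\in I(a,b)$, while $x\notin B_{k-1}(y)\ni a,b$, so $B_{k-1}(y)$ is not convex.
\end{itemize}

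\textbf{The two conditions imply convex balls: reduction to a local condition.} Suppose some $B_k(v)$ is not convex. Choose $a,b\in B_k(v)$ and a shortest $(a,b)$-path $P$ leaving the ball, with $d(a,b)$ minimal. By minimality, $a$ and $b$ are the only vertices of $P$ in the ball, and $d(v,a)=d(v,b)=k$. So along $P$ the function $d(v,\cdot)$ rises from $k$ and later returns to $k$. Since adjacent vertices have distances to $v$ differing by at most one, $P$ contains a maximal plateau $x_1,\dots,x_s$, all at distance $j>k$ from $v$, whose two outer neighbours $x_0,x_{s+1}$ on $P$ are at distance $j-1$.

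If $s=1$, then $x_0,x_2\in S_1(x_1,v)$. This slice is a clique, so $x_0\sim x_2$, contradicting that $P$ is shortest. Hence it remains to exclude plateaus with $s\ge2$ along a geodesic, i.e.\ geodesic segments $x_0x_1\cdots x_{s+1}$ with $d(x_0,x_{s+1})=s+1$ and distance pattern $j-1,j,\dots,j,j-1$ to $v$. I will take such a configuration minimizing first $j$ and then $s$.

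\textbf{Building an isometric cycle (main obstacle).} Take shortest paths $Q_0$ from $x_0$ to $v$ and $Q_{s+1}$ from $x_{s+1}$ to $v$, and let $w$ be the first common vertex of $Q_0$ and $Q_{s+1}$. The segment $x_0\cdots x_{s+1}$ together with the subpaths of $Q_0,Q_{s+1}$ up to $w$ forms a cycle $Z$, of length at least $(s+1)+2\ge 5$, and in fact at least $6$ unless $s=2$ and the configuration is very small.

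The key step is to show that a chord, or a shortcut strictly shorter than the corresponding arc of $Z$, yields a plateau configuration that is smaller in the chosen order. A shortcut between the two $v$-geodesics at a level $j'<j$ would produce a plateau at a lower level $j'$. A shortcut from a plateau vertex $x_i$ to some $Q$ would shorten $s$. Where needed, the clique condition on slices $S_1(\cdot,v)$ is used to replace $Q_0,Q_{s+1}$ by geodesics whose first steps are adjacent, which rules out the short cases. Hence $Z$ is isometric of length greater than $5$, contradicting the hypothesis.

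The delicate part is the bookkeeping showing that every non-geodesic arc of $Z$ really produces a lexicographically smaller counterexample. In particular, the case $s=2$ with $w$ close to the plateau, where $Z$ might be a $5$-cycle, must be checked carefully; this is exactly where the clique slices must exclude it. I expect this case analysis to be the main obstacle.
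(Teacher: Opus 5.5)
Your forward direction is correct and complete, and so is the first half of the converse. A minimal non-convex witness does contain a geodesic plateau $x_0x_1\cdots x_{s+1}$ with profile $j-1,j,\dots,j,j-1$, and the clique condition on $S_1(x_1,v)$ rules out $s=1$. (The paper gives no proof of this lemma; it is quoted from Soltan and Chepoi, so your argument has to stand on its own.)

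The genuine gap is that the substance of the converse, excluding plateaus with $s\ge 2$, is only asserted. ``Hence $Z$ is isometric'' rests on the claim that every chord or shortcut of $Z$ yields a lexicographically smaller plateau configuration, and you give no argument for it. Also, the case you single out as delicate cannot occur: since $s+1=d(x_0,x_{s+1})\le 2t$, one always has $|Z|=s+1+2t\ge 2s+2\ge 6$. Length is never the issue; isometricity is.

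The mechanisms you name do not work as stated.
\begin{itemize}
\item An edge $qq'$ between $Q_0$ and $Q_{s+1}$ at level $j'$ gives the path $p\,q\,q'\,p'$ with profile $j'-1,j',j',j'-1$. This is a plateau configuration only if $d(p,p')=3$. More generally, a shortcut between the two $v$-geodesics can stay at levels $\le j'$ and create no plateau at all.
\item Let $q$ be the neighbour of $x_0$ on $Q_0$, and suppose some $y\in S_1(x_2,v)$ is adjacent to $q$. Then $d(q,x_2)=2<3=d_Z(q,x_2)$. Yet $y\,x_2\cdots x_{s+1}$ is a shorter plateau only when $d(y,x_{s+1})=s$. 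In the other possible case, $d(y,x_{s+1})=s-1$, you need a separate argument: there $q$ and $x_1$ both lie in $S_1(x_0,x_{s+1})$, at levels $j-2$ and $j$, contradicting the clique condition.
\item With your order (first $j$, then $s$), a shorter plateau that appears at a higher level is not smaller. So even a successful ``shorten $s$'' step does not close the argument.
\end{itemize}

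Minimizing $s$ first and then the level is more workable. It makes every ball contain the interval of any two of its vertices at distance at most $s$, and it forces $Q_0\cap Q_{s+1}=\{v\}$.

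To finish, you must actually verify $d_G=d_Z$ for every pair type: two vertices on different $Q$'s, and a plateau vertex with a $Q$-vertex. This includes shortcuts through vertices off $Z$. Alternatively, follow the original Soltan--Chepoi argument. As written, only the easy direction and the reduction are proved.
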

The \textsc{Weighted Center} problem on weakly bridged graphs was studied in~\cite{Gpunimodal-ecc}, Sec. $5$. In particular, every radius function of a weakly bridged graph is $2$-weakly peakless (see~\cite{Gpunimodal-ecc}, Proposition $5.5$).
In what follows, we gather some more results from~\cite{Gpunimodal-ecc}; we provide more precise statements than the authors in~\cite{Gpunimodal-ecc}, based on a careful analysis of their proofs. 
Let $G=(V,E)$ be weakly bridged, and let $\pi$ be an arbitrary profile:
\begin{itemize}
    \item (WB.1) for any $v \in V$ such that $v$ is \emph{not} a local minimum of $r_\pi$ in $G$, we can compute in $\cO(m)$ time the subset of all its neighbors $u \in N(v)$ such that $r_\pi(u) < r_\pi(v)$, and the corresponding values $r_\pi(u)$ (follows from~\cite{Gpunimodal-ecc}, Lemma $5.9$ and its proof);
    \item (WB.2) for any $v \in V$ such that $v$ is a local minimum of $r_\pi$ in $G$, but $v \notin C_\pi(G)$, a vertex $u \in B_2(v)$ such that $r_\pi(u) < r_\pi(v)$ can be computed in $\cO(m)$ time; furthermore, for any $c \in C_\pi(G)$, there exists some $u_c^* \in S_2(v,c)$ such that $u \in N[u_c^*]$ (follows from~\cite{Gpunimodal-ecc}, Proposition $5.11$ and its proof\footnote{
        More specifically, for some $u_c^* \in S_2(v,c)$ the authors in~\cite{Gpunimodal-ecc} proved that $r_\pi(u_c^*) < r_\pi(v)$, and that $v,u,u_c^*$ have some common neighbor $w_{\max}$. 
        As $v$ is a local minimum of $r_\pi$ in $G$, $r_{\pi}(u) < r_\pi(v) \le r_{\pi}(w_{\max})$, and similarly $r_{\pi}(u_c^*) < r_{\pi}(w_{\max})$.
        It implies that for any $x \in F_\pi(w_{\max})$, $u,u_c^* \in S_1(w_{\max},x)$ holds.
        By Lemma~\ref{lem:convex}, either $u=u_c^*$, or $u \sim u_c^*$.
        }).
\end{itemize}
Starting from any vertex $v_0$, we can repeatedly apply either (WB.1) or (WB.2) until we find a center.
By doing so, we construct an improving sequence $v_0,v_1,\ldots,v_\ell$ for which $r_\pi$ is monotonically decreasing, and $v_\ell \in C_\pi(G)$.
Roughly, to bound the length $\ell$ of the sequence, we must enforce that $d(v_{i+1},C_\pi(G)) < d(v_i,C_\pi(G))$ at each step $i$ of the process. 

\begin{mylemma}\label{lem:wb-best-response}
    Let $G=(V,E)$ be weakly bridged, let $\pi$ be an arbitrary profile, and let $v \in V$.
    If $v$ is not a local minimum of $r_\pi$ in $G$, then, for any $c \in C_\pi(G)$, there exists some neighbor $u_c$ of $v$ such that $r_\pi(u_c)$ is minimum within $N[v]$, and $d(u_c,c) < d(v,c)$.
\end{mylemma}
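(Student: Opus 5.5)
Since $v$ is not a local minimum, some neighbor has smaller $r_\pi$ than $v$. Let $u$ be a vertex minimizing $r_\pi$ within $N[v]$; then $u \ne v$ and $r_\pi(u) < r_\pi(v)$. Fix $c \in C_\pi(G)$ and set $k = d(v,c)$. Note $k \ge 1$, since $v \notin C_\pi(G)$. If $d(u,c) < k$ we take $u_c = u$. So assume $d(u,c) \ge k$, i.e.\ $d(u,c) \in \{k, k+1\}$. In each case I want to produce a vertex $u_c \in S_1(v,c)$ with $r_\pi(u_c) \le r_\pi(u)$. Minimality of $r_\pi(u)$ in $N[v]$ then forces equality, and $d(u_c,c) = k-1 < k$ as required.

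The tool is the pointwise convexity of distance functions in graphs with convex balls, applied to the terms of $r_\pi$. Fix $x \in V$. I claim that for a suitable candidate $w \in S_1(v,c)$,
\[ d(w,x) \le \max\{ d(u,x), d(c,x) \}, \]
with strict inequality when $d(u,x) = d(c,x)$ is large. Multiplying by $\pi(x)$ and taking maxima gives $r_\pi(w) \le \max\{r_\pi(u), r_\pi(c)\} = r_\pi(u)$, because $r_\pi(c) = \rad_\pi(G) \le r_\pi(u)$.

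The candidate $w$ is chosen so that it lies in $I(u,c)$, or close to it.
\begin{itemize}
\item Case $d(u,c) = k+1$: then $v \in S_1(u,c)$, and I take any $w \in S_1(v,c) \subseteq I(u,c)$. Then $w$ lies on a shortest $(u,c)$-path, and convexity of the ball $B_{\max\{d(u,x),d(c,x)\}}(x)$ yields $d(w,x) \le \max\{d(u,x),d(c,x)\}$. This proves the claim directly.
\item Case $d(u,c) = k$: here $u \sim v$ and $d(u,c) = d(v,c)$. By the Triangle Condition $\TC$, there is a common neighbor $w \in S_1(u,c) \cap S_1(v,c)$. Since $w \in I(u,c)$, the same convexity argument gives $d(w,x) \le \max\{d(u,x),d(c,x)\}$. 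Moreover $w \in S_1(v,c)$, so $d(w,c) = k-1$.
\end{itemize}

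In both cases $w \in N(v)$ with $r_\pi(w) \le r_\pi(u)$. By the choice of $u$, $r_\pi(w)$ is therefore also minimum within $N[v]$, and $d(w,c) = k-1 < d(v,c)$. So $u_c := w$ works.

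The main obstacle is justifying $d(w,x) \le \max\{d(u,x), d(c,x)\}$ for $w \in I(u,c)$ purely from convex balls. This does hold: $u$ and $c$ both lie in the ball of radius $\max\{d(u,x),d(c,x)\}$ around $x$, and that ball contains $I(u,c)$ by convexity. The argument never needs strictness, so the weak inequality suffices. The remaining care is degenerate cases: $\pi(x) = 0$ is harmless, and $w = c$ when $k = 1$ is fine, since then $r_\pi(c) \le r_\pi(u)$ directly. I do not expect to need (WB.1), (WB.2) or $2$-weak peaklessness.
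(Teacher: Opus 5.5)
Your proof is correct and follows the paper's argument. You take a minimizer $u$ of $r_\pi$ on $N[v]$, use convexity of balls to get $r_\pi(w) \le \max\{r_\pi(u), r_\pi(c)\} = r_\pi(u)$ for every $w \in I(u,c)$, and invoke $\TC$ when $d(u,c) = d(v,c)$. The only cosmetic difference is the case $d(u,c) = d(v,c)+1$: the paper rules it out by applying the same convexity bound to $v \in I(u,c)$ itself (giving $r_\pi(v) \le r_\pi(u) < r_\pi(v)$), whereas you handle it directly with $w \in S_1(v,c)$, which is equally valid.
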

\begin{proof}
    %In what follows, let $c \in C_\pi(G)$ be closest to $v$.
    Let $u_0$ be minimizing $r_\pi$ within $N[v]$.
    As $v$ is a not a local minimum of $r_\pi$ in $G$, $r_\pi(u_0) < r_\pi(v)$.
    Suppose by contradiction $d(v,c) < d(u_0,c)$.
    For every $x \in \Supp(\pi)$, as the balls in $G$ are convex, we get $d(w,x) \le \max\{d(u_0,x),d(c,x)\}$ for every $w \in I(u_0,c)$.
    In particular, $r_\pi(w) \le \max\{r_\pi(u_0),r_\pi(c)\} = r_\pi(u_0) < r_\pi(v)$.
    As $v \in I(u_0,c)$, a contradiction arises.
    Therefore, $d(u_0,c) \le d(v,c)$.
    If $d(u_0,c) < d(v,c)$, then we set $u_c := u_0$, and we are done.
    Otherwise, by $\TC$, there exists some $u_1 \in S_1(u_0,c) \cap S_1(v,c)$.
    By similar convexity arguments as above, $r_\pi(u_1) \le r_\pi(u_0)$.
    Due to the minimality of $r_\pi(u_0)$ within $N[v]$, we obtain $r_\pi(u_1) = r_\pi(u_0)$.
    We set $u_c := u_1$.
\end{proof}

%The next lemma is the cornerstone of our approach in this section:
\begin{mylemma}\label{lem:wb-local-search}
    Let $G=(V,E)$ be weakly bridged, let $\pi$ be an arbitrary profile, and let $v \in V$.
    If $v \notin C_\pi(G)$, then a vertex $u \in B_3(v)$ such that $r_\pi(u) < r_\pi(v)$, \underline{and} $d(u,C_\pi(G)) < d(v,C_\pi(G))$, can be computed in $\cO(m)$ time.
\end{mylemma}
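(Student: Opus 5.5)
We split into two cases according to whether $v$ is a local minimum of $r_\pi$ in $G$. Throughout, fix a center $c \in C_\pi(G)$ closest to $v$, so $d(v,C_\pi(G)) = d(v,c)$. It suffices to produce $u$ with $r_\pi(u) < r_\pi(v)$ and $d(u,c) < d(v,c)$. The difficulty is that $c$ is unknown to the algorithm. So the selection rule must work simultaneously for every $c$, or at least for the closest one.

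\emph{Case 1: $v$ is not a local minimum.} By (WB.1), in $\cO(m)$ time we compute all neighbors $u$ with $r_\pi(u) < r_\pi(v)$ together with their values. Hence we know $\mu = \min\{r_\pi(w) : w \in N[v]\}$ and the set $U = \{u \in N(v) : r_\pi(u) = \mu\}$. By Lemma~\ref{lem:wb-best-response}, some $u_c \in U$ satisfies $d(u_c,c) < d(v,c)$, but we cannot identify it without knowing $c$.

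We bypass this by moving one step further. Pick any $u_0 \in U$. The proof of Lemma~\ref{lem:wb-best-response} shows that $d(u_0,c) \le d(v,c)$, and that if equality holds then $u_0$ and $v$ have a common neighbor in $S_1(v,c)$. Either way, some vertex of $B_1(u_0) \cap N[v]$ is strictly closer to $c$ and has value $\mu$. This still does not pin the vertex down.

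The cleaner route is to consider the set $U$ itself. It is a clique-like set of minimizers, and by convexity of balls the sublevel set $L = \{w : r_\pi(w) \le \mu\}$ is convex. Since $r_\pi$ is a maximum of functions with convex sublevel sets, $L \ni c$. So $I(u,c) \subseteq L$ for every $u \in U$. Take $u_0 \in U$. If $d(u_0,c) = d(v,c)$, then by (TC) some $u_1 \in S_1(u_0,c) \cap S_1(v,c)$ lies in $L \cap N(v)$, so $u_1 \in U$. Now $u_0 \sim u_1$, $d(u_1,c) = d(u_0,c) - 1$, and $u_1 \in S_1(u_0,c)$.

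So I would output a vertex $u \in B_2(v)$ minimizing $r_\pi$ among the vertices adjacent to some vertex of $U$, breaking ties appropriately. I expect this tie-breaking to be the main obstacle. A robust choice is to take $u$ minimizing $r_\pi$ over $B_2(v)$, or rather over $N[U]$, computed in $\cO(m)$ time through the (WB.1) machinery applied from $u_0$. Then $d(u,c) < d(v,c)$ is argued by convexity: if $d(u,c) \ge d(v,c)$, then $v$ or $u_0$ lies in an interval $I(u,c)$. This would force the value along that interval to be at most $\max\{r_\pi(u), r_\pi(c)\}$, contradicting strictness or the minimality, as in the first half of the proof of Lemma~\ref{lem:wb-best-response}. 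If ties remain, I apply (TC) once more, exactly as in that lemma. This explains the radius $3$ in the statement.

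\emph{Case 2: $v$ is a local minimum but not a center.} By (WB.2) we get $u \in B_2(v)$ with $r_\pi(u) < r_\pi(v)$, and we know that $u \in N[u_c^*]$ for some $u_c^* \in S_2(v,c)$. If $u = u_c^*$ we are done, since $d(u,c) = d(v,c) - 2$. Otherwise $u \sim u_c^*$, and we get $d(u,c) \le d(v,c) - 1$, which is already strictly smaller. Thus the vertex from (WB.2) works as is, in $\cO(m)$ time. Its value is strictly smaller than $r_\pi(v)$, and its distance to $c$, hence to $C_\pi(G)$, strictly decreases.
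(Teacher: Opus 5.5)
Your Case 2 is the paper's own argument: the vertex returned by (WB.2) lies in $N[u_c^*]$ for some $u_c^*\in S_2(v,c)$, so it is strictly closer to $c$. The gap is in Case 1, where you never give a computable selection rule.

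\textbf{Why Case 1 does not go through.} Your convexity step, \emph{if $d(u,c)\ge d(v,c)$ then $v$ or $u_0$ lies in $I(u,c)$}, does not follow. Take the tie configuration $u\sim u_0$ and $d(u,c)=d(u_0,c)=d(v,c)$. Then neither $v$ nor $u_0$ belongs to $I(u,c)$, so convexity of sublevel sets says nothing, and nothing in your argument excludes this configuration. Your fallback, \emph{apply (TC) once more}, requires the unknown center $c$. It therefore only re-proves that a good vertex exists among the minimizers, which Lemma~\ref{lem:wb-best-response} already gives, without telling the algorithm which one to output.

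The running time is also unsupported. (WB.1) at $u_0$ yields $r_\pi$ only on $N[u_0]$, and only when $u_0$ is not a local minimum. It gives neither the minimum over $N[U]$ nor over $B_2(v)$, and an $\cO(m)$ budget allows only a constant number of such calls. You also never treat the sub-case where $u_0$ is itself a center or a local minimum.

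\textbf{The missing idea: the paper's tie-breaking rule.}
\begin{enumerate}
\item Let $X(v)$ be the set of minimizers of $r_\pi$ in $N(v)$ (your $U$). Since $X(v)\subseteq S_1(v,z)$ for $z\in F_\pi(v)$, it is a clique by Lemma~\ref{lem:convex}. It contains a vertex at distance $d(v,C_\pi(G))-1$ from $C_\pi(G)$, so every vertex of $X(v)$ is within $d(v,C_\pi(G))$ of $C_\pi(G)$.
\item Pick any $u_0\in X(v)$. If $u_0\in C_\pi(G)$, output it.
\item If $u_0$ is a local minimum, (WB.2) at $u_0$ returns $u_1\in B_2(u_0)\subseteq B_3(v)$, which is strictly closer by your Case 2 argument. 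This, not a second use of (TC), is where the radius $3$ comes from.
\item Otherwise, compute the clique $X(u_0)$ by (WB.1). Output $u_2\in X(u_0)$ maximizing $|N(u_2)\cap X(v)|$.
\end{enumerate}

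\textbf{Why $u_2$ works.} Weakly bridged graphs have no induced $C_4$, and $X(v),X(u_0)$ are cliques. Hence the traces $N(u'')\cap X(v)$, for $u''\in X(u_0)$, are totally ordered by inclusion.

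Suppose $u_2$ were not strictly closer. Then $d(u_2,C_\pi(G))=d(u_0,C_\pi(G))=d(v,C_\pi(G))=k$. Take $c_0$ a center closest to $v$. Lemma~\ref{lem:wb-best-response}, applied at $v$ and at $u_0$, gives $x_0\in X(v)$ and $x_2\in X(u_0)$ at distance $k-1$ from $c_0$. Both lie in the clique $S_1(u_0,c_0)$, so $x_0\sim x_2$. Nestedness of the traces plus maximality of $u_2$ then force $u_2\sim x_0$, so $x_0\in I(u_2,c_0)$. Convexity of balls gives $r_\pi(x_0)\le\max\{r_\pi(u_2),r_\pi(c_0)\}=r_\pi(u_2)<r_\pi(u_0)=r_\pi(x_0)$, a contradiction.
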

\begin{proof}
    We start with the following claim:
    \begin{myclaim}\label{claim:wb-loca-min-g}
        If (WB.2) applies to $v$, then it returns a vertex $u \in B_2(v)$ such that $r_\pi(u) < r_\pi(v)$, and $d(u,C_\pi(G)) < d(v,C_\pi(G))$.
    \end{myclaim}
        Indeed, let $c \in C_\pi(G)$ be closest to $v$.
        Recall that there exists some $u_c^* \in S_2(v,c)$ such that $u \in N[u_c^*]$.
        Hence, $d(u,C_\pi(G)) \le d(u,c) \le 1+d(u_c^*,c) = d(v,c)-1$. $\diamond$

    \smallskip
    \noindent
    We now describe our algorithm for an arbitrary vertex $v$.
    If $v$ is a local minimum of $r_\pi$ in $G$, then we are done by Claim~\ref{claim:wb-loca-min-g}.
    Otherwise, we apply (WB.1) to $v$.
    By doing so, we compute the subset $X(v)$ of all its neighbors minimizing $r_\pi$.
    As $X(v) \subseteq S_1(v,z)$ for each $z \in F_\pi(v)$, by Lemma~\ref{lem:convex}, $X(v)$ is a clique.
    Furthermore, we observe that $d(u',C_\pi(G)) \le d(v,C_\pi(G))$ for every $u' \in X(v)$.
    Indeed, this is because $X(v)$ is a clique and, by Lemma~\ref{lem:wb-best-response} applied for any $c \in C_\pi(G)$ closest to $v$, some unknown vertex in $X(v)$ is at distance $d(v,C_\pi(G))-1$ to the center.
    We pick some arbitrary $u_0 \in X(v)$.
    If (WB.1), (WB.2) cannot be applied to $u_0$, then necessarily $u_0 \in C_\pi(G)$. In this situation, we return $u := u_0$.
    Assume now that (WB.2) applies to $u_0$. By Claim~\ref{claim:wb-loca-min-g}, the latter returns some $u_1 \in B_2(u_0) \subseteq B_3(v)$ such that: $r_\pi(u_1) < r_\pi(u_0) < r_\pi(v)$; and $d(u_1,C_\pi(G)) < d(u_0,C_\pi(G)) \le d(v,C_\pi(G))$. In this situation, we return $u := u_1$.
    Otherwise, we apply (WB.1) to $u_0$ to compute the subset $X(u_0)$ of all its neighbours minimizing $r_\pi$.
    By similar convexity arguments as above, $X(u_0)$ is a clique.
    Furthermore, $d(u'',C_\pi(G)) \le d(u_0,C_\pi(G)) \le d(v,C_\pi(G))$ for every $u'' \in X(u_0)$.
    Let $u_2$ be maximizing $|N(u_2) \cap X(v)|$ within $X(u_0)$.
    In this situation, we return $u := u_2$.
    Indeed, we first note that $r_\pi(u_2) < r_\pi(u_0) < r_\pi(v)$.
    Suppose now by contradiction $d(u_2,C_\pi(G)) \ge d(v,C_\pi(G))$.
    Then, $d(u_2,C_\pi(G)) = d(u_0,C_\pi(G)) = d(v,C_\pi(G)) = k$.
    Let $c_0 \in C_\pi(G)$ be closest to $v$.
    By Lemma~\ref{lem:wb-best-response} applied to both $v,u_0$, there exist $x_0 \in X(v)$, $x_2 \in X(u_0)$ such that $d(x_0,c_0) = d(x_2,c_0) = k-1$.
    As $x_0,x_2 \in S_1(u_0,c_0)$, by Lemma~\ref{lem:convex}, we get $x_0 \sim x_2$.
    However, as $G$ has convex balls, there is no induced $C_4$.
    It implies that the subsets $N(u'') \cap X(v)$, for $u'' \in X(u_0)$, must be comparable for inclusion.
    Therefore, due to the maximality of $|N(u_2) \cap X(v)|$, $u_2 \sim x_0$ holds.
    It implies that $x_0 \in S_1(u_2,c_0)$.
    But then, for every $y \in \Supp(\pi)$, as $G$ has convex balls, $d(w,y) \le \max\{d(u_2,y),d(c_0,y)\}$ holds for every $w \in I(u_2,c_0)$.
    So, $r_\pi(w) \le \max\{r_\pi(u_2),r_\pi(c_0)\} = r_\pi(u_2)$ for every $w \in I(u_2,c_0)$.
    As $x_0 \in I(u_2,c_0)$, and $r_\pi(x_0) = r_\pi(u_0) > r_\pi(u_2)$, a contradiction arises.
\end{proof}

We can now state the main result of this section.
Theorem~\ref{th:hyp}(\ref{item:hyp-1}) is used for selecting the start vertex of our local search.
To improve the runtime for chordal graphs, we use a stronger version of this result (Theorem~\ref{thm:hyp-weak}), in combination with the following property of this class of graphs:
\begin{mylemma}[see Theorem $1$ in~\cite{BrChDr}]\label{lem:lambda-chordal}
    Every chordal graph $3$-embeds into a tree. 
\end{mylemma}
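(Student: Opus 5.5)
The plan is to build the tree from a BFS layering of $G$, following the layering-tree approach of Chepoi and Dragan. Fix an arbitrary vertex $s$ and let $L_i=\{x : d_G(s,x)=i\}$ be the BFS layers. A \emph{cluster} is a connected component of the subgraph induced by $L_i$ in $G - B_{i-1}(s)$. That is, two vertices of $L_i$ are in the same cluster if and only if they are joined by a path all of whose vertices are at distance at least $i$ from $s$.

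The clusters, with a cluster $C\subseteq L_i$ made adjacent to the unique cluster of $L_{i-1}$ containing the lower neighbours of $C$, form a tree $\Gamma$. The parent cluster is unique because all lower neighbours of $C$ lie in one connected piece of $G-B_{i-2}(s)$. Computing the layers, the clusters and $\Gamma$ takes $\cO(m)$ time.

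The tree $T$ is obtained from $\Gamma$ by giving every cluster edge weight $1$ and attaching each vertex $x\in V$ as a new leaf, by an edge of weight $1$, to the node of its cluster $C_x$. Then $V\subseteq V(T)$. For distinct $u,v\in V$,
$$d_T(u,v)=d_\Gamma(C_u,C_v)+2.$$

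Two inequalities will be proved about this tree.
\begin{enumerate}
\item $d_\Gamma(C_u,C_v)\le d_G(u,v)$. This holds in every graph: following a shortest $(u,v)$-path, consecutive vertices lie in equal or adjacent clusters of $\Gamma$.
\item $d_G(u,v)\le d_\Gamma(C_u,C_v)+2$. This is where chordality is used.
\end{enumerate}
Combining them gives $d_G(u,v)\le d_T(u,v)$ and $d_T(u,v)\le d_G(u,v)+2\le 3d_G(u,v)$, since $d_G(u,v)\ge 1$. Hence $G$ $3$-embeds into $T$ in the sense of the definition, and the additive form also matches what Lemma~\ref{lem:tree-embedding} and Theorem~\ref{thm:hyp-weak} use.

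The core structural step, and the main obstacle, is the following claim.

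\begin{myclaim}
For every cluster $C\subseteq L_i$ with $i\ge 1$, the set $N(C)\cap L_{i-1}$ is a clique.
\end{myclaim}

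To prove it, take $a,b\in N(C)\cap L_{i-1}$ and suppose $a\nsim b$. Close a cycle through $a$, then a path inside $C$, then $b$, then shortest paths from $b$ and from $a$ down toward $s$, chosen to meet as late as possible. Pick such a cycle of minimum length. A chord at the top could only be $ab$, which is excluded. Chords between the two descending paths would contradict that they meet as late as possible, or the minimality of the cycle. So some part of this cycle of length at least $4$ would be induced, contradicting chordality.

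With the claim in hand, inequality 2 follows. Let $A$ be the nearest common ancestor of $C_u$ and $C_v$ in $\Gamma$. Walk down from $u$ by shortest paths toward $s$ to a vertex $u'$ of $A$, and similarly from $v$ to a vertex $v'$ of $A$; the walks have lengths $d_\Gamma(C_u,A)$ and $d_\Gamma(C_v,A)$. Inside $A$, the clique property gives: when $C_u\neq C_v$, $u'$ and $v'$ can be taken with a common lower neighbour or equal. This has to be arranged carefully. Applying the claim to the child clusters of $A$ lets the descending walks be chosen so that their last steps land in a common clique of $A$. That bounds the extra cost by $2$ rather than the naive bound $\diam(A)$.

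If this sharpening fails, the fallback is to prove only $\diam_G(C)\le 3$ from the claim and use that as the additive error. This would require adjusting the leaf weights (for instance weight $3/2$) and rechecking the factor $3$ against pairs with $d_G(u,v)=1$. I expect the bookkeeping for that case to be the delicate point.
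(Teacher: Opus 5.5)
Your construction fails at inequality~2, $d_G(u,v)\le d_\Gamma(C_u,C_v)+2$, which is false for chordal graphs. For $C_u=C_v$ it would force every cluster to have diameter at most $2$. Your clique claim only gives $\diam_G(C)\le 3$, and that bound is attained. (The claim itself is correct: for the component $K$ of $G-B_{i-1}(s)$ containing $C$, the set $N(K)$ is a minimal separator, hence a clique.)

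For a counterexample, take $s\sim a,b$, $a\sim b$, a path $u\,w_1\,w_2\,v$, and the edges $ua,aw_1,bw_1,bw_2,bv$. The ordering $s,u,v,a,b,w_1,w_2$ is a perfect elimination ordering, so the graph is chordal. Here $L_1=\{a,b\}$ and $\{u,w_1,w_2,v\}$ is a single cluster of $L_2$. Since $N(u)=\{a,w_1\}$ and $N(v)=\{b,w_2\}$ are disjoint, $d_G(u,v)=3>2=d_T(u,v)$. Now add pendant vertices $x\sim u$ and $y\sim v$, which form two distinct child clusters. Then $d_G(x,y)=5>4=d_T(x,y)$, so your nearest-common-ancestor sharpening fails as well. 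The two descending walks land in two different cliques of $A$, and nothing forces those cliques to be close.

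Your fallback does not recover the constant $3$ either. Descending to the nearest common ancestor cluster honestly gives $d_\Gamma(C_u,C_v)\le d_G(u,v)\le d_\Gamma(C_u,C_v)+3$. So leaves of weight $3/2$ do make $T$ non-contractive. But then every edge between consecutive layers, such as $ua$, gets $d_T(u,a)=4$, so you only obtain a $4$-embedding.

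No uniform weighting of the layering tree does better. Put weight $\alpha$ on $\Gamma$-edges and $w$ on leaves. The example above then requires $2w\ge 3$ (pair $u,v$), $\alpha+2w\le 3$ (edge $ua$) and $2\alpha+2w\ge 5$ (pair $x,y$). These are infeasible: the first two give $\alpha\le 0$, while the last two give $\alpha\ge 2$.

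So the factor $3$ needs a genuinely different tree. The paper does not build one; it imports the statement from Theorem~1 of \cite{BrChDr}. Your $4$-embedding would still suffice for the only downstream use, Theorem~\ref{thm:weakly-bridged}(\ref{item-chordal}), where all that matters is $d(c_1,C_\pi(G))=\cO(1)$. But it is not the lemma as stated.
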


\begin{mytheorem}\label{thm:weakly-bridged}
    For every weakly bridged graph $G$, for any profile $\pi$, the \textsc{Weighted Center} problem can be solved
    \begin{enumerate}
            \item\label{item-wb-hyp} in $\cO(\delta m\log{n})$ time if $G$ is $\delta$-hyperbolic;
            \item\label{item-chordal} and in $\cO(m)$ time if $G$ is chordal.
        \end{enumerate}
\end{mytheorem}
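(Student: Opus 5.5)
The plan is a local search from a good starting vertex, using Lemma~\ref{lem:wb-local-search} as the improvement step. First compute a vertex $c_1$ with $C_\pi(G) \subseteq B_{\rho}(c_1)$ for a small $\rho$. Then, starting from $v_0 = c_1$, repeatedly apply Lemma~\ref{lem:wb-local-search}: as long as the current vertex $v_i$ is not a center, it returns $v_{i+1} \in B_3(v_i)$ with $r_\pi(v_{i+1}) < r_\pi(v_i)$ and $d(v_{i+1},C_\pi(G)) < d(v_i,C_\pi(G))$.

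We must also detect when $v_i$ is a center. By (WB.1) and (WB.2) this happens exactly when neither rule applies. Since every radius function on a weakly bridged graph is $2$-weakly peakless, Lemma~\ref{lem:wp} shows that a vertex minimizing $r_\pi$ in its ball of radius $2$ is a global minimum. So failing to find an improving vertex in $B_2(v_i)$ certifies $v_i \in C_\pi(G)$, and the procedure of Lemma~\ref{lem:wb-local-search} reports this case in $\cO(m)$ time.

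The distance to $C_\pi(G)$ is a nonnegative integer, at most $\rho$ initially, and drops by at least one per step. Hence the sequence has length $\ell \le \rho$, and the total cost is $\cO(T_0 + \rho m)$, where $T_0$ is the time needed to find $c_1$. To output all of $C_\pi(G)$ rather than one center, we can compute $r_\pi(v_\ell)$ with one BFS-based pass in $\cO(m)$. It is also worth noting that $C_\pi(G)$ is convex here, since balls are convex and every $w \in I(c,c')$ satisfies $r_\pi(w) \le \max\{r_\pi(c),r_\pi(c')\}$. Only one center is required, however.

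For item (\ref{item-wb-hyp}), we obtain $c_1$ from Theorem~\ref{th:hyp}(\ref{item:hyp-1}) in $\cO(m)$ time, with $\rho = 12\delta\log n + 4\delta + 11$. The local search then costs $\cO(\rho m) = \cO(\delta m \log n)$, which gives the claimed bound. (Theorem~\ref{th:hyp}(\ref{item:hyp-2}) would give a smaller $\rho$ but costs $\cO(\delta m\log^2 n)$, which is worse.)

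For item (\ref{item-chordal}), chordal graphs are weakly bridged and $1$-hyperbolic. By Lemma~\ref{lem:lambda-chordal} they $3$-embed into a tree, so Theorem~\ref{thm:hyp-weak} with $\lambda = 3$, $\delta = 1$ gives in $\cO(m)$ time a vertex $c_1$ with $C_\pi(G) \subseteq B_{6\cdot 3+4+5}(c_1) = B_{27}(c_1)$. The local search then takes at most $27$ steps of $\cO(m)$ each, for $\cO(m)$ overall.

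The step I expect to need most care is the termination test. Each $\cO(m)$ call must either return an improving vertex that is strictly closer to $C_\pi(G)$ or correctly certify that the current vertex is a center. This relies on combining (WB.1), (WB.2) and the $G^2$-unimodality from Lemma~\ref{lem:wp}, and I would check this combination against the case analysis in the proof of Lemma~\ref{lem:wb-local-search}. The remaining arguments are bookkeeping.
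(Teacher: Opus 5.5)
Your proposal is correct and follows the paper's proof. You start from the vertex $c_1$ given by Theorem~\ref{th:hyp}(\ref{item:hyp-1}), or for chordal graphs by Theorem~\ref{thm:hyp-weak} combined with Lemma~\ref{lem:lambda-chordal} so that $d(c_1,C_\pi(G))=\cO(1)$, and then iterate Lemma~\ref{lem:wb-local-search} for at most $d(c_1,C_\pi(G))$ steps of $\cO(m)$ time each. Your termination test reasonably fills in a detail the paper leaves implicit (the guarantees of (WB.1)/(WB.2) already suffice, with any returned candidate checked by one BFS), whereas the aside about outputting all of $C_\pi(G)$ is unnecessary and, as written, incomplete, since computing $r_\pi(v_\ell)$ only yields the radius.
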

\begin{proof}
        We compute some vertex $c_1$ such that $d(c_1,C_\pi(G)) \in \cO(\delta\log{n})$, whose existence follows from Theorem~\ref{th:hyp}(\ref{item:hyp-1}).
        If $G$ is chordal, then by the combination of Lemma~\ref{lem:lambda-chordal} with Theorem~\ref{thm:hyp-weak}, the vertex $c_1$ can be chosen such that $d(c_1,C_\pi(G)) \in \cO(1)$.
        Then, starting from $v := c_1$, we repeatedly apply Lemma~\ref{lem:wb-local-search} until we find a center.
        The number of steps of the local search is at most $d(c_1,C_\pi(G))$. Furthermore, each step can be done in $\cO(m)$ time.
\end{proof}

\subsection{(Bipartite) Helly graphs}\label{helly}
We obtain almost optimal algorithms for the \textsc{Weighted Center} problem on dually chordal graphs, and chordal bipartite graphs.
A graph is \emph{dually chordal} if it is the intersection graph of the maximal cliques of some chordal graph.
The same as chordal graphs can be characterized by the existence of a perfect elimination ordering, the dually chordal graphs can be also characterized by the existence of a special type of dismantling ordering~\cite{BrChDrDuallyChordal}.
A \emph{chordal bipartite graph} is a bipartite graph with no induced cycle of length more than four.
Both the dually chordal graphs and the chordal bipartite graphs are $1$-hyperbolic.

We also consider some of their metric generalizations.
Namely, a graph is called \emph{Helly} if every family of pairwise intersecting balls has a nonempty common intersection.
A \emph{half-ball} in a bipartite graph is the intersection of a ball with one of its two partite sets.
A bipartite graph is called \emph{bipartite Helly} if and only if every family of pairwise intersecting half-balls has a nonempty common intersection.
Interestingly, every (bipartite) graph $G$ can be isometrically embedded in a smallest (bipartite) Helly graph $\mathcal{H}(G)$, which is called its \emph{injective hull}.
Furthermore, dually chordal graphs and chordal bipartite graphs are Helly and bipartite Helly, respectively.

\begin{mytheorem}\label{th:helly}
    For a (bipartite) Helly graph, the \textsc{Weighted Center} problem can be solved
    \begin{enumerate}
        \item\label{item-helly-gal} in $\cO(\delta^3m\log^2{n})$ time, if it is $\delta$-hyperbolic;
        \item\label{item-chordal-like} in $\cO(m\log{n})$ time, if it is dually chordal or chordal bipartite.
    \end{enumerate}
\end{mytheorem}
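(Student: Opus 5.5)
Using Lemma~\ref{lem:weighted-center-decision}, I will reduce the problem to its decision version: for a threshold $\tau$, either output a vertex $c$ with $r_\pi(c)\le\tau$, or assert that none exists. This reduction costs a factor $\cO(\log n)$. The decision version is a Helly-type problem. Setting $\lambda_\tau(v)=\lfloor \tau/\pi(v)\rfloor$ for each $v\in\Supp(\pi)$, we need a vertex in
\[
\bigcap\{B_{\lambda_\tau(v)}(v): v\in\Supp(\pi)\}.
\]
Before running the binary search I will localize the centers. In the $\delta$-hyperbolic case I apply Theorem~\ref{th:hyp}(\ref{item:hyp-2}) once, obtaining $c_2$ with $C_\pi(G)\subseteq B_{\rho}(c_2)$, where $\rho=124\delta+21$. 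For dually chordal and chordal bipartite graphs I instead combine Theorem~\ref{thm:hyp-weak} with an $\cO(1)$-embedding into a tree, analogous to Lemma~\ref{lem:lambda-chordal}. Such embeddings are known for these classes, and they yield a vertex $c_1$ with $C_\pi(G)\subseteq B_{\cO(1)}(c_1)$ in $\cO(m)$ time. In both cases the remaining task is: given a ball $B_\rho(c)$ containing all centers and a value $\tau$, decide whether some $z\in B_\rho(c)$ satisfies $d(z,v)\le \lambda_\tau(v)$ for every $v$.

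For this localized decision problem I will adapt the technique of~\cite{HellyHyp}, projecting every support vertex onto the ball as in the proof of Theorem~\ref{thm:hyp-strong}. For each $v\in\Supp(\pi)$ let $\ell(v)=\max\{0,d(c,v)-\rho\}$. If $\lambda_\tau(v)<\ell(v)$, reject. If $\lambda_\tau(v)\ge \ell(v)+2\rho$, discard $v$. Otherwise $i=\lambda_\tau(v)-\ell(v)\in[0,2\rho)$. The key step is to replace the constraint $d(z,v)\le\lambda_\tau(v)$ for $z$ inside the ball by the constraint $d(z,q(v))\le i$ for a projection $q(v)$. In general graphs this fails because of the additive $2\delta$ loss in Lemma~\ref{lem:hyp-proj}. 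In Helly graphs, however, it is exact: the set $B_{\ell(v)}(v)\cap B_{d(c,v)-\ell(v)}(c)$ is nonempty, and every vertex $z\in B_\rho(c)$ with $d(z,v)\le \lambda_\tau(v)$ lies within distance $i$ of this ``gate set.'' Using the Helly property of balls (respectively half-balls), the constraint becomes $z\in B_i(\text{gate set})$. Grouping the support vertices by their value of $i$ gives $\cO(\rho)$ sets $M_i$, and we need a vertex of
\[
B_\rho(c)\cap\bigcap_i B_i(M_i).
\]
Each $B_i(M_i)$ is computed by one multi-source BFS in $\cO(m)$ time. One subtlety is that gate sets are not single vertices. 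I will handle this by noting that in a Helly graph a $z$ in the ball is within $i$ of all of $v$'s constraints if and only if it is within $i$ of some gate vertex. By the Helly property the relevant gate vertex can be taken as the one produced by the BFS tree from $c$, but this choice must be justified through a pairwise-intersection argument on the three balls $B_\rho(c)$, $B_{\lambda_\tau(v)}(v)$, and $B_i(z)$.

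The running time is $\cO(\rho m)$ per decision query, and $\cO(\rho m\log n)$ overall through Lemma~\ref{lem:weighted-center-decision}. For dually chordal and chordal bipartite graphs $\rho=\cO(1)$, which gives $\cO(m\log n)$. In the general hyperbolic case this gives $\cO(\delta m\log n)$ plus the $\cO(\delta m\log^2 n)$ cost of localization. The stated $\delta^3$ factor suggests extra $\rho^2$ bookkeeping: for instance, the bipartite case may need half-balls at each parity together with a pairwise check between the $\cO(\rho)$ classes, or the exactness of the projection may need $\rho$-many rounds. I would account for it that way.

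The main obstacle is the exactness of the projection step, i.e.\ showing in (bipartite) Helly graphs that the condition $d(z,v)\le\lambda_\tau(v)$ for $z\in B_\rho(c)$ reduces exactly to a distance condition to a computable subset of the ball. My first attempt will be to define $q(v)$ as the entire set $S_\rho(c,v)$, which the BFS from $c$ marks. I will then prove that $d(z,v)=\ell(v)+d(z,S_\rho(c,v))$ for all $z\in B_\rho(c)$, using the Helly property on the balls $B_{\ell(v)}(v)$, $B_{d(z,v)-\ell(v)}(z)$, and $B_\rho(c)$, which pairwise intersect. In the bipartite case the same argument will use half-balls, with parities handled separately.
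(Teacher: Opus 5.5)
Your localization step and the reduction through Lemma~\ref{lem:weighted-center-decision} match the paper. Your identity $d(z,v)=\ell(v)+d(z,S_\rho(c,v))$ for $z\in B_\rho(c)$ is also correct: apply the Helly property to $B_{\ell(v)}(v)$, $B_{d(z,v)-\ell(v)}(z)$ and $B_\rho(c)$, or to the corresponding half-balls in the bipartite case. The gap comes after this, when you turn the identity into an efficient decision procedure.

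\textbf{The constraint is not a ball around one vertex.} For one support vertex $v$ the constraint is $d(z,S_\rho(c,v))\le i$, a neighbourhood of a \emph{set}. It cannot be replaced by $d(z,q)\le i$ for a BFS-tree gate $q$, nor for any single vertex $q$. Take the king's grid, i.e.\ the strong product of two paths, which is a Helly graph. Let $c=(0,0)$, $\rho=2$, $v=(5,0)$ (so $\ell(v)=3$) and $i=1$. Both $z=(1,2)$ and $z'=(1,-2)$ lie in $B_2(c)\cap B_4(v)$. But $d(z,z')=4$, so no vertex is within distance $1$ of both. Hence no pairwise-intersection argument can justify a single gate.

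\textbf{The BFS step computes the wrong set.} Per class $i$ you need the \emph{intersection} $\bigcap_{v\in V_i}\{z\in B_\rho(c): d(z,S_\rho(c,v))\le i\}$. A multi-source BFS from $\bigcup_v S_\rho(c,v)$ returns the \emph{union} of these sets. Running one BFS per support vertex instead can cost $\Theta(nm)$. So the decision procedure, which is the core of the theorem, is missing. Your guesses about where the $\delta^3$ factor comes from miss for the same reason: that factor is the price of computing this intersection correctly.

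\textbf{What the paper does instead.} For each $i$ it computes $X_i=\bigcap\{B_{\lambda_\tau(v)}(v)\cap B_\rho(c^*): v\in V_i\}$ directly.
\begin{enumerate}
\item Support vertices with $\ell(v)<\rho$ are grouped by the value $\ell(v)=j$. Each group is handled by Lemma~\ref{lem:helly-for-DH} (resp.\ Lemma~\ref{lem:bip-Helly}) with the small uniform radius $j+i$.
\item For $\ell(v)\ge\rho$, it sweeps the BFS layers of $c^*$ inwards, from the outermost layer down to layer $2\rho$. It maintains groups $Y$ of support vertices together with pairwise disjoint nonempty sets $Z=\bigcap_{v\in Y}S_t(c^*,v)$.
\item The Helly property shows that $N(Z)$ intersected with the next layer is exactly the next common slice. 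Groups whose sets overlap are merged by repeatedly picking a vertex lying in a maximum number of sets.
\item With a bucket queue, each merge round is linear in the edges between consecutive layers, so the whole sweep takes $\cO(m)$.
\item From layer $2\rho$ it then grows $Z\mapsto Z\cup N(Z)$ for $\cO(\rho)$ more rounds inside suitably growing balls around $c^*$, and finally intersects with $B_\rho(c^*)$. In the bipartite case it additionally assumes all vertices of a group lie in the same partite set.
\end{enumerate}
This costs $\cO(\rho^2m)$ per class $i$ and $\cO(\rho^3m)$ per threshold $\tau$, which is exactly where the $\delta^3$ factor comes from.
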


Roughly, we use Theorem~\ref{th:hyp} to restrict the search for a center to some ball of radius $\cO(\delta)$.
Then, for any profile $\pi$, for any value $\tau$, we can identify the vertices $u$ of this ball such that $r_\pi(u) \le \tau$ using a nontrivial combination of prior results for binary profiles on these classes of graphs and a generalization of the techniques from~\cite{HellyHyp}, Section $4.1$.

Our main tool for Helly graphs is the following Lemma~\ref{lem:helly-for-DH}:
    \begin{mylemma}[see Theorem $2$ in~\cite{Helly}]\label{lem:helly-for-DH}
        Let $H = (V,E)$ be a Helly graph.
        For any subset $M$, and for any integer $k$, we can compute $\bigcap\{ B_k(x) :x \in M \}$ in $\cO(km)$ time.
    \end{mylemma}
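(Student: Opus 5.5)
The plan is to build the intersections $D_j := \bigcap\{B_j(x) : x \in M\}$ level by level for $j \le k$, paying $\cO(m)$ per level via a single closed-neighbourhood expansion. The structural fact driving this is the following.

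\emph{Key claim.} In a Helly graph, if $D_j \neq \emptyset$ then $D_{j+1} = N[D_j]$.

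The inclusion $N[D_j] \subseteq D_{j+1}$ is immediate from the triangle inequality. Conversely, let $v \in D_{j+1}$ and consider the family $\{B_1(v)\} \cup \{B_j(x) : x \in M\}$. It is pairwise intersecting:
\begin{itemize}
\item two balls $B_j(x), B_j(y)$ both contain $D_j \neq \emptyset$;
\item $B_1(v) \cap B_j(x) \neq \emptyset$ because $d(v,x) \le j+1$, so a shortest $(v,x)$-path has a vertex in both.
\end{itemize}
By the Helly property there is a vertex $w \in N[v] \cap D_j$, so $v \in N[D_j]$. Given $D_j$ as a marked vertex set, computing $N[D_j]$ is one scan of the adjacency lists, i.e.\ $\cO(m)$. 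Hence once some nonempty $D_{j_0}$ with $j_0 \le k$ is known, the $k - j_0$ remaining expansions cost $\cO(km)$ in total.

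It remains to find the first nonempty level, or to certify that $D_k = \emptyset$. By the Helly property applied to balls of equal radius, $D_j \neq \emptyset$ if and only if $d(x,y) \le 2j$ for all $x,y \in M$. So the first nonempty level is $j_0 = \lceil \diam(M)/2 \rceil$, and $D_k = \emptyset$ exactly when $\diam(M) > 2k$. We cannot afford to compute $\diam(M)$ exactly, since that would need a BFS from every vertex of $M$.

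To get around this, I would seed the induction with a different family for which Helly still works. I would replace the balls $B_j(x)$ by the sets $A_j(x) := B_j(x)$ intersected with a "half-step" set. Concretely, I would work in the subdivision-like setting where radii are allowed to be half-integers, the standard device for injective hulls of Helly graphs. I would then run the same expansion argument starting from $A_0(x) = \{x\}$ and the trivial level. The claim would have to be proved again for half-integer radii, using the fact that Helly graphs are exactly the graphs whose balls of all radii satisfy Helly.

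If that approach fails, my fallback is to grow the candidate sets from $M$ using a pruned BFS layering. I would maintain, for each vertex $v$ and each level $j \le k$, whether $v$ lies in every $B_j(x)$. This membership would be deduced from neighbours at level $j-1$ via a counting argument: $v \in D_j$ if and only if $N[v]$ meets every $B_{j-1}(x)$. By Helly this is equivalent to $N[v] \cap D_{j-1} \neq \emptyset$ whenever $D_{j-1} \neq \emptyset$.

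I expect this seeding step to be the main obstacle: determining the first nonempty intersection, or detecting emptiness, within the $\cO(km)$ budget. The propagation step itself is routine once the key claim is in hand.
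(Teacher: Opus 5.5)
Your key claim is correct: if $D_j \neq \emptyset$ then $D_{j+1} = N[D_j]$ in a Helly graph. So is the reduction to knowing one nonempty level. But the proposal stops there, and neither workaround closes the gap.

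Half-integer radii do not help. For $|M| \ge 2$ the intersection of the whole family is still empty at small radii, and your claim only propagates a \emph{nonempty} intersection of the whole family. You would still need the first nonempty level $\lceil \diam(M)/2 \rceil$.

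The fallback is circular. The test ``$N[v]$ meets every $B_{j-1}(x)$'' is just a restatement of $v \in D_j$ and is a condition on each $x \in M$ separately. Evaluating it needs per-source distance information, essentially one BFS per vertex of $M$, i.e.\ $\Theta(|M|m)$ time. Its Helly simplification to $N[v] \cap D_{j-1} \neq \emptyset$ again presupposes $D_{j-1} \neq \emptyset$. So the proposal yields an $\cO(km)$ algorithm only if $\lceil \diam(M)/2 \rceil$ is given to you, and obtaining it is the actual difficulty of the lemma.

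The missing idea is to apply your claim to \emph{subfamilies} of $M$, so that no seed is ever needed. Maintain a partition $Y_1,\ldots,Y_q$ of $M$ such that every $Z_t := \bigcap\{B_j(x) : x \in Y_t\}$ is nonempty and the sets $Z_t$ are pairwise disjoint. At $j=0$, take singletons, so $Z_t = \{x_t\}$.

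To go from $j$ to $j+1$, apply your claim to each group; it gives $W_t := N[Z_t] = \bigcap\{B_{j+1}(x) : x \in Y_t\}$. Then merge groups:
\begin{itemize}
\item repeatedly pick a vertex $u$ lying in a maximum number of the remaining sets $W_t$;
\item form the new group $\bigcup\{Y_t : u \in W_t\}$, whose intersection $\bigcap\{W_t : u \in W_t\}$ contains $u$;
\item discard those indices.
\end{itemize}
By maximality of $u$, this new intersection misses every remaining $W_{t'}$, since a common vertex would lie in more sets than $u$. Hence the new intersections are again nonempty and pairwise disjoint.

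Disjointness gives $\sum_t |W_t| \le \sum_t \sum_{z \in Z_t}(\deg(z)+1) = \cO(n+m)$. Using a bucket queue keyed by the number of remaining sets containing each vertex, the merge is linear, so each level costs $\cO(m)$.

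At level $k$, a single remaining group yields the answer. Two or more groups certify emptiness, because $\bigcap\{B_k(x) : x \in M\}$ is contained in each of their pairwise disjoint intersections.

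This group-and-merge ball growing is the scheme behind the cited Theorem~$2$ of~\cite{Helly}. The paper also reuses it in the second stage of the proof of Theorem~\ref{th:helly}.
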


    There is an analogous result for bipartite Helly graphs:
    \begin{mylemma}[see Lemma $4.5$ in~\cite{Gpunimodal-ecc}]\label{lem:bip-Helly}
        Let $G = (V_0 \cup V_1,E)$ be a bipartite Helly graph.
        For any subset $M$, and for any integer $k$, we can compute $\bigcap\{ B_k(x) :x \in M \}$ in $\cO(km)$ time.
    \end{mylemma}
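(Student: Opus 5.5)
The plan is to mimic, in the bipartite setting, the ``peeling'' argument behind Lemma~\ref{lem:helly-for-DH}. Concretely, I would compute the sets $X_i = \bigcap\{ B_i(x) : x \in M\}$ for increasing $i$, obtaining each set from the previous ones by $\cO(1)$ neighbourhood operations. Each such operation costs $\cO(m)$ time: mark a set $Y$, then for each vertex count its neighbours in $Y$, or test whether it has at least one. This gives $\cO(km)$ time overall. The base cases are $X_0 = M$ if $|M| \le 1$ (and $\emptyset$ otherwise), and $X_1 = \{z : M \subseteq N[z]\}$, found by counting marked neighbours. Before starting, we may discard the trivial cases $M=\emptyset$ and $|M|=1$; the latter is a single BFS truncated at depth $k$.

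The key structural step is a recursion of the form $X_k = N[X_{k-1}] \cap (\text{parity/side constraint})$, justified by the bipartite Helly property. Fix a side $V_j$ and a vertex $z \in V_j$ with $d(z,x) \le k$ for every $x \in M$. Since $G$ is bipartite, the parity of $d(z,x)$ is fixed by the sides of $z$ and $x$. So we may split $M = M_0 \cup M_1$ by side and replace $k$ by the largest radius $k_x \le k$ of the right parity for each $x$. I would then consider the family made of the half-ball $N(z)$ (on side $V_{1-j}$) and the half-balls $B_{k_x-1}(x) \cap V_{1-j}$, $x \in M$. The half-ball $N(z)$ meets each $B_{k_x-1}(x)\cap V_{1-j}$ because $d(z,x)=k_x$ or less with the right parity. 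Two half-balls of the second kind meet on $V_{1-j}$ as soon as $d(x,x') \le k_x+k_{x'}-2$. Whenever this pairwise condition holds, the bipartite Helly property yields a common vertex $w \sim z$ lying in all the $B_{k_x-1}(x)$, so $z \in N(X'_{k-1})$, where $X'$ denotes the corresponding parity-adjusted intersection. The converse inclusion $N(X'_{k-1}) \subseteq X_k$ is immediate from the triangle inequality. Hence in this regime we obtain $X_k$ from $X'_{k-1}$ by one $\cO(m)$ step, and we iterate on $k$.

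The main obstacle is the ``tight'' case, where some pair $x,x' \in M$ is at distance exactly $k_x+k_{x'}$, so that the radius-$(k-1)$ half-balls need not intersect. Note that $d(x,x') \le 2k$ is necessary for $X_k \ne \emptyset$, and this distance check is itself cheap via BFS layers. In the tight case every vertex of $X_k$ lies in the slice $S_{k_x}(x,x')$, a metric ``middle'' set. My first attempt would be to compute, by one BFS from the current partial information, the set of vertices that are simultaneously at the extreme radius from the tight pairs. I would then apply the Helly argument only to the non-tight constraints, intersecting the result with the half-balls given by the tight constraints, which are just distance tests along the BFS layers. If this bookkeeping becomes unwieldy, the fallback is to use the injective-hull point of view. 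In a bipartite Helly graph, pairwise intersecting half-balls have a common vertex, so the sets $X_i$ are themselves intersections of half-balls. One then proves that $X_i$ together with its neighbourhood determines $X_{i+1}$ up to a one-layer correction computable in $\cO(m)$ time. Summing over $i\le k$ gives the claimed $\cO(km)$ bound.
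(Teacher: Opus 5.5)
There is a genuine gap, and it sits exactly where you located your ``main obstacle''.

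\textbf{What works.} Your non-tight step is correct. If $d(x,x')\le k_x+k_{x'}-2$ for all pairs, the half-ball Helly property does give $X_k\cap V_j=N(X_{k-1}\cap V_{1-j})$.

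\textbf{Why the tight case breaks the scheme.} Tightness is not a corner case to patch afterwards. We have $X_i=\emptyset$ for every $i<\lceil\diam(M)/2\rceil$, and at $i=\lceil\diam(M)/2\rceil$ every diametral pair of $M$ is tight. So an iteration that carries only the single set $X_i$ can never pass from an empty set to a nonempty one. For instance, on the path $x\,a\,b\,c\,x'$ with $M=\{x,x'\}$, you get $X_1=\emptyset$ but $X_2=\{b\}$.

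\textbf{Why your remedies do not fit in $\cO(km)$.}
\begin{itemize}
\item Testing the tight constraints $d(\cdot,x)\le k_x$ needs a separate BFS from every vertex in a tight pair, because one BFS from a set only yields minimum distances. Take a spider with many legs of length $k$ (a tree, hence chordal bipartite) and let $M$ be its leaves. Every pair is tight at level $k$, so this costs $\Theta(|M|m)$.
\item Checking $d(x,x')\le 2k$ for all pairs amounts to computing $\diam(M)$, which is not cheap.
\item Restricting the Helly step to non-tight constraints is not incremental. A pair that is non-tight at level $k$ may be tight at level $k-1$, so the subfamily changes at every level.
\item The fallback rests on a false premise: the path example shows that $X_i$ together with its neighbourhood does not determine $X_{i+1}$.
\end{itemize}

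\textbf{The missing idea.} Carry a \emph{partition} of $M$ instead of a single set. Maintain groups $Y_1,\dots,Y_q$ whose sets $Z_t=\bigcap_{x\in Y_t}B_i(x)$ are nonempty and pairwise disjoint, starting from the singletons $Z_t=\{x\}$ at $i=0$. In the bipartite case, first split $M$ by sides, run everything on $M\cap V_0$ and $M\cap V_1$ separately, and intersect the two outputs.

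\emph{Growing a group.} Replace $Z_t$ by $N[Z_t]$; this equals $\bigcap_{x\in Y_t}B_i(x)$. For a tested vertex $z\notin Z_t$, apply the half-ball Helly property to $N(z)$ and the half-balls $B_{i-1}(x)\cap U_b$, where $U_b$ is the side not containing $z$.
\begin{itemize}
\item $N(z)$ meets each $B_{i-1}(x)\cap U_b$ because $d(z,x)\le i$.
\item Two half-balls $B_{i-1}(x)\cap U_b$ meet because $Z_t\neq\emptyset$, after a parity fix. If some $y\in Z_t$ lies on $z$'s side, then parity (all of $Y_t$ on one side, and $z\notin Z_t$) forces $d(y,x)\le i-2$ for every $x\in Y_t$. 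So any neighbour of $y$ works.
\end{itemize}
Tightness never enters this argument: tight pairs live in different groups.

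\emph{Merging groups.} Groups are combined by \emph{exact} intersections of the grown sets. Repeatedly pick a vertex lying in the largest number of remaining grown sets, and replace those sets by their intersection. Maximality makes the new groups pairwise disjoint. A bucket queue does this in time linear in $\sum_t|N[Z_t]|$, which is $\cO(n+m)$ per level thanks to disjointness. At level $k$, the answer is $Z_1$ if one group remains and $\emptyset$ otherwise.

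The paper does not reprove this lemma; it imports it from \cite{Gpunimodal-ecc}. This is, however, the scheme of \cite{Helly} that the paper reuses in the main procedure of the proof of Theorem~\ref{th:helly} (Claims~\ref{claim:helly-2}--\ref{claim:helly-5}, with the parity bookkeeping in Claim~\ref{claim:helly-4}).
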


    Finally, for dually chordal graphs and chordal bipartite graphs, we also use the following results on tree embeddings to improve our runtime.
    \begin{mylemma}\label{lem:lambda}
		The least value $\lambda$ such that a graph $G=(V,E)$ $\lambda$-embeds into a tree satisfies:
		\begin{enumerate}
			\item\label{item:lambda-dually-chordal} $\lambda \le 4$ if $G$ is dually chordal (see Theorem $3$ in~\cite{BrChDr});
			\item\label{item:lambda-chordal-bips} $\lambda \le 5$ if $G$ is chordal bipartite (see~\cite{ChDr}).
		\end{enumerate}
	\end{mylemma}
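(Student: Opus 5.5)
Both bounds follow the same scheme. We build a spanning tree $T=(V,F)$ of $G$ (so no Steiner points are needed) with \emph{additive} distortion, namely $d_T(u,v) \le d_G(u,v) + \lambda - 1$ for all $u,v$. We then convert this into the multiplicative form used in the definition of a $\lambda$-embedding. Indeed, since $T$ is a spanning subgraph of $G$, $d_G(u,v) \le d_T(u,v)$ always holds. For $u \ne v$ we have $d_G(u,v) \ge 1$, so $d_T(u,v) \le d_G(u,v) + (\lambda-1) \le \lambda d_G(u,v)$. The case $u=v$ is trivial. It therefore suffices to find a spanning tree with additive error $+3$ for dually chordal graphs and $+4$ for chordal bipartite graphs.

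\textbf{Dually chordal graphs.} I would use the known characterization from~\cite{BrChDrDuallyChordal}: $G$ is dually chordal iff it has a spanning tree $T$ in which every closed neighborhood $N[v]$, and in fact every ball $B_k(v)$, induces a subtree of $T$. I would construct $T$ from a maximum neighborhood ordering in $\cO(m)$ time. To bound $d_T(u,v)$, fix $k=d_G(u,v)$ and exploit that the balls form a tree-like (Helly, subtree) family.
\begin{enumerate}
\item Pick a vertex $w$ on a shortest $(u,v)$-path with $d_G(u,w)=\lceil k/2\rceil$.
\item Since $B_{\lceil k/2\rceil}(w)$ induces a subtree of $T$, the $T$-path from $u$ to $v$ stays inside this ball.
\item Refine this by descending along the tree: the subtree structure of nested balls $B_j(w)$ forces the $T$-path to enter each smaller ball and leave it again. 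Since it never needs to reach depth below roughly $1$, this yields $d_T(u,v) \le k + 3$.
\end{enumerate}
Alternatively, I would pass to the chordal graph $G'$ whose clique graph is $G$ and transfer the known $+2$ tree for chordal graphs, paying one extra unit for moving between cliques and vertices.

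\textbf{Chordal bipartite graphs.} I would use a BFS layering from an arbitrary root $s$ and take as clusters the connected components of each layer $\{x : d(s,x)=i\}$, after passing to the graph on layers. The standard layering-tree argument gives a tree $T$ where $d_T(u,v) \le d_G(u,v) + 2\Delta_{\max}$, with $\Delta_{\max}$ the largest $G$-diameter of a cluster. In chordal bipartite graphs (no induced cycle longer than $4$), I expect to show that any two vertices of the same cluster are at distance at most $2$. After choosing, for each cluster, a common neighbor in the previous layer as its parent, this should give a spanning tree whose additive error is $4$ rather than the naive $2\Delta_{\max}$.

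\textbf{Main obstacle.} The delicate step is the exact constants, i.e.\ getting $+3$ and $+4$ rather than something weaker. For chordal bipartite graphs this requires showing that a cluster has a dominating-like vertex or a common parent in the previous layer. I would attempt this via the absence of induced $C_6$ applied to two shortest paths back to $s$ that diverge and reconverge. If the direct argument fails, I would simply invoke the cited results~\cite{BrChDr,ChDr} for the additive bounds and keep only the additive-to-multiplicative conversion above.
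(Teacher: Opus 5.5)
\textbf{The chordal bipartite part has a genuine gap.} Your additive-to-multiplicative conversion is correct, but it only applies to trees that are spanning subgraphs of $G$, and that requirement is where this part breaks down.

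A spanning tree with $d_T\le d_G+4$ is a multiplicative tree $5$-spanner. Brandst\"adt, Dragan, Le, Le and Uehara proved that deciding whether a graph admits a tree $t$-spanner is NP-complete on chordal bipartite graphs for every fixed $t\ge 5$. In particular, some chordal bipartite graphs have no such spanning tree, so your target cannot be reached in general.

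The intermediate steps also fail concretely:
\begin{itemize}
\item BFS layers of a bipartite graph are independent sets, so ``components of each layer'' are singletons. The clusters of the layering partition are the sets $L_i\cap C$, where $C$ is a component of $G-B_{i-1}(s)$.
\item With the correct clusters, the property you hope to derive from the absence of induced $C_6$ is false. Take $s\sim a_1,a_2$, $a_1\sim b_1,b_2$, $a_2\sim b_2,b_3$, $c_1\sim b_1,b_2$, $c_2\sim b_2,b_3$. This graph is chordal bipartite, since $b_2$ is adjacent to the whole other side and $G-b_2$ is a tree. The set $\{b_1,b_2,b_3\}$ is a single cluster at level $2$, $d(b_1,b_3)=4$, and no vertex of $L_1$ is adjacent to both $b_1$ and $b_3$.
\item A common parent for every cluster would even yield an additive tree $2$-spanner, which is far too strong.
\end{itemize}

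Your fallback does not rescue the argument. The trees of~\cite{ChDr} are distance-approximating trees on $V$: their edges need not belong to $G$, and they may shrink distances, with only $|d_T-d_G|$ bounded. So the inequality $d_G\le d_T$ that your conversion relies on is not available.

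The bound $\lambda\le 5$ has to use the freedom, allowed in the paper's definition of a $\lambda$-embedding, to use Steiner points and edge weights. For instance, suppose the layering tree $\Gamma$ satisfies $d_\Gamma\le d_G\le d_\Gamma+4$, i.e.\ clusters have diameter at most $4$. Then hanging every vertex on its cluster node by a pendant edge of length $2$ gives $d_G\le d_T\le d_G+4\le 5\,d_G$.

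\textbf{Dually chordal graphs.} Here your overall plan is sound and matches what the paper does: it cites Theorem~3 of~\cite{BrChDr}, which gives a spanning tree that is an additive $3$-spanner, hence a multiplicative $4$-spanner.

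Your own sketch of the $+3$ bound, however, is not a proof. The subtree property of balls does show that the $T$-path from $u$ to $v$ meets every slice $S_i(u,v)$, and that $d(u,\cdot)$ is non-decreasing along it. But nothing forces this path into the small balls $B_j(w)$ around your chosen midpoint $w$. You also derive no bound on the number of steps where $d(u,\cdot)$ stays constant. For this part you are relying entirely on the citation, as the paper does.
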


    \begin{proof}[Proof of Theorem~\ref{th:helly}]
        In what follows, let $\pi$ be an arbitrary profile.
        We compute a vertex $c^*$, and a radius $\rho$, such that $C_\pi(G) \subseteq B_\rho(c^*)$.
        For that, if $G$ is either dually chordal, or chordal bipartite, then we apply Theorem~\ref{thm:hyp-weak}, in combination with Lemma~\ref{lem:lambda}.
        It takes $\cO(m)$ time, and ensures that $\rho = \cO(1)$.
        Otherwise, we apply Theorem~\ref{th:hyp}(\ref{item:hyp-2}).
        It takes $\cO(\delta m \log^2{n})$ time, and ensures that $\rho = \cO(\delta)$.
        We present an $\cO(\rho^3 m)$-time algorithm for computing all the vertices $v \in B_\rho(c^*)$ such that $r_\pi(v) \le \tau$, for any value $\tau$.
        By Lemma~\ref{lem:weighted-center-decision}, the latter implies an $\cO(\rho^3 m \log{n})$-time algorithm for the \textsc{Weighted Center} problem.

        For that, we reuse some of the terminology of Theorem~\ref{thm:hyp-strong}.
        More precisely, for $v \in \Supp(\pi)$, let $\ell(v) = d(v,B_\rho(c^*))$: if $d(c^*,v) \le \rho$, then $\ell(v) = 0$, otherwise, $\ell(v) = d(v,c^*)-\rho$.
        Let also $\lambda_\tau(v) = \left\lfloor \frac \tau {\pi(v)} \right\rfloor$: the maximum possible distance between $v$ and any vertex $u$ such that $r_\pi(u) \le \tau$.
        If $\lambda_\tau(v) < \ell(v)$ for some $v \in \Supp(\pi)$, then there is no vertex of $B_\rho(c^*)$ of radius at most $\tau$; we return the empty set.
        If now $\lambda_\tau(v) = 0$ for some $v \in \Supp(\pi) \cap B_\rho(c^*)$, then we either return $\{v\}$ (if $r_\pi(v) \le \tau$), or the empty set.
        Finally, if $\lambda_\tau(v) \ge \ell(v)+2\rho$ for some $v \in \Supp(\pi)$, then, as $B_\rho(c^*) \subseteq B_{\lambda_\tau(v)}(v)$, we can ignore $v$ for the remainder of the algorithm.
        In particular, if $\lambda_\tau(v') \ge \ell(v')+2\rho$ for every $v' \in \Supp(\pi)$, then, we return $B_\rho(c^*)$.
        Thus, from now on, we assume the following for every $v \in \Supp(\pi)$: there exists some $i$ with $0 \le i \le 2\rho-1$ such that $\lambda_\tau(v) = \ell(v)+i$.

        \medskip
        {\bf The main procedure.}
        Fix some $i$ with $0 \le i \le 2\rho-1$, and let $V_i := \{ v \in \Supp(\pi) : \lambda_\tau(v) = \ell(v)+i \}$.
        Next, we describe an $\cO(\rho^2 m)$-time algorithm for computing $X_i := \bigcap\{ B_{\lambda_\tau(v)}(v) \cap B_\rho(c^*) : v \in V_i \}$.
        If we apply this algorithm for each possible value $i$, then we get an $\cO(\rho^3 m)$-time algorithm for returning all the vertices in $B_\rho(c^*)$ of radius at most $\tau$.
        For that, we adapt the ball-growing strategy in~\cite{HellyHyp}, Section $4.1$.
        \begin{itemize}
            \item \emph{In case $G$ is bipartite, we make the additional assumption for what follows that all the vertices of $V_i$ are included in the same partite set.}
        This assumption is only used in the proof of Claim~\ref{claim:helly-4}.
        \end{itemize}
        If $V_i$ intersects both partite sets $U_0,U_1$ of $G$, then we run the same algorithm (presented next) for $V_i \cap U_0$, $V_i \cap U_1$ separately.
        
        Let $k_i = \max\{\ell(v) : v \in V_i\}$. Furthermore, for each $j$ such that $0 \le j \le k_i$, let $V_{i,j} = \{v \in V_i : \ell(v) = j\}$.
        For each $j$ with $0 \le j \le \rho-1$, we directly compute $X_{i,j} = \bigcap\{ B_{j+i}(v)  : v \in V_{i,j} \}$.
        This can be done in $\cO(jm) = \cO(\rho m)$ time by either applying Lemma~\ref{lem:helly-for-DH}, if $G$ is Helly, or Lemma~\ref{lem:bip-Helly}, if $G$ is bipartite Helly.
        In particular, $X_i^* = \bigcap\{ X_{i,j} \cap B_\rho(c^*) : 0 \le j \le \rho-1 \}$ can be computed in $\cO(\rho^2 m)$ time.
        If $k_i < \rho$, then $X_i = X_i^*$, and so we are done.
        Otherwise, we get $X_i = \bigcap\{ B_{\lambda_\tau(v)}(v) \cap X_i^* : v \in V_i, \ell(v) \ge \rho \}$.
        Hence, we may assume for the remainder of the algorithm $\ell(v) \ge \rho$ for every $v \in V_i$.

        \smallskip
        \noindent
        %Based on Claim~\ref{claim:helly-1}, 
        We present a two-stage procedure for dealing with the vertices $v \in V_i$ such that $\ell(v) \ge \rho$.
        For convenience, in what follows let $V_{i,\ge j} = \bigcup_{j'=j}^{k_i}V_{i,j'}$.
        Furthermore, let $L_t = \{ z \in V : d(c^*,z) = t\}$.
        By a partial partition of $L_t$, we mean a partition of some nonempty subset of $L_t$.

       \smallskip
        \noindent
       \underline{First stage}. For each $j$ with $0 \le j \le k_i-\rho$, we compute a partial partition $Z_1^j,Z_2^j,\ldots,Z_{q_j}^j$ of $L_{k_i-j+\rho}$ such that, for some corresponding partition $Y_1^j,Y_2^j,\ldots,Y_{q_j}^j$ of $V_{i,\ge k_i-j}$, the following property holds for any $1 \le t \le q_j$: $$Z_t^j = \bigcap\{ S_{k_i-j+\rho}(c^*,v) : v \in Y_t^j \}.$$
        For $j = 0$, let $V_{i,k_i} = \{v_1^0,v_2^0,\ldots,v_{q_0}^0\}$ be totally ordered.
        We define $q_0$ groups such that, for any $0 \le t \le q_0$, $Z_t^0 = \{v_t^0\}$.
        For the corresponding partition of $V_{i,k_i}$, we also have $Y_t^0 = \{v_t^0\}$.

        Assume $j > 0$.
        We compute the new partitions at step $j$ from the partitions computed at step $j-1$.
        More precisely, for any $t$ with $0 \le t \le q_{j-1}$, we set $W_t^{j-1} = N(Z_t^{j-1}) \cap L_{k_i-j+\rho}$.
        Furthermore, let $V_{i,k_i-j} = \{v_1^j,v_2^j,\ldots,v_{p_j}^j\}$ be totally ordered. 
        For any $t'$ with $1 \le t' \le p_j$, we set $Y_{q_{j-1}+t'}^{j-1} = W_{q_{j-1}+t'}^{j-1} = \{v_{t'}^j\}$.
        By doing so, $Y_1^{j-1},Y_2^{j-1},\ldots,Y_{q_{j-1}+p_j}^{j-1}$ becomes a partition of $V_{i,\ge k_i-j}$.
        However, the sets $W_1^{j-1},W_2^{j-1},\ldots,W_{q_{j-1}+p_j}^{j-1}$ are not necessarily disjoint.
        So, we set $$s := 1, \ \ \mathcal{Y}^{j-1} :=\{1,2,\ldots,q_{j-1}+p_j\}, \ \ \mathcal{W}^{j-1} := \bigcup\{ W_t^{j-1} : 1 \le t \le q_{j-1}+p_j \},$$ then we proceed as follows:
        \begin{itemize}
            \item While $\mathcal{Y}^{j-1} \ne \emptyset$, we pick some $u_s \in \mathcal{W}^{j-1}$ that is in a maximum number of groups $W_t^{j-1}$, for $t \in \mathcal{Y}^{j-1}$.
            We create a new group $Z_s^j = \bigcap\{ W_t^{j-1} : u_s \in W_t^{j-1} \}$. 
            Then, $Y_s^j = \bigcup\{ Y_t^{j-1} : u_s \in W_t^{j-1} \}$.
            For each $t$ such that $u_s \in W_t^{j-1}$, we remove $t$ from $\mathcal{Y}^{j-1}$, and we remove every vertex of $W_t^{j-1}$ from $\mathcal{W}^{j-1}$.
            Finally, we set $s := s+1$.
        \end{itemize}
        Correctness follows from Claims~\ref{claim:helly-2} and~\ref{claim:helly-3} below.
        
        \smallskip
        \noindent
        \underline{Second stage}.  For each $j$ with $k_i-\rho+1 \le j \le k_i+i$, we compute a partial partition $Z_1^j,Z_2^j,\ldots,Z_{q_j}^j$ of $V$ such that, for some partition $Y_1^j,Y_2^j,\ldots,Y_{q_j}^j$ of $V_{i,\ge \rho}$, the following property holds for any $1 \le t \le q_j$:
        \begin{itemize}
            \item ($\alpha$)  $Z_t^j = \bigcap\{ B_{\ell(v)-k_i+j}(v) \cap B_{3\rho+j-k_i}(c^*) : v \in Y_t^j \}$.
        \end{itemize}
        We construct the partition at step $j$ from that at step $j-1$.
        In particular, for $j=k_i-\rho+1$, we start from the last partition obtained during the first phase.
        More precisely, for any $t$ with $0 \le t \le q_{j-1}$, we set $W_t^{j-1} = Z_t^{j-1} \cup N(Z_t^{j-1})$.
        Then (similarly to the first phase) we set $$s := 1, \ \ \mathcal{Y}^{j-1} :=\{1,2,\ldots,q_{j-1}\}, \ \ \mathcal{W}^{j-1} := \bigcup\{ W_t^{j-1} : 1 \le t \le q_{j-1} \},$$ and we proceed as follows:
        \begin{itemize}
            \item While $\mathcal{Y}^{j-1} \ne \emptyset$, we pick some $u_s \in \mathcal{W}^{j-1}$ that is in a maximum number of groups $W_t^{j-1}$, for $t \in \mathcal{Y}^{j-1}$.
            We create a new group $Z_s^j = \bigcap\{ W_t^{j-1} : u_s \in W_t^{j-1} \}$. 
            Then, $Y_s^j = \bigcup\{ Y_t^{j-1} : u_s \in W_t^{j-1} \}$.
            For each $t$ such that $u_s \in W_t^{j-1}$, we remove $t$ from $\mathcal{Y}^{j-1}$, and we remove every vertex of $W_t^{j-1}$ from $\mathcal{W}^{j-1}$.
            Finally, we set $s := s+1$.
        \end{itemize}
        Correctness follows from Claim~\ref{claim:helly-4} below.

        \smallskip
        \noindent
        When the second phase ends, if $q_{k_i+i} > 1$, then $X_i = \emptyset$; otherwise, $X_i = Z_1^{k_i+i} \cap B_\rho(c^*)$.
        Correctness follows from Claim~\ref{claim:helly-5} below.

        \medskip
        \noindent
        {\bf Implementation.}
        Note that we do not need to construct the partitions $Y_1^j,\ldots,Y_{q_j}^j$ explicitly.
        Hence, we only need to detail the computation of $Z_1^j,\ldots,Z_{q_j}^j$ at every step $j$.
        For $j = 0$, this is in $\cO(n)$.
        Assume now $j > 0$.
        To compute the new groups $Y_s^j$ from the $W_t^{j-1}$'s, we maintain the vertices of $\mathcal{W}^{j-1}$ in a max-heap, where the priority of a vertex $u$ is equal to the current number of groups $W_t^{j-1}$, for $t \in \mathcal{Y}^{j-1}$, in which it appears. Since all the priorities are integers, and the maximum priority is non-increasing, we can use for our implementation of max-heap a bucket queue~\cite{BucketQueue}. By doing so, the running time for Step $j$ is linear in $\sum_t |W_t^{j-1}|$ (we refer to Theorem $2$ in~\cite{Helly}, and to Lemma $6.5$ in~\cite{Gpunimodal-ecc}, for similar discussions).
        As this sum is always in $\cO(n+m)$, we can upper bound the total running time for the second stage of the algorithm by an $\cO(\rho m)$.
        However, for any step $j > 0$ during the first stage, we get a more precise upper bound in $\cO(m_{j-1,j}+n_j)$, where $n_j = |L_{k_i-j+\rho}|$ and $m_{j-1,j}$ is the number of edges between $L_{k_i-(j-1)+\rho}$ and $L_{k_i-j+\rho}$.
        In particular, the total running time for the first stage of the algorithm is linear in $n + \sum_{j=1}^{k_i-\rho}(m_{j-1,j}+n_j) = \cO(m+n)$.

        \medskip
        \noindent
        {\bf Correctness.} We prove by induction that all the required properties for $Z_1^j,\ldots,Z_{q_j}^j$ hold at every step $j$.
        This is straightforward if $j = 0$.
        Furthermore, for $j > 0$ during the first stage, we prove that
        \begin{myclaim}\label{claim:helly-2}
            For $t$ with $0 \le t \le q_{j-1}+p_j$, $W_t^{j-1} = \bigcap\{ S_{k_i-j+\rho}(c^*,v) : v \in Y_t^{j-1} \}$.
        \end{myclaim}
            Again, this is straightforward for $t > q_{j-1}$ (singleton groups of $V_{i,k_i-j}$).
            Thus, assume $t \le q_{j-1}$.
            By the induction hypothesis, $Z_t^{j-1} = \bigcap\{ S_{k_i-(j-1)+\rho}(c^*,v) : v \in Y_t^{j-1} \}$.
            As $W_t^{j-1} = N(Z_t^{j-1}) \cap L_{k_i-j+\rho}$, we get $W_t^{j-1} \subseteq \bigcap\{ S_{k_i-j+\rho}(c^*,v) : v \in Y_t^{j-1} \}$.
            Conversely, let $z \in \bigcap\{ S_{k_i-j+\rho}(c^*,v) : v \in Y_t^{j-1} \}$ be arbitrary.
            As also $Z_t^{j-1} \ne \emptyset$, the balls in $\{B_{k_i-(j-1)+\rho}(c^*), B_1(z)\} \cup \{ B_{\ell(v)-k_i+(j-1)}(v) : v \in Y_t^{j-1} \}$ pairwise intersect.
            Furthermore, if $G$ is bipartite, then let $U_b$ be its partite set such that $L_{k_i-(j-1)+\rho} \subseteq U_b$.
            Again, as $z \in \bigcap\{ S_{k_i-j+\rho}(c^*,v) : v \in Y_t^{j-1} \}$ and $Z_t^{j-1} \ne \emptyset$, the half-balls in $\{B_{k_i-(j-1)+\rho}(c^*) \cap U_b, B_1(z) \cap U_b\} \cup \{ B_{\ell(v)-k_i+(j-1)}(v) \cap U_b : v \in Y_t^{j-1} \}$ pairwise intersect.
            Therefore, by the Helly property (applied for balls if $G$ is Helly, and for half-balls if $G$ is bipartite Helly), $z \in N(Z_t^{j-1})$.
            As $z \in L_{k_i-j+\rho}$, we get $z \in W_t^{j-1}$. $\diamond$

        \medskip
        \noindent
        Recall that every group $Z_s^j$ is the nonempty intersection of some groups $W_t^{j-1}$, and that correspondingly $Y_s^j$ is the union of all the associated groups $Y_t^{j-1}$.
        Therefore, Claim~\ref{claim:helly-2} implies that $Z_s^j = \bigcap\{ S_{k_i-j+\rho}(c^*,v) : v \in Y_s^j \}$.
        To complete the proof of correctness of the first stage of the algorithm, it suffices to prove that all the groups $Z_1^j,\ldots,Z_{q_j}^j$ are pairwise disjoint.
        The latter directly follows from the following observation:
        \begin{myclaim}\label{claim:helly-3}
            For every step $j > 0$, after we created a new group $Z_s^j$ and we updated $\mathcal{Y}^{j-1},\mathcal{W}^{j-1}$, then for any remaining $t$, $W_t^{j-1} \cap Z_s^j = \emptyset$ holds.
        \end{myclaim}
            Suppose by contradiction there is a $t$ such that $W_t^{j-1} \cap Z_s^j \ne \emptyset$.
            Let $u_s'$ be any vertex in this intersection.
            Then, $u_s'$ occurs in $W_t^{j-1}$ and in every former group $W_{t_s}^{j-1}$ such that $u_s \in W_{t_s}^{j-1}$.
            But as $u_s \notin W_t^{j-1}$, the latter contradicts the maximality of the number of occurences of $u_s$. $\diamond$
            
         \medskip
         \noindent
        The latter concludes the proof of correctness for the first stage of our algorithm.
        Furthermore, we can also use Claim~\ref{claim:helly-3} for proving that the sets $Z_1^j,\ldots,Z_{q_j}^j$ are pairwise disjoint at every step $j$ during the second stage.
        We now prove that
        \begin{myclaim}\label{claim:helly-4}
           Property ($\alpha$) holds after every step $j$, with $k_i-\rho \le j \le k_i+i$.
        \end{myclaim}
            %This part is similar (but more involved) to what we did for Claim~\ref{claim:helly-1}.        
        We prove the property by induction on $j$.
        Since the first stage of the algorithm was correct, the claim is true if $j = k_i-\rho$.
        Thus, we assume $j > k_i-\rho$.
        By the induction hypothesis, for any $t$ with $0 \le t \le q_{j-1}$, $$Z_t^{j-1} = \bigcap\{ B_{\ell(v)-k_i+j-1}(v) \cap B_{3\rho+j-1-k_i}(c^*) : v \in Y_t^{j-1} \}.$$
        As $W_t^{j-1} = Z_t^{j-1} \cup N(Z_t^{j-1})$, we get $$W_t^{j-1} \subseteq \bigcap\{ B_{\ell(v)-k_i+j}(v) \cap B_{3\rho+j-k_i}(c^*) : v \in Y_t^{j-1} \}.$$
        Conversely, let $w \in \bigcap\{ B_{\ell(v)-k_i+j}(v) \cap B_{3\rho+j-k_i}(c^*) : v \in Y_t^{j-1} \}$ be arbitrary.
        As $Z_t^{j-1} \subseteq W_t^{j-1}$, we may assume $w \notin Z_t^{j-1}$.
        As also $Z_t^{j-1} \ne \emptyset$, the balls in $\{B_{3\rho+j-1-k_i}(c^*), B_1(w)\} \cup \{ B_{\ell(v)-k_i+(j-1)}(v) : v \in Y_t^{j-1} \}$ pairwise intersect.
        Furthermore, if $G$ is bipartite, then let $U_b$ be the partite set that does not contain $w$.
        We prove in the following that the half-balls in $\{B_{3\rho+j-1-k_i}(c^*) \cap U_b, B_1(w) \cap U_b\} \cup \{ B_{\ell(v)-k_i+(j-1)}(v) \cap U_b : v \in Y_t^{j-1} \}$ pairwise intersect:
        \begin{itemize}
            \item If $c^* = w$, then we get $\emptyset \ne N(c^*) = B_1(c^*) \cap B_1(w) \cap U_b$.
            Furthermore, $3\rho+j-1-k_i \ge 2\rho > 1$.
            Otherwise, $\emptyset \ne S_1(w,c^*) \subseteq B_{3\rho+j-1-k_i}(c^*) \cap B_1(w) \cap U_b$.
            In both cases, $B_{3\rho+j-1-k_i}(c^*) \cap U_b$ and $B_1(w) \cap U_b$ intersect.
            \item Now, let $v \in Y_t^{j-1}$ be arbitrary.
            If either $v \ne w$, or $\ell(v)-k_i+j-1 > 0$, then we obtain by similar arguments as above that $B_{\ell(v)-k_i+j-1}(v) \cap U_b$ and $B_1(w) \cap U_b$ intersect.
            However, suppose by contradiction both $v = w$, and $\ell(v)-k_i+j-1 = 0$ hold.
            As $w \notin Z_t^{j-1}$, either $d(v,c^*) = d(w,c^*) = \rho+j-k_i$ or $d(v,v') = d(w,v') = \ell(v')-k_i+j$ for some other $v' \in Y_t^{j-1}$.
            In the former case $B_{\ell(v)-k_i+j-1}(v) \cap B_{\rho+j-1-k_i}(c^*) = \emptyset$, while in the latter case $B_{\ell(v)-k_i+j-1}(v) \cap B_{\ell(v')-k_i+j-1}(v') = \emptyset$.
            Consequently, both cases contradict our assumption that $Z_t^{j-1} \ne \emptyset$.
            \item Again, let $v \in Y_t^{j-1}$ be arbitrary. 
            We want to prove that $B_{\ell(v)-k_i+j-1}(v) \cap U_b$ and $B_{3\rho+j-1-k_i}(c^*) \cap U_b$ intersect. 
            Assume $S_{2\rho}(c^*,v) \subseteq U_b$. Then, we are done as $\emptyset \ne S_{2\rho}(c^*,v) \subseteq B_{\ell(v)-k_i+j-1}(v) \cap B_{3\rho+j-1-k_i}(c^*)$.
            Furthermore, if $S_{2\rho}(c^*,v) \cap U_b = \emptyset$, but $j > k_i-\rho+1$, then we are also done (just take any neighbor of a vertex in the slice $S_{2\rho}(c^*,v)$).
            So, suppose by contradiction $S_{2\rho}(c^*,v) \cap U_b = \emptyset$, and $j = k_i-\rho+1$.
            As every vertex of $S_{2\rho}(c^*,v)$ must be in the same partite set as $u$, $d(v,u) = \ell(v)-\rho+1 = d(v,S_{2\rho}(c^*,v))+1$ is impossible.
            Similarly, as $u$ and $c^*$ must be also in the same partite set, $d(c^*,u) = 2\rho+1$ is also impossible.
            Consequently, $u \in S_{2\rho}(c^*,v)$ holds.
            However, as $w \notin Z_t^{j-1}$, we obtain $d(w,v') = \ell(v')-\rho+1$ for some other $v' \in Y_t^{j-1}$.
            Then, let $w' \in Z_t^{j-1}$ be arbitrary. Observe that $w' \in S_{2\rho}(c^*,v)$ holds.
            Furthermore, as $d(v',w),d(v',w')$ have the same parity, $d(v',w') \le \ell(v')-\rho-1$.
            As it implies $d(v',c^*) \le d(v',w')+d(w',c^*) \le \ell(v')+\rho-1 = d(v',c^*)-1$, a contradiction arises.
            \item Finally, let $v,v' \in Y_t^{j-1}$ be distinct.
            Without loss of generality, $\ell(v) \le \ell(v')$.
            
            Assume $\ell(v)-k_i+j-1 = 0$.
            Since $j-1 \ge k_i-\rho$, and $\ell(v) \ge \rho$, necessarily $\ell(v) = \rho$ and $j = k_i-\rho+1$.
            Suppose by contradiction $w = v$.
            As $w \notin Z_t^{j-1}$, either $d(v,c^*) = d(w,c^*) = \rho+j-k_i$ or $d(v,y) = d(w,y) = \ell(y)-k_i+j$ for some other $y \in Y_t^{j-1}$.
            In the former case $B_{\ell(v)-k_i+j-1}(v) \cap B_{\rho+j-1-k_i}(c^*) = \emptyset$, while in the latter case $B_{\ell(v)-k_i+j-1}(v) \cap B_{\ell(y)-k_i+j-1}(y) = \emptyset$.
            Consequently, both cases contradict our assumption that $Z_t^{j-1} \ne \emptyset$.
            We so conclude that $w \ne v$.
            As $w \in B_{\ell(v)-k_i+j}(v) = B_1(v)$, we get $w \sim v$.
            Therefore, $v \in U_b$.
            It implies that $B_{\ell(v)-k_i+j-1}(v) \cap U_b$ and $B_{\ell(v')-k_i+j-1}(v') \cap U_b$ intersect. 

            From now on, $\ell(v')-k_i+j-1 \ge \ell(v)-k_i+j-1 > 0$.
            Assume further $d(v,v') \le \ell(v')-k_i+j-2$.
            Either $v' \in U_b$, or $N(v') \subseteq U_b$.
            As $N(v') \subseteq B_{\ell(v)-k_i+j-1}(v)$, we get that $B_{\ell(v)-k_i+j-1}(v) \cap U_b$ and $B_{\ell(v')-k_i+j-1}(v') \cap U_b$ intersect. 
            Thus from now on, $d(v,v') \ge \ell(v')-k_i+j-1$.
            As $Z_t^{j-1} \ne \emptyset$, we obtain $d(v,v') \le \ell(v)+\ell(v') + 2(j-1-k_i)$.
            
            Assume first this above inequality to be strict.
            As we assume $\ell(v') \ge \ell(v)$, we obtain $\ell(v')-k_i+j-2 \ge 0$.
            Let $u \in S_{\ell(v')-k_i+j-2}(v',v)$ be arbitrary.
            Note that $d(u,v) \le \ell(v)-k_i+j-1$.
            If $u \in U_b$, then $B_{\ell(v)-k_i+j-1}(v) \cap U_b$ and $B_{\ell(v')-k_i+j-1}(v') \cap U_b$ intersect.
            Furthermore, if $u \notin U_b$, then as $S_1(u,v) \subseteq U_b$, we also get that $B_{\ell(v)-k_i+j-1}(v) \cap U_b$ and $B_{\ell(v')-k_i+j-1}(v') \cap U_b$ intersect.

            Otherwise, $d(v,v') = \ell(v)+\ell(v') + 2(j-1-k_i)$.
            If $S_{\ell(v)-k_i+j-1}(v,v') \subseteq U_b$ then again we get that $B_{\ell(v)-k_i+j-1}(v) \cap U_b$ and $B_{\ell(v')-k_i+j-1}(v') \cap U_b$ intersect.
            Finally, suppose, for the sake of contradiction, $S_{\ell(v)-k_i+j-1}(v,v') \cap U_b = \emptyset$.
            As the vertices of $S_{\ell(v)-k_i+j-1}(v,v')$ are in the same partite set as $w$, $d(v,w)=\ell(v)-k_i+j$ and $d(v',w) = \ell(v')-k_i+j$ are impossible.
            Hence, $w \in S_{\ell(v)-k_i+j-1}(v,v')$.
            Suppose first there exists some $y \in Y_t^{j-1}$ such that $d(y,w) = \ell(y)-k_i+j$.
            {\em Recall that we assume that all the vertices of $S_i$ are in the same partite set}.
            In particular, $\ell(v),\ell(v')$ and $\ell(y)$ have the same parity.
            However, it implies that $d(v,w)$ and $d(y,w)$ have different parities, thereby contradicting that $y,v$ are in the same partite sets.
            Therefore, $d(y,w) \le \ell(y)-k_i+j-1$ holds for any $y \in Y_t^{j-1}$.
            As we assume $w \notin Z_t^{j-1}$, we so deduce that $d(c^*,w) = \rho+j-k_i$.
            However, as $d(c^*,v) = \ell(v)+\rho$ and $d(v,w) = \ell(v)-k_i+j-1$, we obtain that $d(c^*,w)$ and $\rho+j-k_i-1$ must have the same parity.
            A contradiction.
        \end{itemize}
        Overall, by the Helly property (applied for balls if $G$ is Helly, and for half-balls if $G$ is bipartite Helly), $w \in N(Z_t^{j-1})$.
        Therefore, we proved that $$W_t^{j-1} = \bigcap\{ B_{\ell(v)-k_i+j}(v) \cap B_{3\rho+j-k_i}(c^*) : v \in Y_t^{j-1} \}.$$
        Finally, for any $s$ with $1 \le s \le q_j$, let $\mathcal{T}_s \subseteq \{1,2,\ldots,q_{j-1}\}$ be such that $Z_s^j = \bigcap\{ W_t^{j-1} : t \in \mathcal{T}_s \}$, and $Y_s^j = \bigcup\{ W_t^{j-1} : t \in \mathcal{T}_s \}$. 
        We obtain
        \begin{flalign*}
            Z_s^j &= \bigcap_{t \in \mathcal{T}_s} W_t^{j-1} = \bigcap_{t \in \mathcal{T}_s}\bigcap\{ B_{\ell(v)-k_i+j}(v) \cap B_{3\rho+j-k_i}(c^*) : v \in Y_t^{j-1} \} \\
            &= \bigcap\{ B_{\ell(v)-k_i+j}(v) \cap B_{3\rho+j-k_i}(c^*) : v \in \bigcup_{t \in \mathcal{T}_s}Y_t^{j-1} \} = \bigcap\{ B_{\ell(v)-k_i+j}(v) \cap B_{3\rho+j-k_i}(c^*) : v \in Y_s^j \}.
        \end{flalign*}
        As a result, the property ($\alpha$) holds for $j$. $\diamond$

        \medskip
        \noindent
        Finally, we use Property ($\alpha$) to prove that the algorithm correctly computes $X_i$:
        \begin{myclaim}\label{claim:helly-5}
            If $q_{k_i+i} > 1$, then $X_i = \emptyset$.
            Otherwise, $X_i = Z_1^{k_i+i}\cap B_\rho(c^*)$.
        \end{myclaim}
            For any $t$ with $1 \le t \le q_{k_i+i}$, $X_i \subseteq \bigcap\{ B_{\ell(v)+i}(v) \cap B_\rho(c^*) : v \in Y_t^{k_i+i} \}$ holds. By Property ($\alpha$), the latter implies $X_i \subseteq Z_t^{k_i+i}$. Furthermore, as the groups $Z_1^{k_i+i},\ldots,Z_{q_{k_i+i}}^{k_i+i}$ are pairwise disjoint, we either get $X_i = \emptyset$ or $q_{k_i+i} = 1$. Assume $q_{k_i+i} = 1$ for the remainder of the proof. Due to Property ($\alpha$), every vertex of $Z_1^{k_i+i}$ is included in $\bigcap\{ B_{\ell(v)+i}(v) : v \in S_{i,\ge \rho} \}$.
            %Furthermore, as every vertex of $Z_1^{k_i+i}$ is also included in $B_{3\rho+i}(c^*)$, $X_i \subseteq Z_1^{k_i+i}$ holds.
            As we here assume $S_i = S_{i,\ge \rho}$, we get that $X_i = Z_1^{k_i+i} \cap B_\rho(c^*)$. $\diamond$
    \end{proof}

\subsection{Distance-hereditary graphs}\label{distance-hereditary}
A graph $G$ is called \emph{distance-hereditary} if every induced path is a shortest-path.
Note that the distance-hereditary graphs are $1$-hyperbolic~\cite{WuZh01}.
\begin{mytheorem}\label{thm:DH}
    The \textsc{Weighted Center} problem can be solved in $\cO(m)$ time for distance-hereditary graphs.        
\end{mytheorem}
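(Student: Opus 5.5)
The plan follows the local-search template used for Theorem~\ref{thm:weakly-bridged}(\ref{item-chordal}), as announced in the introduction.

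\emph{Starting point.} Distance-hereditary graphs are $1$-hyperbolic. I would first show that they $\lambda$-embed into a tree with $\lambda = \cO(1)$. This should follow from their characterization via pendant vertices, false twins and true twins, or equivalently from a BFS-layering tree whose clusters have bounded diameter. With this, Theorem~\ref{thm:hyp-weak} together with Lemma~\ref{lem:tree-embedding} gives in $\cO(m)$ time a vertex $c_1$ with $d(c_1,C_\pi(G)) = \cO(1)$. If such an embedding result is not available, I would fall back on Proposition~\ref{prop:approx-center-hyp} with an $\cO(m)$-time layering tree, whose additive distortion is known to be constant for distance-hereditary graphs.

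\emph{Unimodality.} Next I would prove that every radius function $r_\pi$ of a distance-hereditary graph is $2$-weakly peakless, hence $G^2$-unimodal by Lemma~\ref{lem:wp}. It suffices to prove the statement for each function $u \mapsto \pi(x)d(u,x)$ and then take maxima, exactly as in the convexity arguments of Lemma~\ref{lem:wb-best-response}. Concretely, for $d(u,v)\ge 3$ I want some $w\in I^o(u,v)$ such that $d(w,x) \le \max\{d(u,x),d(v,x)\}$ for all $x$ simultaneously, with strict inequality whenever the maximum is not attained at both ends. For distance-hereditary graphs, the known metric facts are that disks are not convex but every interval is ``almost convex'' (the four-point condition holds with $\delta$-like slack $1$, and there are no induced house, hole, domino or gem). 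Using these, I would show that the middle slice $S_k(u,v)$, with $k=\lfloor d(u,v)/2\rfloor$, contains a vertex $w$ satisfying $d(w,x)\le \max\{d(u,x),d(v,x)\}-1$ whenever $d(u,x)\neq d(v,x)$ or the distances are large. A case analysis on $d(u,v)\ge 3$, with parity considerations, should handle the equality clause.

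\emph{Improving step in $\cO(m)$.} Given $v$ that is not a center, I need to compute in $\cO(m)$ time some $u\in B_3(v)$ with $r_\pi(u)<r_\pi(v)$ and $d(u,C_\pi(G))<d(v,C_\pi(G))$, the analogue of Lemma~\ref{lem:wb-local-search}. Then the number of steps is $\cO(1)$ and the total runtime is $\cO(m)$. Computing $r_\pi$ on all of $B_2(v)$ naively is too slow, so here I would use injective hulls, as announced. The idea is to consider the Helly hull $\mathcal{H}(G)$ restricted near $v$: for distance-hereditary graphs, $\mathcal{H}(G)$ adds only boundedly many vertices locally and can be built in linear time from the split decomposition. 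In the Helly hull, the vertices of $B_2(v)$ whose radius is below $r_\pi(v)$ form the intersection $\bigcap_x B_{\lambda(x)}(x)$ of balls. By the ball-growing technique of Lemma~\ref{lem:helly-for-DH} (with $k=\cO(1)$ relative to the projections onto $B_2(v)$, as in the proof of Theorem~\ref{th:helly}), this intersection can be computed in $\cO(m)$ time. I would then map it back to a real vertex of $G$ adjacent to or equal to a hull vertex, and select it by a rule that decreases the distance to a fixed center, such as maximizing adjacency to the current best set, mirroring the clique argument in Lemma~\ref{lem:wb-local-search}. To avoid the $\log n$ factor of Lemma~\ref{lem:weighted-center-decision}, the threshold $\tau$ is always $r_\pi(v)$ minus an infinitesimal, so no binary search is needed.

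\emph{Main obstacle.} I expect the hardest step to be this last one: proving that the injective hull of a distance-hereditary graph is linear-size and computable in $\cO(m)$ time, and that the hull-based selection both decreases $r_\pi$ and strictly decreases the distance to $C_\pi(G)$, since balls are not convex there. The $2$-weakly-peakless proof is the second risk. If the equality clause fails, I would instead aim for $G^3$-unimodality, which the search radius $B_3(v)$ still accommodates.
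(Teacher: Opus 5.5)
There is a genuine gap, and it is where you expected it: the improving step. Your starting point (a constant-distortion tree embedding, Prisner's $3$-embedding in the paper) and the use of $\mathcal{H}(G)$ are fine. Your selection rule, ``maximizing adjacency to the current best set,'' is imported from Lemma~\ref{lem:wb-local-search}. That argument rests on Lemma~\ref{lem:convex}: the best neighbours form a clique because $S_1(v,z)$ is a clique, and the traces $N(u'')\cap X(v)$ are nested because there is no induced $C_4$. Both facts fail here. $C_4=K_{2,2}$ and $K_{2,3}$ are distance-hereditary, and their slices $S_1(x,y)$ are independent sets.

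A threshold query at $\tau=r_\pi(v)-\varepsilon$ also only tells you \emph{which} vertices improve, not by how much. So you can neither run steepest descent nor control the number of moves that keep the same distance to the centre. The paper never obtains a strict decrease of $d(\cdot,C_\pi(G))$ at every step, and does not need one. Instead it does two things.
\begin{enumerate}
\item[(a)] It computes $r_\pi$ exactly on all of $N(v)$ in $\cO(m)$ time (Claim~\ref{claim:dh-best-response}, via gates $x^*\in S_2(v,x)$, $x^{**}$ and Lemma~\ref{lem:cartesian-tree}), and moves to a best neighbour. A steepest-descent sequence is an induced path, hence a shortest path in a distance-hereditary graph. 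Together with Proposition~\ref{prop:DH}, this bounds its length by $d(v_0,C_\pi(G))+4$, while allowing up to four moves at distance $2$ (Claims~\ref{claim:dh-distgeq2}--\ref{claim:dh-last}).
\item[(b)] At a local minimum $v_i$ it jumps directly to a centre. Let $U$ be the set of improving vertices of $B_2(v_i)$. The paper shows $C_\pi(G)\subseteq U^*=U\cap\bigcap_{u\in U}B_2(u)$, and by the Helly property takes a hull vertex $w$ with $U^*\subseteq N_{\mathcal{H}(G)}[w]$. It then returns a minimiser of $r_\pi$ on $N_{\mathcal{H}(G)}[w]\cap V$, again via Claim~\ref{claim:dh-best-response}, now applied in the distance-hereditary graph $\mathcal{H}(G)$.
\end{enumerate}

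The tool behind all of this, which your proposal never isolates, is Lemma~\ref{lem:DH-outergates}. If $H$ is connected and induced, and $x_v\in V(H)$ is closest to $v\notin V(H)$, then $S_1(x_v,v)\subseteq\bigcap_{x\in V(H)}I(v,x)$; this follows directly from ``induced paths are shortest''. It makes your ``projections onto $B_2(v)$'' idea rigorous: gates $x^*\in S_3(v,x)$ lie in $I(x,u)$ for every $u\in B_2(v)$, so only balls of radius at most $4$ around the gates need to be intersected in $\mathcal{H}(G)$. It also proves Proposition~\ref{prop:DH} in a few lines.

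The step-count analysis needs the \emph{universal} form of that proposition. If $r_\pi(u)\ge r_\pi(v)$ and $d(u,v)\ge 3$, then \emph{every} $w\in S_1(u,v)$ satisfies the strict inequality, and this is applied to specific vertices such as $v_i\in S_1(v_{i+1},c)$. Mere $2$-weak peaklessness or $G^2$-unimodality is not enough.

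Your symmetric middle-slice claim is also false as stated. Take the $4$-cycle $u,w,y,x$ with a pendant vertex $v$ at $y$. Then $d(u,v)=3$ and $S_1(u,v)=\{w,x\}$. We have $d(w,x)=2=\max\{d(u,x),d(v,x)\}$ although $d(u,x)\ne d(v,x)$, and symmetrically for the candidate $x$ with target $w$. For $\pi$ the indicator of $x$, this gives $r_\pi(w)=r_\pi(v)>r_\pi(u)$. The witness has to be taken next to the endpoint of larger radius.
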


We prove new metric properties of distance-hereditary graphs, which we use in the design of our $\cO(m)$-time algorithm.
In particular, radius functions are $2$-weakly peakless.
Therefore, a local-search algorithm for computing a center can be derived.
To implement each step of the local search in $\cO(m)$ time, we use the injective hull.
In fact, we need the result from~\cite{InjHullGraphs} that the injective hull of a distance-hereditary graph is both Helly and distance-hereditary.
Furthermore,~\cite{InjHullGraphs} also provides an $\cO(m)$-time algorithm for computing the injective hull of a distance-hereditary graph.

    \begin{mylemma}\label{lem:DH-outergates}
        For a distance-hereditary graph $G$, let $H$ be one of its connected induced subgraphs.
        For any $v \notin V(H)$, for any $x_v \in V(H)$ closest to $v$, $S_1(x_v,v) \subseteq \bigcap\{ I(v,x) : x \in V(H) \}$.
    \end{mylemma}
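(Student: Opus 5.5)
The plan is to use the defining property of distance-hereditary graphs directly: every induced path is a shortest path. I will show that for any $y\in S_1(x_v,v)$ and any $x\in V(H)$ there is an induced path from $v$ to $x$ that passes through $y$. Such a path is then a shortest path, which gives $y\in I(v,x)$.

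Set $k=d(v,x_v)=d(v,V(H))\ge 1$ and fix $y\in S_1(x_v,v)$, so that $y\sim x_v$ and $d(v,y)=k-1$. If $k=1$ then $y=v$, and $v\in I(v,x)$ holds trivially, so assume $k\ge 2$. Fix a shortest $(v,y)$-path $Q$ of length $k-1$. Every vertex $q$ of $Q$ satisfies $d(v,q)\le k-1<k$, so $Q$ is disjoint from $V(H)$. Now let $x\in V(H)$ be arbitrary. Since $H$ is connected, choose a shortest $(x_v,x)$-path $R$ in $H$. It is induced in $H$, hence induced in $G$ because $H$ is an induced subgraph.

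The key step is to rule out chords between $Q$ and $R$. Suppose $q\in V(Q)$ is adjacent to some $r\in V(R)\subseteq V(H)$. Then $k\le d(v,r)\le d(v,q)+1$, which forces $d(v,q)=k-1$, i.e.\ $q=y$. So the only possible chords join $y$ to vertices of $R$. To remove these, let $r$ be the last vertex of $R$ (in the order from $x_v$ to $x$) that is adjacent to $y$; such a vertex exists because $y\sim x_v$. Consider the concatenation $P=Q\cdot(y,r)\cdot R[r,x]$. Its vertices are pairwise distinct, since $Q$ avoids $V(H)$. Moreover $P$ is induced: $Q$ is a shortest path, hence induced; the subpath $R[r,x]$ is induced; $Q\setminus\{y\}$ has no neighbour in $V(R)$; and by the choice of $r$, $y$ has no neighbour in $R[r,x]$ other than $r$.

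Since $G$ is distance-hereditary, $P$ is a shortest $(v,x)$-path. As $y$ lies on $P$, we get $d(v,x)=d(v,y)+d(y,x)$, that is, $y\in I(v,x)$. Because $x\in V(H)$ and $y\in S_1(x_v,v)$ were arbitrary, this gives $S_1(x_v,v)\subseteq\bigcap\{I(v,x):x\in V(H)\}$.

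The only delicate point is the chord analysis. It relies on $x_v$ realising the distance from $v$ to $V(H)$, which confines all chords to the single vertex $y$, and on choosing $r$ as the last neighbour of $y$ along $R$. I do not expect further obstacles beyond checking these details.
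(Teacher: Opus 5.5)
Your proof is correct and follows essentially the same route as the paper's: concatenate a shortest $(v,y)$-path with a shortest $(x_v,x)$-path in $H$, use the choice of $x_v$ to show every chord is incident to $y$, and shortcut at the neighbour of $y$ on $R$ closest to $x$ to get an induced, hence shortest, $(v,x)$-path through $y$. Taking the last neighbour of $y$ along $R$, as you do, is slightly cleaner than the paper's wording (minimising $d(x',x)$), but the argument is the same.
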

    \begin{proof}
        Consider some arbitrary $v^* \in S_1(x_v,v)$, and $x \in V(H)$.
        As $G$ is distance-hereditary, to prove that $v^* \in I(v,x)$, it suffices to prove that there is an induced $(v,x)$-path containing $v^*$.
        For that, consider the union $P = P_1 \cup P_2$ of a shortest $(v,v^*)$-path $P_1$ in $G$ with a shortest $(x_v,x)$-path in $H$.
        As we assume $x_v$ is a closest-to-$v$ vertex in $V(H)$, the only vertex $v'$ of $P_1\setminus v$ such that $N[v'] \cap V(H) \ne \emptyset$ is $v'=v^*$.
        Therefore, any chord of $P$ must be an edge incident to $v^*$ and to some $x' \in V(P_2)$.
        By choosing $x'$ such that $d(x',x)$ is minimized, we can extract from $P$ an induced $(v,x)$-path $P^*$ containing $v^*$.
    \end{proof}

    We now prove the following property of radius functions in this class of graphs:
    \begin{myproposition}\label{prop:DH}
        For a distance-hereditary graph $G$, and a profile $\pi$, let $u$ and $v$ be two vertices such that $r_\pi(u) \ge r_\pi(v)$. 
        If $d(u,v) \ge 3$, then $r_\pi(w) \le r_\pi(u)$ holds for any $w \in S_1(u,v)$, and equality holds only if $r_\pi(u) = r_\pi(w) = r_\pi(v)$.
        In particular, $r_\pi$ is $2$-weakly peakless.
    \end{myproposition}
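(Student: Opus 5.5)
The plan is to reduce both assertions to a pointwise, tree-like dichotomy for each vertex $x \in \Supp(\pi)$. Fix $u,v$ with $d(u,v)\ge 3$ and $r_\pi(u)\ge r_\pi(v)$, and fix $w\in S_1(u,v)$. The key claim I would establish is the following. For every vertex $x$, either $d(w,x)\le d(u,x)-1$ or $d(w,x)\le d(v,x)-1$.

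Given this claim, the proposition follows at once. Let $x\in\Supp(\pi)$. In the first case, $\pi(x)d(w,x)<\pi(x)d(u,x)\le r_\pi(u)$. In the second case, $\pi(x)d(w,x)<\pi(x)d(v,x)\le r_\pi(v)\le r_\pi(u)$. Vertices outside the support contribute $0$. Since $\Supp(\pi)$ is finite, we get $r_\pi(w)<r_\pi(u)$ whenever $\Supp(\pi)\neq\emptyset$. This is stronger than required: equality can then only occur in the degenerate case $\Supp(\pi)=\emptyset$, where $r_\pi\equiv 0$, and then $r_\pi(u)=r_\pi(w)=r_\pi(v)$ as stated. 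For the ``in particular'' part, let $u,v$ satisfy $d(u,v)\ge 3$ and relabel them so that $r_\pi(u)\ge r_\pi(v)$. Any $w\in S_1(u,v)$ lies in $I^o(u,v)$ and witnesses $\WP(u,v)$. Hence $r_\pi$ is $2$-weakly peakless, and it is then $G^2$-unimodal by Lemma~\ref{lem:wp}.

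To prove the claim, suppose for contradiction that $d(w,x)\ge d(u,x)$ and $d(w,x)\ge d(v,x)$. Since $u\sim w$, the first inequality gives $d(w,x)\in\{d(u,x),d(u,x)+1\}$. Let $P$ be a shortest $(u,x)$-path and $Q$ a shortest $(x,v)$-path, and let $H$ be the connected induced subgraph on $V(P)\cup V(Q)$. The vertex $w$ does not belong to $H$. Indeed, $w\in P\setminus\{u\}$ would force $d(w,x)<d(u,x)$, $w=u$ is impossible, and $w\in Q$ would force $d(w,x)\le d(v,x)$ with equality only if $w=v$, which contradicts $d(u,v)\ge 3$. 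I would then apply Lemma~\ref{lem:DH-outergates} to $H$, taking $v:=w$ and $x_w:=u$; this is valid because $u\in V(H)$ is adjacent to $w$, so $u$ is a closest vertex of $H$ to $w$. The lemma gives $u\in I(w,y)$ for every $y\in V(H)$. Taking $y=v$ yields $d(w,v)=1+d(u,v)$, which contradicts $w\in I(u,v)$ with $d(u,v)\ge 3$.

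The main obstacle is checking that Lemma~\ref{lem:DH-outergates} really applies in this configuration, and I expect this step to need care. The lemma only constrains the slice $S_1(x_w,w)$, which here is $\{w\}$ itself. So the nontrivial content must instead come from the induced path it builds: an induced $(w,y)$-path consisting of the edge $wu$ followed by a shortest path inside $H$. I would re-run the lemma's chord argument directly, with $y=v$, to show that $w$ has no neighbour in $H$ other than $u$ near the $v$-end. The inequalities $d(w,x)\ge d(u,x)$ and $d(w,x)\ge d(v,x)$, together with $d(u,v)\ge 3$, should exclude such neighbours. If this fails, my fallback is the Bandelt--Mulder four-point characterization of distance-hereditary graphs, applied to the quadruple $u,w,v,x$ with a parity case analysis.
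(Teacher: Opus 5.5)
Your key claim is false, so this route cannot be completed. Let $G$ be the $4$-cycle with edges $uw,wa,ap,pu$, plus a pendant vertex $v$ attached to $a$. This graph is distance-hereditary, $d(u,v)=3$ and $w\in S_1(u,v)$. For $x=p$ we have $d(w,p)=2$, $d(u,p)=1$ and $d(v,p)=2$, so neither $d(w,x)\le d(u,x)-1$ nor $d(w,x)\le d(v,x)-1$ holds. Moreover, with $\pi(p)=3$, $\pi(v)=2$ and $\pi=0$ elsewhere, one gets $r_\pi(u)=r_\pi(w)=r_\pi(v)=6$. So the strict inequality $r_\pi(w)<r_\pi(u)$ that you derive for every nonempty support is wrong, and the equality clause of the statement is not vacuous.

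The faulty step is your use of Lemma~\ref{lem:DH-outergates} with $w$ as the outside vertex. Since $d(w,V(H))=1$, the slice $S_1(u,w)$ is just $\{w\}$, and the lemma only yields the triviality $w\in I(w,y)$; it does not yield $u\in I(w,y)$. The step also proves too much: with $y=v$ it contradicts $w\in S_1(u,v)$ without using your assumptions on $x$ at all. In the example, $w$ has a second neighbour $a$ on $Q$ near the $v$-end, which is exactly what you hoped to rule out. Neither a chord analysis nor a four-point/parity argument can rescue a false claim.

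What is true, and sufficient, is the weaker dichotomy used in the paper: for every $x\in\Supp(\pi)$, either $d(x,w)<d(x,u)$ or $d(x,w)\le d(x,v)$. The first case gives $\pi(x)d(x,w)<r_\pi(u)$, and the second gives $\pi(x)d(x,w)\le r_\pi(v)\le r_\pi(u)$. If $r_\pi(w)=r_\pi(u)$, a vertex realizing $r_\pi(w)$ must fall in the second case, which forces $r_\pi(v)=r_\pi(u)$.

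To prove the dichotomy, apply Lemma~\ref{lem:DH-outergates} the other way round. Take $H=P$, a shortest $(u,v)$-path with first edge $uw$, and let $x$ be the outside vertex (the case $x\in V(P)$ is immediate). This gives a vertex $x^*$ adjacent to $P$ with $x^*\in I(x,y)$ for all $y\in V(P)$.
\begin{itemize}
\item If $x^*\nsim u$, let $p_i$ be the neighbour of $x^*$ on $P$ closest to $u$. The paths from $x^*$ through $p_i$ along $P$ to $w$ and to $u$ are induced, hence shortest, so $d(x^*,w)=d(x^*,u)-1$. Therefore $d(x,w)\le d(x,x^*)+d(x^*,u)-1=d(x,u)-1$.
\item If $x^*\sim u$, then $d(x^*,w)\le 2\le d(u,v)-1\le d(x^*,v)$ because $d(u,v)\ge 3$. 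Therefore $d(x,w)\le d(x,x^*)+d(x^*,v)=d(x,v)$.
\end{itemize}
Your derivation of the ``in particular'' part via $\WP(u,v)$ then goes through unchanged.
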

    \begin{proof}
        Let $x \in \Supp(\pi)$ be arbitrary.
        To prove the lemma, it suffices to prove that either $d(x,w) < d(x,u)$, or $d(x,w) \le d(x,v)$.
        For that, as $w \in S_1(u,v)$, we get that $uw$ is the first edge of some shortest $(u,v)$-path $P$.
        If $x \in V(P)$, then $d(x,w) < d(x,u)$, unless if $x=u$. Furthermore, if $x=u$, then $d(x,w) < d(x,v)$.
        Therefore, from now on we assume $x \notin V(P)$.
        By Lemma~\ref{lem:DH-outergates}, applied for $H=P$, there exists some $x^* \in \bigcap\{ I(x,y) : y \in V(P) \}$ that is adjacent to at least one vertex on $P$.
        If $x^* \nsim u$, then $w \in S_1(u,x^*)$, and so, $d(x,w) < d(x,u)$.
        Assume now $x^* \sim u$.
        As $P$ is a shortest path, the only possible neighbors of $x^*$ on $P$ are $u$, $w$, and the second neighbor of $w$ on $P$.
        In particular, $d(v,x^*) \ge d(v,w) = d(v,u)-1 \ge 2 \ge d(w,x^*)$.
        It implies $d(x,w) \le d(x,v)$.
    \end{proof}

    Based on Proposition~\ref{prop:DH}, a local-search algorithm for computing a center can be derived.
    For that, for any vertex $v$ that is not a center, we also need an efficient procedure for finding some vertex $u$ of smaller radius.
    Recall that every graph $G$ isometrically embeds in a smallest Helly graph $\mathcal{H}(G)$, which we call its injective hull.

    \begin{mylemma}[see Theorem $5$ in~\cite{InjHullGraphs}]\label{lem:DH-hull}
        For any distance-hereditary graph $G$, its injective hull $\mathcal{H}(G)$ is also distance-hereditary, and can be computed in $\cO(n+m)$ time.       
    \end{mylemma}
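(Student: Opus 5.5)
We would prove the statement by induction along a pruning sequence. Recall the classical characterization of Bandelt and Mulder: a graph is distance-hereditary if and only if it can be built from a single vertex by repeatedly adding a \emph{pendant vertex}, a \emph{true twin}, or a \emph{false twin} of an existing vertex. Such a pruning sequence can be computed in $\cO(n+m)$ time. The plan is to maintain, after each prefix $G_k$ of the sequence, the injective hull $\mathcal{H}(G_k)$ together with the isometric embedding $G_k \hookrightarrow \mathcal{H}(G_k)$. At each step we show that $\mathcal{H}(G_{k+1})$ is obtained from $\mathcal{H}(G_k)$ by one of the same three elementary operations, or by a constant number of them. By the characterization, this at once gives that $\mathcal{H}(G_{k+1})$ is distance-hereditary.

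\textbf{Pendant vertices and true twins.} If $x$ is added as a pendant vertex at $y$, we claim that $\mathcal{H}(G+x)$ is $\mathcal{H}(G)$ with $x$ pendant at $y$. If $x$ is a true twin of $y$, we claim that $\mathcal{H}(G+x)$ is $\mathcal{H}(G)$ with $x$ a true twin of $y$. Three checks are needed in each case.
\begin{enumerate}
\item The new graph is Helly. A family of pairwise intersecting balls involving $x$ can be rewritten as balls centred at $y$, with radii shifted by at most one. We then apply the Helly property of $\mathcal{H}(G)$ and lift the resulting common point back.
\item The embedding of $G+x$ remains isometric.
\item Minimality holds. We use the description of the injective hull as the set of extremal (tight) metric forms on $V(G)$. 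Every extremal function $f$ on $V(G+x)$ is determined by its restriction to $V(G)$, via $f(x)=f(y)+1$ in the pendant case and $f(x)=f(y)$ in the true-twin case.
\end{enumerate}

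\textbf{False twins (main obstacle).} If $x$ is added as a false twin of $y$, then $d(x,y)=2$. Writing $\mathbb{1}_{N(y)}$ for the indicator of $N(y)$, the function $f=d(y,\cdot)$ restricted to $V(G+x)$ but modified by $f(x)=f(y)=1$ (and with every vertex at distance~$1$ from the new point) is extremal. It corresponds to a "midpoint'' $z$ of $x$ and $y$. The first candidate we would try is this: take $\mathcal{H}(G)$, add $z$ as a true twin of $y$, and then turn $x$ into a false twin of $y$ in the new graph, with $z\in N(x)\cap N(y)$.

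The difficulty is twofold. First, we must classify \emph{all} extremal functions $f$ on $V(G+x)$ and show that at most one new one arises beyond those coming from $\mathcal{H}(G)$. This holds because any extremal $f$ satisfies $|f(x)-f(y)|\le$ the distance defect, and the two twins can only differ in the single configuration above. Second, we must check the Helly property for ball families that separate $x$ from $y$. This is exactly where the point $z$ is needed. We expect this case analysis to be the main technical work.

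\textbf{Complexity.} For the running time, each step adds $\cO(1)$ vertices. The edges are created by copying a neighbourhood $N_{\mathcal{H}}[y]$. To stay within $\cO(n+m)$ overall, we would bound $|E(\mathcal{H}(G))|$ by charging each hull edge to an edge of $G$ or to an edge of the split/pruning tree. Concretely, we would show that vertices of $\mathcal{H}(G)\setminus V(G)$ have neighbourhoods that are unions of twin classes already present. Then we represent the hull implicitly through these twin classes, for example through its split decomposition, so that the copying of neighbourhoods never has to be written out explicitly.
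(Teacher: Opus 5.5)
\noindent Your overall scheme is sound: induct along a pruning sequence and show that each one-vertex extension of $G$ induces one or two one-vertex extensions of the hull. The pendant and true-twin cases are essentially right (apart from the obvious exception $f=d(x,\cdot)$, every extremal function is determined by its restriction as you say). The false-twin case, however, fails as stated.

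\begin{itemize}
\item \textbf{The midpoint need not exist.} Let $f_0$ be given by $f_0(x)=f_0(y)=1$ and $f_0(v)=d(y,v)$ for $v\notin\{x,y\}$. This function is \emph{not} always extremal, so $z$ cannot be added unconditionally. For instance, let $G-x$ be the single edge $yb$. Then $G$ is the path $y,b,x$, a tree, so $\mathcal{H}(G)=G$. But $f_0(b)=1>0=\max_w\{d(b,w)-f_0(w)\}$, and your construction outputs the diamond $K_4-e$, which is Helly but not minimal. Likewise, adding a false twin of a vertex $y$ of $C_4$ yields $K_{2,3}$. Its hull is just $\mathcal{H}(C_4)$ with $x$ added as a false twin of $y$; here $f_0$ is not tight at the vertex opposite to $y$.
\item \textbf{The uniqueness argument misidentifies the mechanism.} The new function is not one where the twins differ. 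What the computation gives is that every integral extremal $f\notin\{d(x,\cdot),d(y,\cdot)\}$ has $f(x)=f(y)=t\ge 1$. If $t\ge 2$, or if $t=1$ and $f(v)<d(y,v)$ for some $v\notin\{x,y\}$, then $f$ restricted to $V(G-x)$ is already extremal on $G-x$. In the remaining case $f\ge d(y,\cdot)$ off $\{x,y\}$, while extremality gives $f(v)=\max_w\{d(v,w)-f(w)\}\le d(v,y)$. Hence $f=f_0$.
\item \textbf{The correct rule.} At most one new vertex appears, and it appears if and only if every $v\in V(G)\setminus\{x,y\}$ has some $w\in V(G)\setminus\{x,y\}$ with $y\in I(v,w)$. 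When it does appear, it is indeed a true twin of $y$ in $\mathcal{H}(G-x)$. Indeed, an extremal $h$ with $h(y)=2$ lying at sup-distance $1$ from $f_0$ would force $h(y)=\max_w\{d(y,w)-h(w)\}\le 1$. So the distance-hereditary conclusion survives, but only once this existence condition is built into the construction.
\end{itemize}

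\noindent The complexity part has two further gaps.

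\begin{itemize}
\item \textbf{Testing the condition.} The existence condition is a global property of the current graph, and your proposal never says how to test it. A BFS from $y$ at every false-twin step already costs $\cO(nm)$ overall, so the linear running time is not established.
\item \textbf{Explicit hull size.} Your complexity paragraph only announces a charging argument for $|E(\mathcal{H}(G))|$, then falls back on an implicit representation through the split decomposition. That does not prove the statement as used. The paper runs Lemma~\ref{lem:helly-for-DH} on $\mathcal{H}(G)$ and scans closed neighbourhoods in $\mathcal{H}(G)$ within $\cO(m)$ time. This needs an explicit hull with $\cO(n+m)$ edges, and that bound has to be proved.
\end{itemize}

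\noindent For the record, the paper itself gives no argument for this lemma; it imports the result from~\cite{InjHullGraphs}.
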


    By using Lemma~\ref{lem:DH-hull}, we can apply Lemma~\ref{lem:helly-for-DH}, for Helly graphs, directly to distance-hereditary graphs.
    The following result is also useful in the design of our algorithm:
    \begin{mylemma}[see Lemma $2.1$ in~\cite{Gpunimodal-ecc}]\label{lem:cartesian-tree}
        Let $G=(V,E)$ be a graph (possibly with loops), and let $\kappa : V \mapsto \mathbb{R}_{\ge 0}$.
        There is an $O(n+m)$-time algorithm that computes, for every $u \in V$, $\mu_\kappa(u) := \max\{\kappa(v) : v \nsim u \}$.
    \end{mylemma}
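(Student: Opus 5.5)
We would treat $\mu_\kappa(u)$ as a maximum over the complement of a neighbourhood. The plan is to reduce it to $\cO(\deg(u)+1)$ range-maximum queries over a fixed linear arrangement of $V$, so that the total work is proportional to $\sum_u (\deg(u)+1) = \cO(n+m)$. The one ingredient we would not prove ourselves is a linear-preprocessing, constant-query range-maximum structure, which is what the name ``Cartesian tree'' points to.

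\textbf{Step 1 (linear order and sorted adjacency).} Fix an arbitrary numbering $v_1,\ldots,v_n$ of $V$ and form the array $A[i]=\kappa(v_i)$. Next we make every adjacency list sorted by index. To do this without comparison sorting, scan $i=1,\ldots,n$, and for each neighbour $x$ of $v_i$ (including $v_i$ itself if there is a loop) append $i$ to a fresh list $L(x)$. Each $L(x)$ is then automatically increasing, and the whole pass costs $\cO(n+m)$.

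\textbf{Step 2 (range-maximum structure).} Build the Cartesian tree of $A$, a max-heap-ordered binary tree whose in-order traversal is $A$, in $\cO(n)$ time with the standard stack algorithm. Then preprocess it for lowest common ancestor queries in $\cO(n)$ time, e.g.\ Harel--Tarjan or Bender--Farach-Colton. The maximum of $A[a..b]$ is then the label of the lowest common ancestor of positions $a$ and $b$, so every range-maximum query is answered in $\cO(1)$ time. This is the only nontrivial step, and we would take it from the literature rather than reprove it.

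\textbf{Step 3 (queries).} For a vertex $u$, let $L(u)=(i_1<\cdots<i_d)$ be its sorted list of neighbour indices. The vertices nonadjacent to $u$ occupy exactly the maximal intervals $[1,i_1-1]$, $[i_k+1,i_{k+1}-1]$ for $1\le k<d$, and $[i_d+1,n]$; empty intervals are discarded. Setting $\mu_\kappa(u)$ to the maximum of the range-maximum answers over these at most $d+1$ intervals gives the correct value. The loop convention is handled automatically: if $u$ has no loop then $u\notin L(u)$, so $u$ lies in one of the intervals and counts as nonadjacent to itself, while a loop puts $u$ into $L(u)$ and excludes it. If every interval is empty we return the convention for an empty maximum, say $-\infty$ or $0$.

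Each vertex $u$ costs $\cO(\deg(u)+1)$, so summing over $u$ gives $\cO(n+m)$, and adding Steps 1 and 2 keeps the total at $\cO(n+m)$. The main obstacle is the linear-time range-maximum preprocessing in Step 2. If a self-contained argument is wanted, we would fall back on the standard reduction from range-maximum queries on the Cartesian tree to $\pm1$ range-minimum queries with block decomposition.
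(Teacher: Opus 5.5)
Your proposal is correct. The paper gives no proof of its own, only a pointer to Lemma~2.1 of \cite{Gpunimodal-ecc}, and your reduction is a complete $\cO(n+m)$ argument in the same Cartesian-tree spirit: at most $\deg(u)+1$ constant-time range-maximum queries per vertex $u$, bucket-sorted adjacency lists, and the loop convention the paper later relies on.

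As a side remark, you can drop the LCA machinery you flag as the main obstacle. A top-down search of the max-heap-ordered Cartesian tree from its root, expanding only nodes adjacent to $u$, reaches a maximum non-neighbour of $u$ after visiting $\cO(\deg(u)+1)$ nodes, so you need neither range-maximum preprocessing nor sorted adjacency lists.
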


     We need one more result for choosing the start vertex of our local-search algorithm, namely:
    \begin{mylemma}[see Theorem $3$ in~\cite{Pri}]\label{lem:lambda-DH}
        Every distance-hereditary graph $3$-embeds into a tree.
    \end{mylemma}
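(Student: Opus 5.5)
We plan to prove the lemma directly, via a layering (BFS) tree. The target is the stronger additive bound $d_T(u,v) \le d_G(u,v)+2$ for a spanning tree $T$ of $G$. The multiplicative bound then follows at once: for $u \ne v$ we have $d_G(u,v) \ge 1$, so $d_G(u,v) \le d_T(u,v) \le d_G(u,v)+2 \le 3d_G(u,v)$. The lower bound $d_G \le d_T$ is automatic because $T$ is a subgraph of $G$.

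\textbf{Construction.} Fix a root $r$ and set $L_k = \{ z : d(r,z) = k \}$. Call a \emph{cluster} a set $C$ of vertices of $L_k$ that is a connected component of $G[L_k]$ restricted to one component of $G \setminus B_{k-1}(r)$. In other words, two vertices of $L_k$ lie in the same cluster if and only if some path between them avoids $B_{k-1}(r)$. The key structural fact we need is that all vertices of a cluster $C \subseteq L_k$ have the same neighbourhood in $L_{k-1}$. We expect to derive it from the definition of distance-hereditary graphs, as follows. Take $x,y \in C$ and a neighbour $p \in L_{k-1}$ of $x$ that is not adjacent to $y$. Choose a shortest $(x,y)$-path avoiding $B_{k-1}(r)$, and join it to shortest paths from $r$ to $p$ and from $r$ to $y$. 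Extracting an induced path from this union yields an induced path that is not a shortest path, which is a contradiction. This is the classical property of Bandelt--Mulder and D'Atri--Moscarini, and it is precisely what we use in place of hyperbolicity. Given this fact, for each cluster $C \subseteq L_k$ with $k \ge 1$ we choose one common neighbour $p_C \in L_{k-1}$ and make every vertex of $C$ a child of $p_C$. The result is a BFS spanning tree $T$, computable in $\cO(m)$ time.

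\textbf{Distance bound.} Let $u \in L_a$ and $v \in L_b$. For each $j$, let $C_u^j$ be the cluster at layer $j$ that contains the $T$-ancestor of $u$; it is also the unique layer-$j$ cluster separating $u$ from $r$ outside $B_{j-1}(r)$. Define $C_v^j$ similarly, and let $k$ be the largest index with $C_u^k = C_v^k$. By the separation property of clusters, the clusters $C_u^{k+1}$ and $C_v^{k+1}$ lie in different components of $G \setminus B_k(r)$. Hence every $(u,v)$-path in $G$ must enter $B_k(r)$, which gives $d_G(u,v) \ge (a-k)+(b-k)$. In $T$, the ancestors $u',v'$ of $u,v$ at layer $k$ both lie in $C_u^k$, so they have the same parent $p_{C_u^k}$, or they coincide. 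Therefore $d_T(u,v) \le (a-k)+(b-k)+2 \le d_G(u,v)+2$.

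\textbf{Main obstacle.} The only delicate step is the structural fact that a cluster has a uniform neighbourhood in the previous layer; this is where the distance-hereditary hypothesis enters. If the direct induced-path argument becomes messy, we would instead invoke the known characterisation of distance-hereditary graphs through their BFS layerings (the neighbourhood-uniformity of components of layers) as a black box. The separation step and the final computation are routine.
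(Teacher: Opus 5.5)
Your argument is correct. It is a genuinely different route from the paper's, which contains no proof of this lemma and simply imports it from Prisner. You instead prove it from scratch via the layering partition. The Bandelt--Mulder property (vertices of $L_k$ joined by a path avoiding $B_{k-1}(r)$ have the same neighbours in $L_{k-1}$) lets each cluster hang from a single parent. The separation argument then yields a spanning tree with $d_G \le d_T \le d_G+2$, and this gives the multiplicative constant $3$ of the paper's definition.

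Besides making the lemma self-contained, your route buys two things. First, no clusters need to be computed. In a standard BFS, the parent of $w\in L_k$ is its earliest-visited neighbour in $L_{k-1}$, which depends only on $N(w)\cap L_{k-1}$. So every BFS tree from $r$ already has the required form and is an additive $2$-spanner, obtained in $\cO(m)$ time. Second, because $T$ is spanning, it can be plugged directly into Proposition~\ref{prop:approx-center-hyp} with $(p,q)=(2,0)$. That gives $C_\pi(G)\subseteq B_{4\delta+5}(c')$ for the start vertex in Theorem~\ref{thm:DH}. Going through the multiplicative statement and Lemma~\ref{lem:tree-embedding} instead gives radius $6\lambda+4\delta+5=4\delta+23$.

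Three points to tidy up.
\begin{enumerate}
\item Your first description of a cluster (a component of $G[L_k]$) is not equivalent to the second (the trace on $L_k$ of a component of $G\setminus B_{k-1}(r)$), and only the second works. Take the distance-hereditary graph with edges $rs,rt,sx,sy,tx,ty,xw,yw$. The vertices $x,y\in L_2$ are non-adjacent, so under the first reading they may receive parents $s$ and $t$. Then $d_T(w,y)=5$ while $d_G(w,y)=1$.
\item The case $k=\min(a,b)$ is not covered by your separation step. There, however, $d_G(u,v)\ge |a-b| = a+b-2k$ by the triangle inequality through $r$.
\item Your induced-path sketch closes cleanly as follows. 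Take the $(x,y)$-path $P$ avoiding $B_{k-1}(r)$ to be induced, let $z$ be the last vertex of $P$ adjacent to $p$, and prepend only a shortest $(r,p)$-path. There can be no chords: the layers differ by at least two, and $p$ has no neighbour on $P$ after $z$. So this is an induced $(r,y)$-path of length at least $k+1>d(r,y)$, a contradiction; the shortest $(r,y)$-path is not needed.
\end{enumerate}
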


    \begin{proof}[Proof of Theorem~\ref{thm:DH}]
        By the combination of Lemma~\ref{lem:lambda-DH} with Theorem~\ref{thm:hyp-weak}, some vertex $c_1$ s.t. $d(c_1,C_\pi(G)) \in \cO(1)$ can be computed in $\cO(m)$ time.
        Then, starting from $v_0 = c_1$, we construct some sequence $v_0,v_1,\ldots,v_\ell$ for which $r_\pi$ is monotonically decreasing, and $v_\ell \in C_\pi(G)$.
        Assume this sequence to be constructed up to some $i < \ell$.
        The following procedure is applied to its last vertex $v_i$:

        \smallskip
        \noindent
        {\bf High-level description of the procedure.}
        Assume first $v_i$ has some neighbor $u$ such that $r_\pi(u) < r_\pi(v_i)$.
        In this situation, we choose a neighbor $v_{i+1}$ of $v_i$ such that $r_\pi(v_{i+1})$ is minimum.
        Now assume for the remainder of the procedure $v_i$ is a local minimum of $r_\pi$ in $G$.
        Let us define the sets $F^j(v_i) = \{ z \in \Supp(\pi) : \pi(z)(d(v_i,z)+j) \ge r_\pi(v_i) \}$, for $0 \le j \le 2$.
        We compute the subset $U$ of all the vertices $u \in B_2(v_i)$ such that for any $j \in \{0,1,2\}$, for any $z \in F^j(v_i)$, $d(u,z) < d(v_i,z)+j$.
        If $U = \emptyset$, then we assert $v_i \in C_\pi(G)$.
        Otherwise, we compute $U^* = \bigcap\{ B_2(u) \cap U : u \in U \}$.
        Then, {\em in the injective hull $\mathcal{H}(G)$}, we compute some vertex $w$ such that $U^* \subseteq N_{\mathcal{H}(G)}[w]$ (whose existence follows from the Helly property).
        %let $w \in N_{\mathcal{H}(G)} \cap N_{\mathcal{H}(G)}(U)$ be minimizing $r_\pi(w)$.
        We stress that $w$ may be not in $G$.
        Finally, we choose for $v_{i+1}$ any vertex of $G$ that minimizes $r_\pi$ within $N_{\mathcal{H}(G)}[w] \cap V$.
        Furthermore, we assert $v_{i+1} \in C_\pi(G)$.

        \smallskip
        \noindent
        {\bf Correctness.} 
        We only need to consider the case when $v_i$ is a local minimum of $r_\pi$ in $G$.
        By Proposition~\ref{prop:DH}, $C_\pi(G) \subseteq B_2(v_i)$.
        For a vertex $u \in B_2(v_i)$ to satisfy $r_\pi(u) < r_\pi(v_i)$, it is necessary and sufficient that for any $j \in \{0,1,2\}$, for any $z \in F^j(v_i)$, $d(u,z) < d(v_i,z)+j$.
        Hence, the subset of all the vertices of $B_2(v_i)$ with radius less than $r_\pi(v_i)$ is exactly $U$.
        So, if $U = \emptyset$, then $v_i \in C_\pi(G)$.
        However, even if $U \ne \emptyset$, a vertex $u \in U$ satisfies $r_\pi(u) < r_\pi(v_i)$ but it may be the case that it is not a center.
        
        For any $u \in U$, we now prove that $C_\pi(G) \subseteq B_2(u)$.
        For that, let $w_u \in N_G(v_i) \cap N_G(u)$ be arbitrary.
        Suppose by contradiction there exists a $c \in C_\pi(G)$ such that $d(u,c) \ge 3$.
        We apply Lemma~\ref{lem:DH-outergates} in $G$ for the path $[v_i,w_u,u]$.
        By doing so, we obtain the existence of a $c^* \in I_G(c,v_i) \cap I_G(c,w_u) \cap I_G(c,u)$ such that $c^*$ has at least one neighbor among $\{v_i,w_u,u\}$.
        As $d(v_i,c) = 2$, either $c = c^* \sim w_u$, or $c \ne c^* \sim v_i$. 
        Furthermore, as $d(u,c) \geq 3$, we obtain $c^* \nsim u$.
        Therefore, $w_u \in S_1(u,c^*,G) \subseteq S_1(u,c,G)$.
        However, by Proposition~\ref{prop:DH} applied in $G$ for $u,c$, we get $r_\pi(w_u) \le r_\pi(u)$.
        As $r_\pi(u) < r_\pi(v_i)$, the latter contradicts our assumption that $v_i$ is a local minimum of $r_\pi$ in $G$.
        Consequently, $C_\pi(G) \subseteq B_2(u)$.
        As $u$ was chosen arbitrarily, and $C_\pi(G) \subseteq U$, we obtain $C_\pi(G) \subseteq U^*$.

        By construction, the vertices of $U^*$ are pairwise at distance at most two.
        It implies that the balls $B_1(u)$, for $u \in U^*$, pairwise intersect.
        By the Helly property applied for $\mathcal{H}(G)$, there exists a vertex $w$ in the injective hull such that $U^* \subseteq N_{\mathcal{H}(G)}[w]$.
        In particular, any vertex of $G$ that minimizes $r_\pi$ within $N_{\mathcal{H}(G)}[w] \cap V$ is a center.

        \smallskip
        \noindent
        {\bf Implementation.}
        We now provide an $\cO(m)$-time implementation of our procedure.
        (We will bound the number of calls to the procedure by $\cO(1)$ at the end of the proof.)
        For that, we first prove two intermediary claims:

        \begin{myclaim}\label{claim:dh-best-response}
            For any vertex $v$ in a distance-hereditary graph with $\cO(m)$ edges, for any profile $\pi$, we can compute the values $r_\pi(u)$, for each $u \in N(v)$, in total $\cO(m)$ time.            
        \end{myclaim}
            For each $u \in N(v)$, $r_\pi(u) = \max\{r^1(u),r^2(u),r^3(u),r^4(u),r^5(u)\}$, where
        \begin{itemize}
            \item $r^1(u) = \max\{\pi(v)\} \cup \{ \pi(u') : u' \in N(u) \cap N(v) \}$;
            \item $r^2(u) = \max\{ 2\pi(u'') : u'' \in N(v) \setminus N[u] \}$;
            \item $r^3(u) = \max\{ \pi(x)(d(v,x)-1) : x \nsim v, u \in I(v,x) \}$;
            \item $r^4(u) = \max\{ \pi(x)d(v,x) : x \nsim v, d(u,x) = d(v,x) \}$;
            \item $r^5(u) = \max\{ \pi(x)(d(v,x)+1) : x \nsim v, v \in I(u,x) \}$.
        \end{itemize}
        We can compute the values $r^1$ in total $\cO(m)$ time by scanning the neighborhoods $N(u)$, for $u \in N(v)$.
        Furthermore, we can also compute the values $r^2$ in total $\cO(m)$ time, as follows: we add a loop at every $u \in N(v)$, then we apply Lemma~\ref{lem:cartesian-tree} on the subgraph induced by $N[v]$, with $\kappa = 2\pi$.

        Now, for every $x \in \Supp(\pi)$ such that $x \nsim v$, let $x^* \in S_2(v,x)$.
        Note that all the values $x^*$ can be computed in $\cO(m)$ time while performing a BFS with start vertex $v$.
        Furthermore, by Lemma~\ref{lem:DH-outergates} applied for $N[v]$, $x^* \in I(x,u)$ holds for any $u \in N[v]$.
        It implies that for any $u \in N(v)$, $r^3(u) = \max\{ \pi(x)(d(v,x)-1) : x \nsim v, x^* \sim u \}$. 
        Therefore, the values $r^3$ can be also computed in total $\cO(m)$ time by scanning the neighborhoods $N(u)$, for $u \in N(v)$.

        For each $x \in \Supp(\pi)$ such that $x \nsim v$, we then consider some arbitrary $x^{**} \in N(v) \cap N(x^*)$.
        By Lemma~\ref{lem:DH-outergates} applied for the path $[v,x^{**},x^*]$, we obtain that for any $u \in N(v)$, $d(u,v) = d(u,x)$ holds if and only if $u \sim x^{**}$ and $u \nsim x^*$.
        However, this above characterization does not immediately lead to an $\cO(m)$-time procedure for computing the values $r^4$.
        Roughly, this is because we may have $x^{**} = y^{**}$ for some distinct vertices $x,y$.
        Therefore, for any $w \in N(v)$, we consider $X_w = \{ x^* : x^{**} = w \}$. Note that the sets $X_w$, for $w \in N(v)$, are pairwise disjoint.
        If $X_w \ne \emptyset$, then we construct a graph $H_w$ as follows:
        \begin{itemize}
            \item Every vertex of $N(v) \cap N(w)$ is in $H_w$;
            \item Every vertex of $X_w$ is also in $H_w$;
            \item There is no other vertex in $H_w$;
            \item And the edge $yz$ exists in $H_w$ if and only if (up to permuting $y$ and $z$): $y \in N(v) \cap N(w)$, $z \in X_w$, and $y,z$ are adjacent in the graph.
        \end{itemize}
        Define $\kappa_w$ such that $\kappa_w(u) = 0$ for any $u \in N(v) \cap N(w)$, and $\kappa_w(x^*) = \pi(x)d(v,x)$ for any $x^* \in X_w$.
        By applying Lemma~\ref{lem:cartesian-tree} for $H_w,\kappa_w$, for each $u \in N(v) \cap N(w)$ we obtain $\mu_w(u) = \max\{ \pi(x)d(v,x) : x^* \in X_w, x^* \nsim u \}$.
        As all the sets $X_w$, $w \in N(v)$, are disjoint, the cumulative number of edges in the graphs $H_w$ is in $\cO(m)$.
        Therefore, all the calls to Lemma~\ref{lem:cartesian-tree} can be done in total $\cO(m)$ time.
        By doing so, for any $u \in N(v)$, $r^4(u) = \max\{ \mu_w(u) : w \sim u, X_w \ne \emptyset \}$.

        Finally, consider the following graph $G_v$: the vertex set is $N[v]$ and, for any $w \in N(v)$, 
        $$N_{G_v}(w) = \bigcup\{ N(z) \cap N(v) : z \in \{w\} \cup X_w \}.$$
        We further add a loop at every vertex $w$.
        Define $\kappa_v(w) = \max\{0\} \cup \{ \pi(x)(d(v,x)+1) : x^* \in X_w \}$ for any $w \in N(v)$.
        To compute all the values $r^5$, we apply Lemma~\ref{lem:cartesian-tree} to $G_v,\kappa_v$. $\diamond$

        \medskip
        \noindent
         Note that in particular, we can use Claim~\ref{claim:dh-best-response} to compute a vertex $u$ that minimizes $r_\pi(u)$ within $N[v]$.
        Furthermore, as by Lemma~\ref{lem:DH-hull} $\mathcal{H}(G)$ is distance-hereditary, we can apply Claim~\ref{claim:dh-best-response} for both $G$ and $\mathcal{H}(G)$.
        The following claim completes the proof for the $\cO(m)$-time implementation of our procedure:
        
        \begin{myclaim}\label{claim:dh-balls-intersect}
            For any vertex $v$, for any $j \in \{0,1,2\}$, let $$F^j(v) = \{ x \in \Supp(\pi) : \pi(x)(d(v,x)+j) \ge r_\pi(v) \}.$$
            The following subset can be computed in $\cO(m)$ time: $$U^j(v) = \{ u : d(u,v) = 2, \text{and} \ \forall x \in F^j(v), d(u,x) < d(v,x) + j \}.$$
        \end{myclaim}
            If $v \in F^j(v)$, then as $j \le 2$, we return $U^j(v) = \emptyset$.
            Therefore, assume $v \notin F^j(v)$.
            We partition $F^j(v)$ in the following subsets: $M_1 := F^j(v) \cap N(v)$, $M_2 = F^j(v) \cap \{ x : d(v,x) = 2 \}$, and $M_3 = F^j(v) \cap \{ x : d(v,x) \ge 3 \}$.
            Furthermore, for each $x \in M_3$, let $x^* \in S_3(v,x)$ be arbitrary, and let $M_3^* = \{ x^* : x \in M_3 \}$.
            By Lemma~\ref{lem:DH-outergates}, for any $x \in M_3$, for any $u \in B_2(v)$, $x^* \in I(x,u)$ holds.
            Consequently, $U^j(v) = W_1 \cap W_2 \cap W_3$, where
            \begin{itemize}
                \item $W_1 = \bigcap\{ B_j(x) : x \in M_1 \}$;
                \item $W_2 = \bigcap\{ B_{j+1}(x) : x \in M_2\}$;
                \item $W_3 = \bigcap \{ B_{j+2}(x) : x \in M_3^* \}$.
            \end{itemize}
            We are left to explain how to compute $W_1$ (the computations of $W_2,W_3$ are similar).
            For that, we compute $\mathcal{H}(G)$, which takes $\cO(m)$ time (Lemma~\ref{lem:DH-hull}).
            Then, we apply Lemma~\ref{lem:helly-for-DH} for $\mathcal{H}(G)$, $M_1$, and $k = j$. $\diamond$

        \medskip
        \noindent
        Let us put Claims~\ref{claim:dh-best-response} and~\ref{claim:dh-balls-intersect} together.
        We apply Claim~\ref{claim:dh-best-response} in $G$ for $v := v_i$.
        By doing so, either we compute some neighbor $v_{i+1}$ such that $r_{\pi}(v_{i+1}) < r_\pi(v_i)$ is minimum within $N_G[v_i]$, or we assert that $v_i$ is a local minimum of $r_\pi$ in $G$.
        In the latter case, we apply Claim~\ref{claim:dh-balls-intersect} in $G$ for $v := v_i$ and for each $j \in \{0,1,2\}$.
        By doing so, the set $U$ is computed in $\cO(m)$ time.
        Recall that $\mathcal{H}(G)$ can be computed in $\cO(m)$ time (Lemma~\ref{lem:DH-hull}).
        If $U \ne \emptyset$, then we apply Lemma~\ref{lem:helly-for-DH} for $\mathcal{H}(G)$, $M = U$ and $k =2$ to compute the subset $U^*$.
        To compute a vertex $w$ of $\mathcal{H}(G)$ such that $U^* \subseteq N_{\mathcal{H}(G)}[w]$, it suffices to scan the closed neighborhoods $N_{\mathcal{H}(G)}[u]$, for $u \in U^*$.
        Finally, we are done applying Claim~\ref{claim:dh-best-response} in $\mathcal{H}(G)$ for $v := w$.

        \smallskip
        \noindent
        {\bf Runtime analysis.} It remains to prove that the number of calls to the procedure is in $\cO(1)$.
        Observe that if we call the procedure for some $v_i$, then the procedure is called at least one more time if and only if there exists some $u \in N(v_i)$ such that $r_\pi(u) < r_\pi(v_i)$.
        Therefore, to bound the number of calls to the procedure, it suffices to bound the maximum length of a path $[v_0,v_1,\ldots,v_s]$ for which $r_\pi$ is monotonically decreasing and such that, for any $i < s$, $v_{i+1}$ minimizes $r_\pi$ within $N(v_i)$.

        For that, let $c \in C_\pi(G)$ be closest to $v_0$.

        \begin{myclaim}\label{claim:dh-distgeq2}
            Consider any $i$ such that $c$ is also closest to $v_i$ within $C_\pi(G)$.
            If $d(v_i,c) \ge 2$, then, $d(v_{i+1},c) \le d(v_i,c)$ holds.
        \end{myclaim}
        Suppose by contradiction $d(v_i,c) < d(v_{i+1},c)$ holds.
        It implies $d(v_{i+1},c) \ge 3$.
        However, as $v_i \in S_1(v_{i+1},c)$, by Proposition~\ref{prop:DH}, we would obtain $r_\pi(v_i) \le r_\pi(v_{i+1})$.
        A contradiction. $\diamond$

        \begin{myclaim}\label{claim:dh-far}
            Consider any $i$ such that $c$ is also closest to $v_i$ within $C_\pi(G)$.
            If $d(v_i,c) \ge 3$, then $d(v_{i+1},c) < d(v_i,c)$.
            In particular, $c$ is also closest to $v_{i+1}$ within $C_\pi(G)$.
        \end{myclaim}
            Suppose $d(v_i,c) \le d(v_{i+1},c)$.
            By Claim~\ref{claim:dh-distgeq2}, $d(v_i,c) = d(v_{i+1},c)$ holds.
            By Lemma~\ref{lem:DH-outergates} applied for the edge $v_iv_{i+1}$, there exists a $w \in S_1(v_i,c) \cap S_1(v_{i+1},c)$.
            But as $d(v_{i+1},C_\pi(G)) \ge d(v_i,C_\pi(G))-1 = d(v_i,c)-1 \ge 2$, we get that $v_{i+1}$ is not a center.
            Hence, by Proposition~\ref{prop:DH}, $r_\pi(w) < r_\pi(v_{i+1})$ holds, thus contradicting the minimality of $r_\pi(v_{i+1})$ within $N(v_i)$. $\diamond$

        \begin{myclaim}\label{claim:dh-close}
            Let $v_i,v_{i+1},\ldots,v_{i+p}$ be satisfying $d(v_{i+j},c) = 2$ for any $0 \le j \le p$.
            Then, $p \le 4$.
        \end{myclaim}
            Indeed, as $r_\pi$ is monotonically decreasing on the path $v_i,v_{i+1},\ldots,v_{i+p}$, the latter must be an induced path.       
            As $G$ is distance-hereditary, this path is shortest. It implies $d(v_i,v_{i+p}) = p$.
            Furthermore, $d(v_i,v_{i+p}) \le d(v_i,c) + d(c,v_{i+p}) = 4$. $\diamond$

        \begin{myclaim}\label{claim:dh-last}
           If $d(v_i,c) = 1$, then either $i=s$, or $i = s-1$ and $v_s \in C_\pi(G)$. 
        \end{myclaim}
            Assume $i < s$.
            As $c \in C_\pi(G)$, and $v_{i+1}$ minimizes $r_\pi$ within $N(v_i)$, we obtain $v_{i+1} \in C_\pi(G)$.
            Then, necessarily $i+1=s$. $\diamond$

    \smallskip
    \noindent
        Let us put the claims altogether to prove that the path has length $s \le d(v_0,c)+4 = d(v_0,C_\pi(G))+4$.
        As $d(v_0,C_\pi(G)) \in \cO(1)$, the latter will prove that $s = \cO(1)$.
        If $d(v_0,c) \le 1$, then, by Claim~\ref{claim:dh-last}, $s \le 1$ holds.
        Assume now $d(v_0,c) = 2$.
        We consider the largest index $j$ such that $d(v_{j'},c) = 2$ holds for each $j'$ between $0$ and $j$.
        By Claim~\ref{claim:dh-close}, $j \le 4$.
        Assume further $j < s$ (for else, we are done).
        If $c$ is not a closest center to $v_j$, then, $d(v_j,c') \le 1$ holds for some other $c' \in C_\pi(G)$, and so, by Claim~\ref{claim:dh-last}, $s \le j+1 \le 5$.         
        Otherwise, by Claim~\ref{claim:dh-distgeq2}, $d(v_{j+1},c) \le d(v_j,c) = 2$ holds.
        In the latter case, the maximality of index $j$ implies $d(v_{j+1},c) = 1$, and so, by Claim~\ref{claim:dh-last}, $s \le j+2 \leq 6 = d(v_0,c)+4$.
        Finally, assume $d(v_0,c) \ge 3$.
        We consider the largest index $k$ such that $d(v_{k'},c) \ge 3$ for each $k'$ between $0$ and $k$.
        By Claim~\ref{claim:dh-far}, $d(\cdot,c)$ is monotonically decreasing on the sub-path $[v_0,v_1,\ldots,v_k]$; moreover, for any $k'$ such that $0 \le k' \le k$, $c$ is also a closest center to $v_{k'}$.
        It implies $k = d(v_0,c) - d(v_k,c) \le d(v_0,c) - 3$.
        As before, we further assume $k < s$ (for else, we are done).
        By Claim~\ref{claim:dh-far}, $d(v_{k+1},c) < d(v_k,c)$ holds.
        The maximality of $k$ implies that $d(v_{k+1},c) = d(v_k,c)-1 = 2$.
        If $c$ is not a closest center to $v_{k+1}$, then, $d(v_{k+1},c'') \le 1$ holds for some other $c'' \in C_\pi(G)$, and so, by Claim~\ref{claim:dh-last}, $s \le k+2 \le d(v_0,c)-1$. 
        Otherwise, by using the same arguments as above (where we replace the start vertex $v_0$ with $v_{k+1}$), we get $s-(k+1) \le 6$.
        Hence, $s = k+1+(s-k-1) \le (d(v_0,c)-3)+1+6 = d(v_0,c)+4$.
    \end{proof}

    \subsection{Planar graphs.}\label{planar}

\begin{mytheorem}\label{th:planar}
    There is an $2^{\cO(\delta)}n\log^{10}{n}$-time algorithm for the \textsc{Weighted Center} problem on $\delta$-hyperbolic planar graphs.
\end{mytheorem}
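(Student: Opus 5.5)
Our approach reduces the problem to its decision version via Lemma~\ref{lem:weighted-center-decision}. It then decides each threshold $\tau$ with a divide-and-conquer scheme on balanced separators, following the abstract's hint that planar hyperbolic graphs have small separators and that a VC-dimension argument can be applied. Fix $\tau$ and set $\lambda_\tau(v)=\lfloor \tau/\pi(v)\rfloor$ for $v\in\Supp(\pi)$. The question becomes whether $\bigcap\{B_{\lambda_\tau(v)}(v) : v\in\Supp(\pi)\}\neq\emptyset$, and if so, to exhibit a vertex in the intersection. Since Lemma~\ref{lem:weighted-center-decision} costs an extra $\cO(\log n)$ factor, it suffices to answer each such query in $2^{\cO(\delta)}n\log^{9}{n}$ time.

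\textbf{Step 1 (separator).} We invoke the separator theorem for planar $\delta$-hyperbolic graphs from~\cite{PlanarHyp}. We use it in the form: there is a balanced separator $S$ that is a union of $\cO(\log n)$ balls of radius $\cO(\delta)$, or equivalently one of size $2^{\cO(\delta)}\log n$, computable in near-linear time. Recursing on the components yields a separator hierarchy of depth $\cO(\log n)$.

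\textbf{Step 2 (intersecting balls across a separator).} Fix a separator $S$ and a side $A$. For a vertex $c\in A$ and a support vertex $v$ on the other side, every shortest $(c,v)$-path crosses $S$. Hence $d(c,v)=\min_{s\in S}\{d(c,s)+d(s,v)\}$. We record, for every $u$, its distance vector $(d(u,s))_{s\in S}$. Because $S$ consists of $2^{\cO(\delta)}\log n$ vertices clustered in $\cO(\log n)$ small balls, and distances inside a cluster vary by at most $\cO(\delta)$, each vector is determined by $\cO(\log n)$ base distances plus an offset pattern drawn from $2^{\cO(\delta)}$ possibilities. Here we use the VC-dimension argument: the family of balls restricted to $A$ and parametrised by these patterns has bounded VC-dimension, via planarity (balls in planar graphs have VC-dimension at most 4). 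Consequently, the number of distinct traces $\{c\in A : d(c,v)\le \lambda_\tau(v)\}$ that support vertices can induce is small, roughly $2^{\cO(\delta)}\mathrm{polylog}\,n$ ``types'' per separator cluster. We group the support vertices by type, keep for each type only the most restrictive constraint, and test a candidate $c$ against all remote constraints by orthogonal range queries of dimension $\cO(\log n)$ reduced to polylog via the clustering. Together with recursion over the hierarchy, this accounts for the $\log^{10}n$ factor.

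\textbf{Step 3 (assembly).} A vertex $c$ satisfies all constraints if it satisfies those from its own leaf piece, checked by brute force on small pieces, and for every ancestor separator those from support vertices on the far side. We therefore propagate, for each vertex, the maximum violation along the hierarchy and return any vertex with no violation. The main obstacle is Step 2: making the type-compression rigorous, i.e.\ showing that the distance profile from $S$ needs only $2^{\cO(\delta)}$ bits of information beyond $\cO(\log n)$ numbers. We would first try to exploit Lemma~\ref{lem:hyp-proj}, which says that geodesics through a ball of radius $\cO(\delta)$ pass within $\cO(\delta)$ of a projection point, so that $d(c,\cdot)$ restricted to a cluster is an additive shift of a fixed function up to an $\cO(\delta)$-bounded error pattern. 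Enumerating these error patterns is what gives the $2^{\cO(\delta)}$ factor.
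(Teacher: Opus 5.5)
\textbf{Verdict: genuine gap.} Your architecture is in the spirit of the paper's proof: separator recursion, compressing far-side support vertices into few ``types'', and keeping one aggregated constraint per type. But Step~2, which you yourself flag as the main obstacle, is the whole content of the argument, and the route you sketch for it does not work.

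\textbf{Why Step~2 fails as sketched.} The VC-dimension bound for planar balls, applied to traces on the side $A$, only gives $\cO(|A|^4)$ distinct traces by the Sauer--Shelah lemma, i.e.\ polynomially many in $n$. Describing a distance vector by ``$\cO(\log n)$ base distances plus an offset pattern'' does not yield $2^{\cO(\delta)}$ possibilities either, since the base distances range independently over up to $n$ values in each of the $\cO(\log n)$ clusters. Lemma~\ref{lem:hyp-proj} controls distances only up to an additive $2\delta$, whereas the constraints $d(c,v)\le\lambda_\tau(v)$ must be decided exactly. Finally, orthogonal range searching in dimension $\cO(\log n)$ is far from near-linear.

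\textbf{The missing idea.} Apply VC-theory with the separator as ground set. By Lemma~\ref{lem:pattern-count}, the patterns $\texttt{p}_u[i]=d(u,s_i)-d(u,s_0)$ with respect to a \emph{connected} separator $S$ take only $\cO(|S|^5)=\cO(\delta^{10}\log^5 n)$ distinct values. Moreover, fix a vertex $c$ and a pattern $\texttt{p}$. Any $s_k$ minimizing $d(c,s_k)+\texttt{p}[k]$ lies on a shortest path from $c$ to \emph{every} far-side $u$ with $\texttt{p}_u=\texttt{p}$, so $d(c,u)=d(c,s_k)+\texttt{p}[k]+d(s_0,u)$. In your decision version, the whole class then collapses to the single test $\min_k\{d(c,s_k)+\texttt{p}[k]\}\le\min_u\{\lambda_\tau(u)-d(s_0,u)\}$. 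The paper instead computes $r_\pi$ at every vertex directly, with no binary search. That is why it uses the convex hull trick of Lemma~\ref{lem:convex-hull-trick} in place of this minimum.

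\textbf{The separator theorem is misquoted.} The features you drop are exactly those the recursion needs. Lemma~\ref{lem:planar-sep} gives a single \emph{isometric} separator, not a union of $\cO(\log n)$ balls. It is either a path of length $\cO(\delta^2\log n)$ with constant balance, or a cycle of length $\cO(\delta)$ with balance only $1/(2^{\cO(\delta)}\log n)$.
\begin{itemize}
\item Connectedness of $S$ is what Lemma~\ref{lem:pattern-count} requires.
\item Isometry is what makes every $G[C_i\cup S]$ isometric in $G$, hence again planar and $\delta$-hyperbolic. So one must recurse on $G[C_i\cup S]$, not on the bare components $C_i$ as you propose. Distances in $G[C_i]$ can exceed those in $G$, because geodesics may pass through $S$. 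Your same-piece constraints and leaf brute force would therefore use wrong distances.
\item Your Step~3 handles neither candidates $c\in S$ nor support vertices lying on separators.
\item Because of the weak balance in the cycle case, the recursion depth is $2^{\cO(\delta)}\log^2 n$, not $\cO(\log n)$. This, rather than any enumeration of error patterns, is where the factor $2^{\cO(\delta)}$ in the running time comes from.
\end{itemize}
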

    
Recall that a \emph{separator} of a graph $G$ is a subset $S$ such that $G \setminus S$ is disconnected.
The \emph{balance} of a separator $S$ is the largest $\alpha$ such that every connected component of $G \setminus S$ contains at most $(1-\alpha)n$ vertices.
Finally, a separator $S$ is called \emph{isometric} if it induces an isometric (or distance-preserving) subgraph of $G$.
Note that if $S$ is an isometric separator, and $C$ is any connected component of $G \setminus S$, then $C \cup S$ also induces an isometric subgraph.
\begin{mylemma}[see Theorem $1$ in~\cite{PlanarHyp}]\label{lem:planar-sep}
    If $G$ is a planar $\delta$-hyperbolic graph, then we can compute in $\cO(\delta^2n\log^4{n})$ time an isometric separator $S$ such that
    \begin{enumerate}
        \item either $S$ is a path of length $\cO(\delta^2\log{n})$, and it has constant balance;
        \item or $S$ is a cycle of length $\cO(\delta)$, and it has balance at least $\frac 1 {2^{\cO(\delta)}\log{n}}$.
    \end{enumerate}
\end{mylemma}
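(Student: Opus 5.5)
The statement is Theorem~1 of~\cite{PlanarHyp}, so the plan is to reconstruct its proof along the lines of the Lipton--Tarjan fundamental-cycle separator, with hyperbolicity used to shorten the separator. First, I would fix a root $r$ and a BFS tree $T$ of $G$. For a nontree edge $xy$ (after a planar triangulation used only to guide the topology, not to change distances), the union $P_x\cup P_y\cup\{xy\}$ of the tree paths to $x$ and $y$ is a closed curve. A standard weighted-face argument shows some such curve has constant balance. Both $P_x,P_y$ are geodesics, and $d(x,y)\le 1$. By the thin-triangle property (a consequence of the four-point condition, in the spirit of Lemma~\ref{lem:hyp-morse}), $P_x$ and $P_y$ fellow-travel within distance $\cO(\delta)$ up to distance about $(x|y)_r-\cO(\delta)$ from $r$, and only then diverge.

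Second, I would discard the fellow-travelling prefix. Along the prefix I choose rungs: pairs $p_i\in P_x$, $q_i\in P_y$ at the same level, joined by a geodesic of length $\cO(\delta)$. Each rung together with the remaining parts of $P_x$, $P_y$ and $xy$ bounds a nested family of regions. Two cases arise.
\begin{enumerate}
\item If one of these regions, bounded by a rung plus the divergent tails, already has constant balance, the separator is this tail part closed by one rung.
\item Otherwise, consecutive rungs $R_i,R_{i+1}$ together with the $\cO(\delta)$-long pieces of $P_x$, $P_y$ between them form cycles of length $\cO(\delta)$. Scanning the nested cycles and comparing the weights they enclose, a pigeonhole argument over the $\cO(\log n)$ scales of enclosed weight should yield one such short cycle with balance at least $1/(2^{\cO(\delta)}\log n)$. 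The $2^{\cO(\delta)}$ loss accounts for the number of vertices that a cycle of length $\cO(\delta)$ can pass through, and for the choice among rungs. This gives outcome~(2).
\end{enumerate}

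Third, in case~1 I must bound the length of the divergent tails by $\cO(\delta^2\log n)$. The intended tool is an isoperimetric (area) inequality for planar $\delta$-hyperbolic graphs. If two geodesics from a common point diverge and stay separated for length $L$, then the disc they enclose with the rung has a number of vertices exponential in $L/\delta^2$ (linear isoperimetry combined with exponential divergence of geodesics). Since this count is at most $n$, we get $L=\cO(\delta^2\log n)$. The separator is then the tail of $P_x$, the edge $xy$, the tail of $P_y$ and the closing rung, which is a path or cycle of that length.

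To get isometry, I would replace the piecewise-geodesic separator by a geodesic between its endpoints, or by a shortest closed curve in its homotopy class, and argue that the region structure, hence the balance, changes by only a constant factor. Morse-type control from Lemma~\ref{lem:hyp-morse} keeps the new curve within $\cO(\delta)$ of the old one, and planarity lets us re-route without crossing. For the running time, BFS and the face-weight search take near-linear time. Trying the $\cO(\delta)$ rung positions at each of $\cO(\log n)$ scales, each with a planar shortest-path or area computation in $\cO(\delta n\log^{3}n)$ time, accounts for $\cO(\delta^2 n\log^4 n)$.

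The main obstacle I expect is the third step: proving the area lower bound rigorously in a graph setting (discrete isoperimetry for planar hyperbolic graphs), and showing that the isometric replacement does not destroy balance. I would first attempt the area bound by tiling the region between the divergent geodesics with thin quadrilaterals and counting the boundary vertices at each level.
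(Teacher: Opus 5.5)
The paper does not prove this lemma; it quotes it from Theorem~1 of~\cite{PlanarHyp}. So your reconstruction has to stand on its own, and its key step fails.

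\textbf{Step~3 is false.} You claim that two geodesics from $r$ that diverge for length $L$ enclose, together with a rung, a number of vertices exponential in $L/\delta^2$. Neither tool you invoke gives this. The linear isoperimetric inequality is an \emph{upper} bound on filling area. Exponential divergence lower-bounds the length of paths of $G$ that avoid a ball. Neither forces any vertex to lie inside your curve, because the closing edge $xy$ of the triangulation may cross a large face of $G$ at no cost in $d_G$.

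Here is a concrete counterexample. Let $G$ be a path (which is $0$-hyperbolic), rooted at its middle vertex $r$. Let $xy$ be a triangulation edge joining the two ends. Then $P_x$ and $P_y$ diverge at $r$, each has length about $n/2$, and no vertex lies strictly inside or outside the closed curve. The Lipton--Tarjan balance condition therefore holds trivially. Since there is no fellow-travelling prefix, your procedure outputs the whole path, of length $n-1$. Tree-like planar hyperbolic graphs with long faces behave this way generically.

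This also exposes an inconsistency in Step~1. The bound $d(x,y)\le 1$ holds only if $xy\in E(G)$. In that case the bigon is $\cO(\delta)$-thin along its whole length, so there are no long tails at all. But a balanced fundamental cycle through an edge of $G$ need not exist; a tree has no nontree edge. What your sketch is missing is a mechanism that removes long divergent tails from the separator and restores balance afterwards, since the tail vertices themselves may carry most of the weight.

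\textbf{The other steps are also unsupported.} In Case~(2), suppose a single cell between two consecutive rungs carries most of the weight. Then none of the separators you consider is guaranteed to be balanced. The bound $1/(2^{\cO(\delta)}\log n)$ is asserted ("should yield") rather than derived. The stated source of the factor $2^{\cO(\delta)}$ does not make sense either, since a cycle of length $\cO(\delta)$ has only $\cO(\delta)$ vertices. One would have to recurse inside that cell, where the tail problem reappears.

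The isometry step fails as stated. A geodesic within Hausdorff distance $\cO(\delta)$ of your separator, as Lemma~\ref{lem:hyp-morse} provides, can still pass on the other side of arbitrarily many vertices. For instance, attach large pendant trees inside the strip between the two curves; this preserves planarity and does not increase $\delta$. So closeness of the curves gives no control on the balance.

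Finally, the running-time accounting is fitted to the target bound rather than derived from a concrete procedure.
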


As it can be expected, our strategy for proving Theorem~\ref{th:planar} consists in recursively disconnecting a planar hyperbolic graph using Lemma~\ref{lem:planar-sep}.
After each separation, for any connected component $C$, we need to compute, for each vertex of $C$, its maximum weighted distance to the vertices in the other connected components.
For that, we use the algorithmic toolkit from~\cite{VCdim}, which is based on VC-theory.
In what follows, for any graph $G$, the \emph{boundary} of an induced subgraph $H$ is defined as the subset of all the vertices of $H$ with at least one neighbor in $V(G) \setminus V(H)$.
We denote the boundary of $H$ by $\partial H$. Furthermore, we assume for what follows that $\partial H$ is totally ordered: $\partial H = \langle s_0,s_1,\ldots,s_{|\partial H|-1} \rangle$ (the ordering is arbitrary).
Then, for any vertex $v$, the \emph{pattern} of $v$ with respect to $\partial H$ is the $|\partial H|$-dimensional array $\texttt{p}_v$ such that, for each $i$, $\texttt{p}_v[i] = d(v,s_i) - d(v,s_0)$.
The following result is proved in~\cite{VCdim} for $K_h$-minor-free graphs, but we restate it only for planar graphs:
\begin{mylemma}[see Lemma $4$ in~\cite{VCdim}]\label{lem:pattern-count}
    Let $H$ be a connected induced graph of some planar graph $G$.
    The number of distinct patterns with respect to $\partial H$ is in $\cO(|\partial H||V(H)|^4)$.
\end{mylemma}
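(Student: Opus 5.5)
The plan is to follow the VC-dimension strategy of~\cite{VCdim} and reduce the count of patterns to counting the cells of a set system of bounded dimension. First I bound the entries of each pattern. For every $v$ and every $i$, the triangle inequality gives $|\texttt{p}_v[i]| = |d(v,s_i)-d(v,s_0)| \le d(s_i,s_0)$. As $H$ is a connected induced subgraph, $d(s_i,s_0) \le d_H(s_i,s_0) \le |V(H)|-1$. So every pattern is a vector in $\{-|V(H)|,\ldots,|V(H)|\}^{|\partial H|}$.

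Next I encode patterns as cells of an arrangement. For each index $i$ and each threshold $k$ in this range, let $S_{i,k} = \{ v \in V : d(v,s_i) - d(v,s_0) \le k \}$. The pattern of $v$ is exactly the set of pairs $(i,k)$ with $v \in S_{i,k}$. Hence distinct patterns correspond to distinct cells of the arrangement of the family $\mathcal{F} = \{S_{i,k}\}$, which has $N = \cO(|\partial H||V(H)|)$ members. For fixed $i$, the sets $S_{i,k}$ form a chain in $k$. Therefore a cell is determined by choosing, for each $i$, one ``cut position'' in that chain.

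The geometric input is a bounded-shattering statement, proved by the minor argument of~\cite{VCdim}. Suppose some configuration of boundary vertices and thresholds were shattered by the vertices $v$. Then one would pick shortest paths from the witnesses $v$ to the relevant $s_i$ and to $s_0$. Using the connectivity of $H$, one would contract $H$ to route these paths and build a $K_5$ minor, contradicting planarity. Concretely, I would show that no $5$ pairs $(i_1,k_1),\ldots,(i_5,k_5)$ with distinct indices are shattered, i.e.\ the distance VC-dimension is at most $4$. With this in hand, a Sauer--Shelah type count over the chain-structured family bounds the number of cells. A cell is specified by the coordinates on which it deviates from a canonical cell: at most four indices, each with $\cO(|V(H)|)$ threshold choices.

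The main obstacle is getting the factor in $|\partial H|$ down to linear. A naive primal-shatter bound gives $\cO(N^4) = \cO(|\partial H|^4|V(H)|^4)$. To improve it, I would exploit the chain structure per index, counting as in the proof of Sauer--Shelah by induction on the coordinates. When index $i$ is added, only $\cO(|V(H)|)$ new splittings can occur. Moreover, the splittings that are new at this step correspond to traces on a sub-family of fewer indices, and those traces are controlled by the VC bound. Summing over $i$ should yield $\cO(|\partial H| \cdot |V(H)|^4)$. If this induction does not close, my fallback is to choose $s_0$ and reorder $\partial H$ along a spanning tree of $H$. Then consecutive patterns differ by at most $2$ per entry, and the threshold range per new coordinate becomes constant relative to the previous one, which yields the linear dependence.
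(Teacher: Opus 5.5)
The paper gives no proof of this lemma. It quotes Lemma~4 of~\cite{VCdim} (stated for $K_h$-minor-free graphs, here $h=5$), so your argument has to stand on its own. Your first steps are sound:
\begin{itemize}
\item the entries of $\texttt{p}_v$ lie in $[-(|V(H)|-1),|V(H)|-1]$ because $H$ is connected;
\item patterns are exactly the cells of the arrangement of the $N=\cO(|\partial H||V(H)|)$ sets $S_{i,k}$;
\item once you invoke the VC-dimension bound of~\cite{VCdim}, Sauer--Shelah gives $\cO(N^4)=\cO(|\partial H|^4|V(H)|^4)$ patterns. Your contraction sketch for that bound is not a proof, so cite it.
\end{itemize}
Your sentence that every cell deviates from a canonical cell in at most four indices is false: a chain of prefixes has VC-dimension $1$ and admits no such canonical set. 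The correct tool is Pajor's lemma, which charges each cell to a shattered set of size at most $4$. What the statement asserts beyond this is the \emph{linear} factor $|\partial H|$, and that is where your proof has a gap.

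Your per-index induction does not deliver the linear factor. Charge the cells created when index $i$ is added to the $\cO(|V(H)|)$ thresholds $(i,a)$. The cells split by a fixed $(i,a)$ form a family of VC-dimension at most $3$ on the earlier indices. This gives $f(k,4)\le f(k-1,4)+\cO(|V(H)|)\,f(k-1,3)$, which solves to $\cO((k|V(H)|)^4)$ again. Nothing supports your claim that only $\cO(|V(H)|)$ new cells appear per index.

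In fact, no argument using only ``VC-dimension at most $4$'' and ``the sets of each index form a chain'' can work. Take the ground set $[k]\times\{0,\ldots,n\}$ and the sets $\{(i,a): q[i]\le a\}$, where $q$ ranges over vectors with at most four nonzero entries in $\{1,\ldots,n\}$. This family has VC-dimension $4$, since an empty trace on five elements with distinct indices needs five nonzero entries. Yet it has $\Theta(k^4n^4)$ members. So some genuinely graph-theoretic input from planarity and connectivity, beyond the VC bound, is needed.

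Your fallback does not provide that input, for two reasons.
\begin{itemize}
\item Consecutive boundary vertices in a spanning-tree order of $H$ can be far apart. Take $H$ a spider with long legs whose tips form $\partial H$.
\item An Euler tour of a spanning tree of $H$ does make consecutive entries differ by at most $1$. But the VC bound you invoke concerns the sets $S_{i,k}$, i.e.\ absolute thresholds relative to the fixed reference $s_0$. Thresholds relative to the previous coordinate form a different set system, for which you have no bound. Bounded slope alone does not help either: products of $d$ pointwise-decreasing chains of bounded-slope ``bumps'' on disjoint blocks of length $L/d$ give threshold systems of VC-dimension $d$ with $\Theta(L^{2d})$ members.
\end{itemize}

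As written, you have proved only $\cO((|\partial H||V(H)|)^4)$. That weaker bound would give $|P_S|=\cO(|S|^8)$ in the proof of Theorem~\ref{th:planar} and would only worsen its polylogarithmic factor, but it is not the stated lemma.
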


Since we are dealing with arbitrary profiles (not necessarily binary), in what follows we will need to combine Lemma~\ref{lem:pattern-count} with the following convex hull trick:
\begin{mylemma}[see Lemma $4.2$ in~\cite{DucCW}]\label{lem:convex-hull-trick}
    Let $F$ be a set of $n$ linear functions $f_i : t \mapsto a_i \cdot t + b_i$, where $a_i,b_i \ge 0$. 
    Then after an $\cO(n\log{n})$-time pre-processing, for any $x \ge 0$ we can compute $\max\{a_i \cdot x +b_i : 1 \le i \le n \}$ in $\cO(\log{n})$ time.
\end{mylemma}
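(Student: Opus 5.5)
We plan to prove the statement by computing the \emph{upper envelope} $E(t) = \max\{a_i t + b_i : 1 \le i \le n\}$ explicitly, as a convex piecewise-linear function. Each $f_i$ is affine, so $E$ is convex. Its graph therefore decomposes into at most $n$ pieces, each carried by a single line. Moreover, the slopes of these pieces strictly increase from left to right. The preprocessing stores the lines carrying these pieces, in order, together with their breakpoints. A query at $x \ge 0$ is then answered by a binary search over the breakpoints. This search returns the piece containing $x$, which is evaluated in $\cO(1)$ time, for $\cO(\log n)$ time overall.

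\textbf{Preprocessing.} We carry it out in three steps.
\begin{enumerate}
\item We sort the functions by slope $a_i$, in $\cO(n\log n)$ time. Among functions with equal slope, we keep only one with maximum intercept $b_i$; the others are pointwise dominated.
\item We scan the remaining lines by increasing slope and maintain a stack containing the envelope of the lines scanned so far. When a new line $g$ arrives, let $h$ be the top of the stack and $h'$ the line just below it. We pop $h$ while the intersection abscissa of $g$ and $h'$ is at most the intersection abscissa of $h$ and $h'$. In that case $h$ never strictly attains the maximum, so it can be discarded. We then push $g$.
\item We record the consecutive intersection abscissae $x_1 < x_2 < \dots < x_{k-1}$ of the surviving lines $g_1,\dots,g_k$. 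Each line is pushed and popped at most once, so the scan takes $\cO(n)$ time after sorting.
\end{enumerate}
Equivalently, by point-line duality this is Andrew's monotone chain computing the upper convex hull of the points $(a_i,b_i)$. The nonnegativity assumption $a_i,b_i \ge 0$ plays no essential role. We may optionally discard lines whose only piece lies in $t < 0$, but this is not needed for correctness.

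\textbf{Query.} Given $x \ge 0$, we binary-search for the index $p$ with $x_{p-1} \le x \le x_p$, using the conventions $x_0 = -\infty$ and $x_k = +\infty$. We then return $g_p(x)$. Correctness follows from the invariant that $g_p$ is maximal among all lines on $[x_{p-1},x_p]$.

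\textbf{Main obstacle.} The only delicate step is proving the stack invariant. After processing lines with slopes up to $a$, the stack must list exactly the lines attaining the envelope on a nondegenerate interval, ordered by slope. In addition, the breakpoints must be strictly increasing. We will establish this by induction on the scan. The key observation is that a line with larger slope dominates every previously scanned line to the right of its intersection with it. Hence the popping criterion, which compares intersection abscissae via a cross-multiplication to avoid division, removes exactly the lines whose interval becomes empty. Ties are handled by the non-strict inequality in that comparison and by the slope deduplication of the first step.
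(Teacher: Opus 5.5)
Your proof is correct. The paper gives no proof of its own here and only cites \cite{DucCW}. Your argument is the standard convex hull trick behind that citation: precompute the upper envelope of the lines (dually, an upper convex hull) in $\cO(n\log{n})$ time, then answer each query by binary search over the sorted breakpoints. Your popping test is also sound: the abscissa where $g$ meets $h'$ always lies between the abscissae where $h$ meets $h'$ and where $g$ meets $h$, so every popped line is pointwise dominated by the maximum of $h'$ and $g$.
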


%We are now ready to prove the main result of this part:
\begin{proof}[Proof of Theorem~\ref{th:planar}]
    Assume for what follows $n \ge 2^{\Theta(\delta)}$ (otherwise, we solve the \textsc{Weighted Center} problem by brute-force in $\cO(n^2) = 2^{\cO(\delta)}$ time).
    We apply Lemma~\ref{lem:planar-sep} to compute an isometric separator $S$ in $\cO(\delta^2n\log^4{n})$ time.
    Furthermore, as $|S| = \cO(\delta^2\log{n})$, we can afford to compute $r_\pi(s)$, for each $s \in S$; it takes $\cO(\delta^2n\log{n})$ time in total.
    For the other vertices, let us consider the connected components $C_1,C_2,\ldots,C_q$, $q \ge 2$, of $G \setminus S$.
    For each $i$ such that $1 \le i \le q$, let $\pi_i$ denote the restriction of $\pi$ to $C_i \cup S$.
    Then, for any vertex $v$ of $C_i$, $r_\pi(v) = \max\{ r_{\pi_i}(v) \}\cup\{ \pi(u)d(v,u) : u \notin C_i \cup S \}$.
    As $C_i \cup S$ induces an isometric subgraph $G_i$ of $G$, the graph $G_i$ is also planar and $\delta$-hyperbolic.
    In particular, all the values $r_{\pi_i}(v)$, for $v \in C_i$, can be computed by applying our algorithm recursively on $G_i$.

    For each $i$ such that $1 \le i \le q$, for each $v \in C_i$, define $\lambda(v) = \max\{\pi(u)d(v,u) : u \notin C_i \cup S\}$.
    By the above, $r_\pi(v) = \max\{r_{\pi_i}(v),\lambda(v)\}$.
    So, we are left to compute the values $\lambda(v)$, for each $v \in V \setminus S$. 
    For that, for each $j$ with $1 \le j \le \lceil\log{q}\rceil$, let $X_j,Y_j$ be such that:
    \begin{itemize}
        \item $X_j$ contains all components $C_i$ such that the $j^{th}$ bit of $i$ is equal to $0$;
        \item $Y_j$ contains all components $C_i$ such that the $j^{th}$ bit of $i$ is equal to $1$.
    \end{itemize}
    Note that for each pair of vertices $u,v$ that are in different components of $G \setminus S$, there is at least one $j$ such that either $u \in X_j, v \in Y_j$ or $u \in Y_j, v \in X_j$.
    Therefore, to compute all values $\lambda(v)$, it suffices to compute the following values $\lambda_j(v)$ for any $j$: if $v \in X_j$, then $\lambda_j(v) = \max\{ \pi(u)d(v,u) : u \in Y_j \}$; and if $v \in Y_j$, then $\lambda_j(v) = \max\{ \pi(u)d(v,u) : u \in X_j \}$.
    By symmetry, it suffices to present an algorithm for computing all values $\lambda_j(v)$, for $v \in X_j$.
      
    For that, let $\partial S = s_0,s_1,\ldots,s_{|\partial S|-1}$.
    We compute all profiles $\texttt{p}_u$, for $u \in Y_j$, with respect to $\partial S$.
    This can be done in $\cO(|\partial S|n) = \cO(\delta^2n\log{n})$ by performing a BFS for each $s_i \in \partial S$.
    Denote by $P_S$ the set of all distinct profiles with respect to $\partial S$. Note that $P_S$ can be computed in $\cO(|\partial S|n\log{n}) = \cO(\delta^2n\log^2{n})$ time, by sorting.
    Furthermore, by Lemma~\ref{lem:pattern-count}, it holds $|P_S| = \cO(|\partial S||S|^4) = \cO(|S|^5) = \cO(\delta^{10}\log^5{n})$.
    For each $\texttt{p} \in P_S$, let $Y_{j,\texttt{p}}$ contain all the vertices $u \in Y_j$ such that $\texttt{p}_u = \texttt{p}$.
    For any $v \in X_j$, let $\lambda_{j,\texttt{p}}(v) = \max\{\pi(u)d(v,u) : u \in Y_{j,\texttt{p}}\}$.
    To compute $\lambda_j(v)$, it suffices to compute $\lambda_{j,\texttt{p}}$ for each $\texttt{p} \in P_S$.

    So, let $\texttt{p} \in P_S$ be fixed.
    For each $k$ with $0 \le k \le |\partial S|-1$, we define the family of functions $F_{j,\texttt{p},k} = \{ t \mapsto \pi(u) \cdot t + \pi(u)d(s_k,u) : u \in Y_{j,\texttt{p}} \}$, for which we apply Lemma~\ref{lem:convex-hull-trick}. It takes $\cO(|\partial S|n\log{n}) = \cO(\delta^2n\log^2{n})$ time in total.
    Then, we consider each $v \in X_j$ sequentially.
    Let $k$ be such that $d(v,s_k)+\texttt{p}[k]$ is minimum.
    We claim that for any $u \in Y_{j,\texttt{p}}$, $s_k \in I(v,u)$.
    To prove this claim, as any shortest $(v,u)$-path contains a vertex of $\partial S$, it suffices to prove that $d(v,s_k)+d(s_k,u)$ is minimum.
    That is indeed the case because for any $k'$ with $0 \le k' \le |\partial S|-1$,
    %\begin{flalign*}
        $d(v,s_k)+d(s_k,u) - d(s_0,u) = d(v,s_k) + \texttt{p}[k] \le d(v,s_{k'}) + \texttt{p}[k'] =  d(v,s_{k'})+d(s_{k'},u) - d(s_0,u)$.
    %\end{flalign*}
    %The claim follows from adding $d(s_0,u)$ on both sides of the inequality.
    Consequently, by Lemma~\ref{lem:convex-hull-trick} applied for $F_{j,\texttt{p},k}$ and $x = d(v,s_k)$, we can compute $\lambda_{j,\texttt{p}}(v)$ in $\cO(\log{n})$ time.

    Overall, the values $\lambda_{j,\texttt{p}}(v)$, for $v \in X_j$, can be computed in $\cO(\delta^2n\log^2{n})$ time.
    It implies that the values $\lambda_j(v)$, for $v \in X_j$, can be computed in $\cO(|P_S|\delta^2n\log^2{n}) = \cO(\delta^{12}n\log^7{n})$ time.
    Hence, the values $\lambda(v)$, for $v \in V \setminus S$, can be computed in $\cO(\delta^{12}n\log^8{n})$ time.

    \smallskip
    Finally, by Lemma~\ref{lem:planar-sep}, the balance of $S$ is at least $1/(2^{\cO(\delta)}\log{n})$.
    Furthermore, as we assume $n \ge 2^{\Theta(\delta)}$, the inclusion of $S$ in all the $G_i$'s has a marginal impact on the size of these subgraphs.
    Therefore, the recursion depth of our algorithm is at most $2^{\cO(\delta)}\log^2{n}$.
\end{proof}

\section{Perspectives}\label{perspectives}

The main contribution of the paper is an algorithm for computing on any $\delta$-hyperbolic graph with $m$ edges, for any profile $\pi$, some vertex at distance $\cO(\delta)$ to the center in $\Tilde{\cO}(\delta m)$ time (Theorem~\ref{thm:hyp-strong}).
In doing so, we answer one of the questions left open in~\cite{Gpunimodal-ecc} (Question $10.1$).
Our dependency on $\delta$ is linear, but it is quite large.
Improving this constant could be an interesting challenge.

Based on this main result, a framework for solving the \textsc{Weighted Center} problem exactly on some classes of hyperbolic graphs is proposed. 
It implies optimal, or almost optimal, algorithms for the \textsc{Weighted Center} problem on chordal graphs, chordal bipartite graphs, dually chordal graphs and distance-hereditary graphs.
%The latter results provide a partial answer to yet another question left open in~\cite{Gpunimodal-ecc} (Question $10.2$).
However, the inclusion to our framework of AT-free graphs, graphs of bounded asteroidal number, and other classic examples of hyperbolic graphs, remains to be done.

Finally, we present parameterized almost linear-time algorithms for the \textsc{Weighted Center} problem on planar graphs and several other classes of graphs, where the parameter is the hyperbolicity.
We leave for future work whether similar hyperbolicity-based parameterizations can be also achieved for some other distance-related problems on these classes of graphs.

\subsection*{Acknowledgements} This work has been supported in part by the PHC Br\^{a}ncu\c{s}i LoSST.

\bibliographystyle{amsplain}
\bibliography{biblio.bib}

@inproceedings{AVW16,
    author = {A.~Abboud and V.~Vassilevska Williams and J.~Wang},
    title = {Approximation and fixed parameter subquadratic algorithms for radius and diameter in sparse graphs},
    booktitle = {Proceedings of the ACM-SIAM Symposium on Discrete Algorithms (SODA)},
    pages = {377–-391},
    year = {2016}
}

@article{MedianOfMedians,
    author = {L. B\'en\'eteau and J. Chalopin and V. Chepoi and Y. Vax\`es},
    title = {Medians in median graphs and their cube complexes in linear time},
    journal = {Journal of Computer and System Sciences},
    volume = {126},
    pages = {80--105},
    year = {2022}
}

@article{WP,
    author = {L. B\'en\'eteau and J. Chalopin and V. Chepoi and Y. Vax\`es},
    title = {Graphs with {$G^p$}-connected medians},
    journal = {Mathematical Programming, Ser B},
    volume = {203},
    pages = {369--420},
    year = {2024}
}

@inproceedings{BDHMedian,
    author = {P. Berg\'e and G. Ducoffe and M. Habib},
    title = {Quasilinear-time eccentricities computation, and more, on median graphs},
    booktitle = {Proceedings of the ACM-SIAM Symposium on Discrete Algorithms (SODA)},
    pages = {1679--1704},
    year = {2025}
}

@inproceedings{MedianOfMediansArray,
    author = {M. Blum and R.W. Floyd and V. Pratt and R.L. Rivest and R.E. Tarjan},
    title = {Linear time bounds for median computations},
    booktitle = {Proceedings of the ACM Symposium on Theory of Computing (STOC)},
    pages = {119-124},
    year = {1972}
}

@article{BrChDrDuallyChordal,
    author = {A. Brandst\"{a}dt and V. Chepoi and F.F. Dragan},
    title = {The algorithmic use of hypertree structure and maximum neighbourhood orderings},
    journal = {Discrete Applied Mathematics},
    volume = {82},
    pages = {43--77},
    year = {1998}
}

@article{BrChDr,
    author = {A. Brandst\"{a}dt and V. Chepoi and F.F. Dragan},
    title = {{Distance Approximating Trees for Chordal and Dually Chordal Graphs}},
    journal = {Journal of Algorithms},
    volume = {30},
    pages = {166-184},
    year = {1999}
}

@article{HypChordal,
    author = {G. Brinkmann and J.H. Koolen and V. Moulton},
    title = {On the hyperbolicity of chordal graphs},
    journal = {Annals of Combinatorics},
    volume = {5},
    number = {1},
    pages = {61--69},
    year = {2001}
}

@BOOK{BoMu08,
	title={Graph theory},
	author={J.A. Bondy and U.S.R. Murty},
    publisher={Graduate Texts in Mathematics},
	year={2008}
}

@article{Cab,
    author = {S. Cabello},
    title = {Subquadratic algorithms for the diameter and the sum of pairwise distances in planar graphs},
    journal = {ACM Transactions on Algorithms},
    volume = {15},
    number = {2},
    articleno = {21},
    year = {2018}
}

@TECHREPORT{Gpunimodal-ecc,
    author = {J. Chalopin and V. Chepoi and F.F. Dragan and G. Ducoffe and Y. Vaxx\`es},
    title = {On {$G^p$}-unimodality of radius functions in graphs: structure and algorithms},
    institution = {arXiv},
    number = {2503.15011},
    year = {2025}
}

@inproceedings{ChanVCdim,
    author = {T.M. Chan and H.-C. Chang and J. Gao and S. Kisfaludi-Bak and H. Le and D.W. Zheng},
    title = {{Truly Subquadratic Time Algorithms for Diameter and Related Problems in Graphs of Bounded VC-dimension}},
    booktitle = {Proceedings of the IEEE Symposium on Foundations of Computer Science (FOCS)},
    pages = {2728 -- 2765},
    year = {2025}
}

@inproceedings{ChDrChordal,
    author = {V. Chepoi and F.F. Dragan},
    title = {A linear algorithm for computing a central vertex of a chordal graph},
    booktitle = {Proceedings of the European Symposium on Algorithms (ESA)},
    pages = {159--170},
    year = {1994}
}

@article{ChDr,
    author = {V. Chepoi and F.F. Dragan},
    title = {{A Note on Distance Approximating Trees in Graphs}},
    journal = {European Journal of Combinatorics},
    volume = {21},
    pages = {761-766},
    year = {2000}
}

@inproceedings{ChDrEsHaVa,
    author = {V. Chepoi and F.F. Dragan and B. Estellon and M. Habib and Y. Vax\`es},
    title = {Diameters, centers, and approximating trees of $\delta$-hyperbolic geodesic spaces and graphs},
    booktitle = {Proceedings of the Symposium on Computational Geometry (SoCG)},
    pages = {59–-68},
    year = {2008}
}

@article{ChDrNeRaVa,
    author = {V. Chepoi and F.F. Dragan and I. Newman and Y. Rabinovich and Y. Vax\`es},
    title = {Constant approximation algorithms for embedding graph metrics into trees and outerplanar graphs},
    journal = {Discretre \& Computational Geometry},
    volume = {47},
    pages = {187--214},
    year = {2012}
}

@inproceedings{ChEs,
    author = {V. Chepoi and B. Estellon},
    title = {Packing and covering $\delta$-hyperbolic spaces by balls},
    booktitle = {Proceedings of the International Workshops on Approximation Algorithms for Combinatorial Optimization Problems, and on Randomization and Computation (APPROX-RANDOM)},
    pages = {59--73},
    year = {2007}
}

@article{ChOs,
    author = {V. Chepoi and D. Osajda},
    title = {Dismantlability of weakly systolic complexes and applications},
    journal = {Transactions of the American Mathematical Society},
    volume = {367},
    number = {2},
    pages = {1247--1272},
    year = {2015}
}

@book{Cormen,
    author = {T.H. Cormen and C.E. Leiserson and R.L. Rivest and C. Stein},
    title = {Introduction to algorithms},
    publisher = {MIT press},
    year = {2022}
}

@article{PFPT,
    author = {D. Coudert and G. Ducoffe and A. Popa},
    title = {{Fully Polynomial FPT Algorithms for Some Classes of Bounded Clique-width Graphs}},
    journal = {ACM Transactions on Algorithms},
    volume = {15},
    number = {3},
    articleno = {33},
    year = {2019}
}

@article{HypAlg1,
    author = {B. Das Gupta and M. Karpinski and N. Mobasheri and F. Yahyanejad},
    title = {{Effect of Gromov-hyperbolicity parameter on cuts and expansions in graphs and some algorithmic implications}},
    journal = {Algorithmica},
    volume = {80},
    number = {2},
    pages = {772--800},
    year = {2018}
}

@article{BucketQueue,
    author = {R.B. Dial},
    title = {Algorithm 360: shortest-path forest with topological ordering [H]},
    journal = {Communications of the ACM},
    volume = {12},
    number = {11},
    pages = {632 -- 633},
    year = {1969}
}

@inproceedings{DrDH,
    author = {F.F. Dragan},
    title = {Dominating cliques in distance-hereditary graphs},
    booktitle = {Proceedings of the Scandinavian Workshop on Algorithm Theory (SWAT)},
    pages = {370--381},
    year = {1994}
}

@article{DrGu_hyp,
    author = {F.F. Dragan and H.M. Guarnera},
    title = {Eccentricity terrain of $\delta$-hyperbolic graphs},
    journal = {Journal of Computer and System Sciences},
    volume = {112},
    pages = {50--65},
    year = {2020}
}

@article{HellyHyp,
    author = {F.F. Dragan and G. Ducoffe and H.M. Guarnera},
    title = {{Fast deterministic algorithms for computing all eccentricities in (hyperbolic) Helly graphs}},
    journal = {Journal of Computer and System Sciences},
    volume = {199},
    articleno = {103606},
    year = {2025}
}

@article{dragan2026certificates,
  title={{Certificates in P and subquadratic-time computation of radius, diameter, and all eccentricities in graphs}},
  author={Dragan, Feodor and Ducoffe, Guillaume and Habib, Michel and Viennot, Laurent},
  journal={Algorithmica},
  volume={88},
  number={1},
  pages={13},
  year={2026},
  publisher={Springer}
}

@article{Dress,
    author = {A.W.M. Dress},
    title = {Trees, tight extensions of metric spaces, and the cohomological dimension of certain groups: a note on combinatorial properties of metric spaces},
    journal = {Advances in Mathematics},
    volume = {53},
    number = {3},
    pages = {321--402},
    year = {1984}
}

@inproceedings{AbsoluteRetract,
    author = {G. Ducoffe},
    title = {{Beyond Helly Graphs: The Diameter Problem on Absolute Retracts}},
    booktitle = {Proceedings of the International Workshop on Graph-Theoretic Concepts in Computer Science (WG)},
    pages = {321--335},
    year = {2021}
}

@article{DucCW,
    author = {G. Ducoffe},
    title = {{Optimal Centrality Computations Within Bounded Clique-Width Graphs}},
    journal = {Algorithmica},
    volume = {84},
    pages = {3192--3222},
    year = {2022}
}

@article{kHelly,
    author = {G. Ducoffe},
    title = {{Distance problems within Helly graphs and $k$-Helly graphs}},
    journal = {Theoretical Computer Science},
    volume = {946},
    articleno = {113690},
    year = {2023}
}

@article{Helly,
    author = {G. Ducoffe and F.F. Dragan},
    title = {{A story of diameter, radius and (almost) Helly property}},
    journal = {Networks},
    volume = {77},
    number = {3},
    pages = {435--453},
    year = {2021}
}

@article{Dyer,
    author = {M.E. Dyer},
    title = {{On A Multidimensional Search Technique And Its Application To The Euclidean One-Centre Problem}},
    journal = {SIAM Journal on Computing},
    volume = {15},
    number = {3},
    pages = {725--738},
    year = {1986}
}

@article{GolRotCW,
    author = {M.C. Golumbic and U. Rotics},
    title = {{On The Clique-Width Of Some Perfect Graph Classes}},
    journal = {International Journal of Foundations of Computer Science},
    volume = {11},
    number = {3},
    pages = {423--443},
    year = {2000}
}

@INBOOK{Gr,
	Author = {M. Gromov},
	Booktitle = {Essays in Group Theory},
	Title = {{Hyperbolic Groups}},
	Year = {1987}
 }

@article{InjHullGraphs,
    author = {H.M. Guarnera and F.F. Dragan and A. Leitert},
    title = {Injective hulls of various graph classes},
    journal = {Graphs and Combinatorics},
    volume = {38},
    number = {4},
    articleno = {112},
    year = {2022}
}

@article{KaHa,
    author = {O. Kariv and S.L. Hakimi},
    title = {{An Algorithmic Approach To Network Location Problems. I: The $p$-Centers}},
    journal = {SIAM Journal on Applied Mathematics},
    volume = {37},
    number = {3},
    pages = {513--538},
    year = {1979}
}

@inproceedings{KenSanNa,
    author = {W.S. Kennedy and I. Saniee and O. Narayan},
    title = {On the hyperbolicity of large-scale networks and its estimation},
    booktitle = {Proceedings of the IEEE International Conference on Big Data (BigData)},
    pages = {3344--3351},
    year = {2016}
}

@inproceedings{PlanarHyp,
    author = {S. Kisfaludi-Bak and J. Masa\u{r}\'ikova and E.J. van Leeuwen and B. Walczak and K. W\c{e}grzycki},
    title = {{Separator Theorem and Algorithms for Planar Hyperbolic Graphs}},
    booktitle = {Proceedings of the Symposium on Computational Geometry (SoCG)},
    articleno = {67},
    year = {2024}
}

@article{HypBridged,
    author = {J.H. Koolen and V. Moulton},
    title = {Hyperbolic bridged graphs},
    journal = {European Journal of Combinatorics},
    volume = {23},
    number = {6},
    pages = {683--699},
    year = {2002}
}

@inproceedings{HypAlg2,
    author = {R. Krauthgamer and J.R. Lee},
    title = {Algorithms on negatively curved spaces},
    booktitle = {Proceedings of the IEEE Symposium on Foundations of Computer Science (FOCS)},
    pages = {119–132},
    year = {2006}
}

@inproceedings{VCdim,
    author = {H. Le and C. Wulff-Nilsen},
    title = {{VC set systems in minor-free (di) graphs and applications}},
    booktitle = {Proceedings of the ACM-SIAM Symposium on Discrete Algorithms (SODA)},
    pages = {5332 -- 5360},
    year = {2024}
}

@article{Me2,
    author = {N. Megiddo},
    title = {{Linear time algorithms for linear programming in ${\mathbb R}^3$ and related problems}},
    journal = {SIAM Journal on Computing},
    volume = {12},
    pages = {759-–776},
    year = {1983}
}

@article{Pap,
    author = {P. Papasoglu},
    title = {Polynomial growth and asymptotic dimension},
    journal = {Israel Journal of Mathematics},
    volume = {255},
    pages = {985--1000},
    year = {2023}
}

@inproceedings{Pri,
    author = {E. Prisner},
    title = {Distance approximating spanning trees},
    booktitle = {Proceedings of the Symposium on Theoretical Aspects of Computer Science (STACS)},
    pages = {499--510},
    year = {1997}
}

@article{Sh,
    author = {V. Shchur},
    title = {{A quantitative version of the Morse lemma and quasi-isometries fixing the ideal boundary}},
    journal = {Journal of Functional Analysis},
    volume = {264},
    number = {3},
    pages = {815--836},
    year = {2013}
}

@article{SoCh1983,
    author = {V.P. Soltan and V.D. Chepoi},
    title = {Conditions for invariance of set diameters under $d$-convexification in a graph},
    journal = {Cybernetics},
    volume = {19},
    pages = {750--756},
    year = {1983},
    note = {Russian, English transl.}
}

@article{WuZh01,
    author = {Y. Wu and C. Zhang},
    title = {Hyperbolicity and chordality of a graph},
    journal = {The Electronic Journal of Combinatorics},
    volume = {18},
    number = {1},
    pages = {P43},
    year = {2001}
}

\end{document}